\newtheorem{theorem}{Theorem}[section]
\newtheorem{proposition}[theorem]{Proposition}
\newtheorem{lemma}[theorem]{Lemma}
\newtheorem{corollary}[theorem]{Corollary}
\theoremstyle{definition}
\newtheorem{definition}[theorem]{Definition}
\newtheorem{problem}[theorem]{Problem}
\theoremstyle{remark}
\newtheorem{remark}[theorem]{Remark}
\numberwithin{equation}{section}
\numberwithin{theorem}{section}
\def\N{\mathbb{N}}
\def\Z{\mathbb{Z}}
\def\R{\mathbb{R}}
\def\C{\mathbb{C}}
\def\la{\lambda}
\def\a{\alpha}
\def\b{\beta}
\def\g{\gamma}
\def\s{\sigma}
\def\d{\delta}
\def\e{\varepsilon}
\def\Oh{\mathcal{O}}
\def\det{\mathrm{det}}
\def\GL{\mathrm{GL}}
\def\SL{\mathrm{SL}}
\def\Hom{\mathrm{Hom}}
\def\Bil{\mathrm{Bil}}
\def\lan{\langle}
\def\ran{\rangle}
\def\ot{\otimes}
\def\Gc{G}
\def\Hc{H}
\def\Sc{K}
\def\Pc{\Pi}
\def\ti{\times}
\def\ul{\underline}
\def\mamU{\mathrm{M}_{\ul{U}}}
\def\mult{\mathrm{mult}}
\def\ula{{\ul{\lambda}}}
\def\umu{{\ul{\mu}}}
\def\ua{{\ul{\alpha}}}
\def\um{{\ul{m}}}
\def\diag{\mathrm{diag}}
\def\sgn{\mathrm{sgn}}
\def\mamun{{\lan n,n,n\ran}}
\def\br{{\underline{R}}}
\newcommand{\ol}[1]{{\overline{#1}}}
\def\res{{\mathrm{res}}}
\def\Id{{\mathrm{id}}}
\def\Sym{{\mathsf{Sym}}}
\def\stab{{\mathrm{stab}}}
\def\Par{{\mathsf{Par}}}
\def\meet{\curlywedge}
\def\im{\mathrm{im}}
\def\Tc{D}
\def\Aut{\mathrm{Aut}}
\def\Bil{\mathrm{Bil}}
\def\End{\mathrm{End}}
\def\ux{\underline{x}}
\newcommand{\reg}{\ensuremath{\mathsf{reg}}}
\newcommand{\len}{\mathsf{len}}
\def\HWV{{\mathcal H}}
\def\span{\mathrm{span}}
\def\Tau{\mathcal{T}}
\def\Sp{\mathcal{S}}
\def\Perm{\mathrm{Perm}}
\def\St{\mathrm{St}}
\title[Geometric Complexity Theory and Tensor Rank]
{\bf Geometric Complexity Theory and Tensor Rank\\[2ex]
  {\rm (Extended Abstract)}}
\author{Peter B\"urgisser and Christian Ikenmeyer}
\address{Institute of Mathematics, University of Paderborn, D-33098 Paderborn, Germany}
\thanks{
Institute of Mathematics, University of Paderborn, D-33098 Paderborn, Germany.\\ 
pbuerg@upb.de, ciken@math.upb.de\\
Supported by DFG-grant BU 1371/3-1.}
\email{pbuerg@upb.de, ciken@math.upb.de}
\keywords{geometric complexity theory, tensor rank, matrix multiplication, orbit closures,
multiplicities, Kronecker coefficients}
\subjclass[2000]{68Q17, 14L24, 20G05}
\date{\today}
\begin{document}
\maketitle

\begin{abstract}
Mulmuley and Sohoni~\cite{gct1,gct2} proposed to view the permanent versus determinant problem
as a specific orbit closure problem and to attack it by methods from geometric invariant
and representation theory.
We adopt these ideas towards the goal of showing lower bounds on the border rank of specific tensors,
in particular for matrix multiplication. We thus study specific orbit closure problems for
the group $G =\GL(W_1)\times\GL(W_2)\times\GL(W_3)$ acting on the tensor product
$W=W_1\otimes W_2\otimes W_3$ of complex finite dimensional vector spaces.
Let $G_s =\SL(W_1)\times\SL(W_2)\times\SL(W_3)$.
A key idea from~\cite{gct2} is that the irreducible $G_s$-representations occurring
in the coordinate ring of the $G$-orbit closure of a stable tensor~$w\in W$
are exactly those having a nonzero invariant with respect to
the stabilizer group of~$w$.

However, we prove that by considering $G_s$-representations,
as suggested in~\cite{gct1,gct2},
only trivial lower bounds on border rank can be shown.
It is thus necessary to study $G$-representa\-tions,
which leads to geometric extension problems that are beyond the scope of
the subgroup restriction problems emphasized in~\cite{gct1,gct2}.
We prove a very modest lower bound on the border rank
of matrix multiplication tensors using $G$-representations.
This shows at least that the
barrier for $G_s$-representations can be overcome. 
To advance, we suggest the coarser approach to replace the semigroup of representations
of a tensor by its moment polytope.
We prove first results towards determining the moment polytopes of
matrix multiplication and unit tensors.
\end{abstract}

\maketitle

\subsection*{Acknowledgments}

We thank
Matthias Christandl, Shrawan Kumar, Joseph Landsberg, Laurent Manivel, Ketan Mulmuley,
and  Jerzy Weyman for helpful discussions.

\newpage
\setcounter{page}{1}
\section{Introduction}

Mulmuley and Sohoni~\cite{gct1,gct2} proposed to view the permanent versus determinant problem
as a specific orbit closure problem and to attack it by methods from geometric invariant
and representation theory.
So far there has been little progress with this approach, mainly due to the difficulty
of the various arising mathematical problems~\cite{blmw:09}.
It is the goal of this paper to examine and to further develop the collection of ideas
from~\cite{gct1,gct2} at a problem simpler than the permanent versus determinant, but still of
considerable interest for complexity theory.

The complexity of matrix multiplication is captured by the rank of the matrix multiplication
tensor, a quantity that, despite intense research efforts, is little understood.
Strassen~\cite{stra:87} already observed that the closely related notion of border rank
has a natural formulation as a specific orbit closure problem.
Moreover, it is remarkable that the best known lower bound on the rank of matrix multiplication
(Bl\"aser~\cite{blas:99}) owes its existence to an explicit construction of an invariant polynomial
in the vanishing ideal of certain secant varieties (Strassen~\cite{stra:83}).

We carried out the program in~\cite{gct1,gct2} for the matrix multiplication versus unit tensor problem.
More specifically, we determined the stabilizers (symmetry groups) of the corresponding tensors
and verified that they are stable. Moreover, we found explicit representation theoretic characterizations
of the irreducible $G_s$-representations occurring in the coordinate rings of the $G$-orbit closures
of these tensors
in terms of nonvanishing of Kronecker coefficients and related quantities
($G$ and $G_s$ stand for a product of general linear groups and special linear groups, respectively,
cf.~\eqref{eq:defG} and \S\ref{se:stab}).

Unfortunately, it turns out that using $G_s$-representations, only trivial lower bounds on border
rank can be shown (Theorem~\ref{th:no-lower-bound})!
This insight is one of our main results.
It does not kill the overall program, but implies, unlike proposed in~\cite{gct1,gct2},
that the finer $G$-representations have to be considered instead.
As a consequence, the stability property is not enough to overcome the issue of orbit closures
and additional properties, beyond the subgroup restriction problems emphasized
in~\cite{gct1,gct2}, need to be studied.
What we have to face is the problem of extending (highest weight) regular functions from 
an orbit to its orbit closure. 
It turns out that this can be captured by a single integer~$k$ that seems of a 
geometric nature (cf.~Theorem~\ref{th:extend}). 
Currently we understand the extension problem very little.

In~\S\ref{se:comp} we prove, for the first time, a lower bound on the border rank
of matrix multiplication tensors using $G$-representations.
While this bound is still very modest, it shows at least that the
barrier for $G_s$-representations from Theorem~\ref{th:no-lower-bound}
can be overcome.

A natural approach to advance is to take a coarser, asymptotic viewpoint which replaces
the semigroup of representations by the their moment polytopes~\cite{brio:87,stra:05}.
We prove first results towards determining the moment polytopes of matrix multiplication and
unit tensors. This is based on the asymptotic properties of the Kronecker polytope
derived in~\cite{buci:09}.

Due to lack of space most of the proofs had to be omitted in this extended abstract.
Full proofs are provided in the appendix.

\section{Preliminaries}\label{se:prelim}

\subsection{Tensor rank}

Let $W_1,W_2,W_3$ be finite dimensional complex vector spa\-ces
of dimensions $m_1,m_2,m_3$, respectively.
We put $W:=W_1\ot W_2\ot W_3$ and call $\um=(m_1,m_2,m_3)$ the
{\em format} of $W$.
The elements $w\in W$ shall be called {\em tensors} and $w$ is
called indecomposable if it has the form $w=w_1\ot w_2\ot w_3$.
The {\em rank} $R(w)$ of $w\in W$ is defined as the minimum~$r\in\N$
such that $w$~can be written as a sum of $r$ indecomposable tensors.
We note that if $W_3=\C$, then $R(w)$ is just the rank of the linear
map $W_1^*\to W_2$ corresponding to~$w$.

Strassen proved~\cite{stra:73} that the minimum number of
nonscalar multiplications sufficient to evaluate the bilinear map
$W_1^*\ti W_2^*\to W_3$ corresponding to $w$
differs from $R(w)$ by at most a factor of two.
Determining the rank of specific tensors turns out to be
very difficult. Of particular interest are the tensors
$\mamun\in (\C^{n\times n})^*\ot(\C^{n \times n})^*\ot\C^{n\times n}$
describing the multiplication of two $n$ by $n$ matrices.
The best known asymptotic upper bound~\cite{cowi:90} states
$R(\mamun) = \Oh(n^{2.38})$,  while
the best known lower bound~\cite{blas:99} is
$R(\mamun) \ge \frac{5}{2}\, n^2 - 3n$.

The {\em border rank} $\br(w)$  of $w\in W$ is defined as the
smallest $r\in \N$ such that $w$ can be obtained as the limit of a sequence
$w_k\in W$ with $R(w_k)\le r$ for all $k$.
Clearly, $\br(w)\le R(w)$.
Border rank is a natural mathematical notion closely related to the rank
and it has played an important role in the discovery of fast algorithms for
matrix multiplication, see~\cite{ACT}.
We note that the best known lower bound on the border rank of matrix multiplication~\cite{lick:84}
states that $\br(\mamun) \ge \frac32\,n^2 + \frac12\, n -1$.

\subsection{Orbit closure problem}\label{se:ocp}

It is possible to rephrase the determination of $\br(w)$ as an orbit closure problem.
Consider the algebraic group
\begin{equation}\label{eq:defG}
G:=\GL(W_1)\ti\GL(W_2)\ti\GL(W_3)
\end{equation}
acting linearly on the vector space $W=W_1\ot W_2\ot W_3$ via
$(g_1,g_2,g_3)(w_1\ot w_2\ot w_3) := g_1(w_1)\ot g_2(w_2)\ot g_3(w_3)$.
We shall denote by $Gw$ the {\em orbit} of~$w$ and by
$\ol{Gw}$ its {\em orbit closure}.
We say that~$w$ is a {\em degeneration}
of~$v$, written $w\unlhd v$,
iff $\ol{Gw}\subseteq\ol{Gv}$.

Suppose now $m\le m_i$ and choose bases $e^i_1,\ldots,e^i_{m_i}$
in each of the spaces~$W_i$.
The tensor
\begin{equation}\label{eq:rthunit}
 \lan m\ran := \sum_{j=1}^r e^1_j\ot e^2_j\ot e^3_j ,
\end{equation}
is called an {\em $m$th unit tensor} in $W$.
Another choice of bases leads to a tensor in the same $G$-orbit as $\lan m\ran$,
so that the orbit of $m$th unit tensors in~$W$ is a basis independent notion.
It is easy to see that
$\br(w)\le m$ iff $w \unlhd \lan m\ran$, cf.~\cite{stra:87}.

\section{The GCT program for tensors}\label{se:GCTprog}

We summarize here in a concise way the stepping stones of the
GCT program ~\cite{gct1,gct2}, adapted to the tensor setting.
For this the review of the GCT program in~\cite{blmw:09} has been very helpful.

\subsection{Semigroups of representations}

For background on representation theory see~\cite{fuha:91,good-wall:09}.
We denote by $V_{\la_i}(\GL(W_i))$ the Schur-Weyl module labelled by
its highest weight $ \la_i\in\Z^{m_i}$
(with monotonically decreasing entries).
Those yield the rational irreducible $G$-modules
$$
 V_\ula(G):=
  V_{\la_1}(\GL(W_1))\ot V_{\la_2}(\GL(W_2))\ot V_{\la_3}(\GL(W_3)) ,
$$
whose highest weights $\ula$ are triples $\ula=(\la_1,\la_2,\la_3)$.
We denote by $V_\ula(G)^* = V_{\ula^*}(G)$ the module dual to~$V_\ula(G)$.
Moreover, $\Lambda^+_G$ shall denote the semigroup of highest weights of~$G$.
For a dimension format $\um$ we consider the subsemigroup
$\Lambda^+(\um):= \bigcup_{d\in\N} \Lambda^+_d(\um)$
of $\Lambda^+_G$, where
$$
 \Lambda^+_d(\um) := \big\{ \ula=(\la_1,\la_2,\la_3) \mid \la_i \vdash_{m_i} d,\ i=1,2,3 \big\} .
$$
Here we use the notation $\la_i\vdash_{m_i} d$ for a partition
$\la_i=(\la_i^1,\ldots,\la_i^{m_i})$ of $d$ into at most $m_i$ parts.

The action of $G$ on $W$ induces a linear action of $G$ on the ring $\Oh(W)$
of polynomial functions on~$W$ via
$(g f)(w) := f(g^{-1}w)$ for $g\in G$, $f\in\Oh(W)$, $w\in W$.
For any tensor $w\in W$, this defines a linear action of $G$
on the graded ring $\Oh(\ol{Gw})=\oplus_{d\in\N} \Oh(\ol{Gw})_d$
of regular functions on $\ol{Gw}$.
(By a regular function on $\overline{Gw}$
we understand a restriction of a polynomial function.)
Since $G$ is reductive, the $G$-module $\Oh(\ol{Gw})_d$ splits
into irreducible $G$-modules.

We define now the main objects of our investigations.

\begin{definition}\label{def:Ssemigroup}
The {\em semigroup of representations}~$S(w)$
of a tensor $w\in W$ is defined as
$$
S(w) := \big\{ \ula \mid \mbox{$V_\ula(G)^*$ occurs in $\Oh(\ol{Gw})$} \big\} .
$$
\end{definition}

It is known that $S(w)$ is a finitely generated subsemigroup of $\Lambda^+(\um)$,
cf.~\cite{brio:87}.
It is easy to see that if
$V_\ula(G)^*$ occurs in degree~$d$, i.e.,
as a submodule of $\Oh(\ol{Gw})_d$,
then $\ula\in \Lambda^+_d(\um)$.

The general strategy of geometric complexity theory~\cite{gct1}
is easily described.
Schur's lemma implies that for $w,v\in W$
\begin{equation}\label{eq:strat}
 \ol{Gw}\subseteq\ol{Gv} \Longrightarrow S(w)\subseteq S(v).
\end{equation}
In particular, exhibiting some $\ula\in S(w)\setminus S(v)$
proves that $\ol{Gw}$ is not contained in $\ol{G v}$.
If $v=\langle m\rangle$, this establishes the lower bound $\br (w) >m$.
We call such $\ula$ a {\em representation theoretic obstruction}.
We note that a more refined approach would be to study the multiplicity of
$V_\ula(G)^*$ in $\Oh(\ol{Gw})$, which can only decrease under degenerations.

\subsection{Kronecker semigroup}\label{se:kron}

Let $[\la_i]$ denote the irreducible representation of the symmetric group~$S_d$
on $d$ letters labelled by the partition $\la_i\vdash d$.
For $\ula\in\Lambda^+_d(\um)$ we define the {\em Kronecker coefficient} $g(\ula)$
as the dimension of the space of $S_d$-invariants
of the tensor product $[\la_1]\otimes  [\la_2]\otimes [\la_3]$.
It is a well known fact that
$g(\ula)  =\mult(V_\ula(G)^*,\Oh(W)_d)$, see~\cite{lama:04} or \eqref{eq:multOW}. 
The {\em Kronecker semigroup} of format~$\um$  is defined by
$K(\um) :=\bigcup_{d\in\N} \{ \ula\in \Lambda^+_d(\um) \mid g(\ula)\ne 0 \big\}$.

\begin{lemma}\label{pro:Sgen}
We have $S(w)\subseteq K(\um)$ with
equality holding for Zariski almost all $w\in W$.
\end{lemma}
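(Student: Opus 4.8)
The plan is to prove the two assertions separately: the inclusion $S(w)\subseteq K(\um)$ is formal, while the equality for generic $w$ reduces — via finite generation of $K(\um)$ — to a statement about a single highest weight vector for each of finitely many weights.

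First, for $S(w)\subseteq K(\um)$: since $\ol{Gw}$ is a closed, $G$-stable subvariety of $W$, restriction of polynomial functions is a degree-preserving, $G$-equivariant surjection $\Oh(W)_d\twoheadrightarrow\Oh(\ol{Gw})_d$ for every $d$, so every irreducible $G$-module occurring in $\Oh(\ol{Gw})_d$ already occurs in $\Oh(W)_d$. Thus, if $V_\ula(G)^*$ occurs in $\Oh(\ol{Gw})$ it occurs in some degree $d$, forcing $\ula\in\Lambda^+_d(\um)$ (as noted after Definition~\ref{def:Ssemigroup}); then $g(\ula)=\mult(V_\ula(G)^*,\Oh(W)_d)\ge\mult(V_\ula(G)^*,\Oh(\ol{Gw})_d)>0$, i.e.\ $\ula\in K(\um)$.

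Next, for the reverse inclusion on a dense open set, I would fix $\ula\in K(\um)$ and let $\nu$ be the highest weight of the irreducible $G$-module $V_\ula(G)^*$. Since $g(\ula)=\mult(V_\ula(G)^*,\Oh(W)_d)\ge1$, there is a nonzero homogeneous highest weight vector $f_\ula\in\Oh(W)_d$ of weight $\nu$; in particular $f_\ula$ is fixed by a maximal unipotent subgroup $U\subseteq G$. The key point is that $\ula\in S(w)$ for \emph{every} $w$ in the nonempty Zariski-open set $D(f_\ula):=\{w\in W\mid f_\ula(w)\ne0\}$: since $w\in\ol{Gw}$, the restriction $f_\ula|_{\ol{Gw}}$ is a nonzero $U$-fixed vector of weight $\nu$ in $\Oh(\ol{Gw})$, and for any rational $G$-module the dimension of the space of $U$-fixed vectors of a given dominant weight equals the multiplicity of the irreducible module with that highest weight; hence $V_\ula(G)^*$ occurs in $\Oh(\ol{Gw})$.

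To conclude, I would invoke that $K(\um)$ is a \emph{finitely generated} semigroup, which follows from the reasoning cited for $S(w)$ in~\cite{brio:87} applied to the affine $G$-variety $W$ itself (its coordinate ring is a domain). Choosing semigroup generators $\ula^{(1)},\dots,\ula^{(k)}$ of $K(\um)$ together with highest weight vectors $f_1,\dots,f_k$ as above, the set $V:=\bigcap_{i=1}^{k}D(f_i)=D(f_1\cdots f_k)$ is Zariski-open and nonempty, because $\Oh(W)$ is a domain and each $f_i\ne0$. For $w\in V$ every generator $\ula^{(i)}$ lies in $S(w)$, and since $S(w)$ is a semigroup, $K(\um)\subseteq S(w)$; with the first part this gives $S(w)=K(\um)$ on the dense open set $V$. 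I expect the only truly non-formal ingredient to be finite generation of $K(\um)$: without it, treating the infinitely many $\ula\in K(\um)$ one at a time would yield only a countable intersection of dense opens, hence equality for \emph{very general} rather than Zariski almost all $w$.
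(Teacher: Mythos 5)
Your proposal is correct and follows essentially the same route as the paper: the inclusion $S(w)\subseteq K(\um)$ via the $G$-equivariant surjection $\Oh(W)_d\twoheadrightarrow\Oh(\ol{Gw})_d$ and \eqref{eq:multOW}, and the generic equality by choosing highest weight vectors $F_1,\ldots,F_s$ in $\Oh(W)$ for the finitely many semigroup generators of $K(\um)$ and working on the nonempty open set $D(F_1\cdots F_s)$. Your closing remark — that finite generation of $K(\um)$ is precisely what upgrades the conclusion from ``very general'' to ``Zariski almost all'' — is the correct observation and is implicit in the paper's argument.
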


\subsection{Inheritance}

For applying the criterion
``$\br(w)\le r$ iff $w\unlhd \langle r\rangle$'' from~\S\ref{se:ocp},
we need to understand how $S(w)$ changes
when we embed $w\in W$ in a larger space. Fortunately, when properly
interpreted, nothing happens´.
Suppose that $W_i$ is a subspace of~$W'_i$,
put $m'_i := \dim W'_i$, and let
$$
 W':= W'_1\ot W'_2\ot W'_3,\quad
 G' := \GL(W'_1)\ti\GL(W'_2)\ti\GL(W'_3).
$$
Let $w'$ denote the image of $w\in W$ under the embedding
$W\hookrightarrow W'$.
A highest $G$-weight $\ula$ with nonnegative entries
can be interpreted as a highest $G'$-weight $\ula$
by appending zeros to the partitions $\la_i$.
We may thus interpret
$\Lambda^+(\um)$ as a subset of $\Lambda^+(\um')$.

\begin{proposition}\label{th:inheritance}
With the above conventions we have $S(w)=S(w')$.
\end{proposition}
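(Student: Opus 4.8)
The plan is to prove the two inclusions $S(w')\subseteq S(w)$ and $S(w)\subseteq S(w')$ separately, in both cases by tracking how regular functions and their $G$- (resp. $G'$-) isotypic components behave under the embedding $\iota\colon W\hookrightarrow W'$. The key structural fact I would exploit is that $\iota$ is $G$-equivariant, where $G$ acts on $W'$ through the block-diagonal inclusion $G\hookrightarrow G'$ coming from a choice of complements $W'_i = W_i\oplus U_i$. In particular $\iota(Gw)=Gw'$ and, taking closures in $W'$, $\iota(\ol{Gw})=\ol{Gw'}\cap\iota(W)$ and more relevantly $\ol{G'w'}$ is the $G'$-orbit closure we must understand. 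So the real content is to relate $\ol{G'w'}$ to $\ol{Gw}$.

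For the inclusion $S(w')\subseteq S(w)$: restriction of functions along $\ol{Gw}\hookrightarrow\ol{G'w'}$ (via $\iota$) gives a surjective $G$-equivariant graded algebra homomorphism $\Oh(\ol{G'w'})\twoheadrightarrow\Oh(\ol{Gw})$, because $\iota(\ol{Gw})$ is a closed $G$-stable subvariety of $\ol{G'w'}$ and polynomial functions extend. Surjectivity plus $G$-equivariance means every irreducible $G$-module appearing in $\Oh(\ol{Gw})$ already appears in $\Oh(\ol{G'w'})$. Now I need the branching step: decompose $V_{\ula'}(G')^*$ restricted to $G$. Since $G\subset G'$ is the standard Levi-type inclusion $\prod\GL(W_i)\subset\prod\GL(W'_i)$, the branching rule for $\GL(W_i)\subset\GL(W'_i)$ says $V_{\la_i}(\GL(W'_i))$ restricted to $\GL(W_i)$ contains $V_{\la_i}(\GL(W_i))$ with multiplicity one (the same $\la_i$, since we are restricting along the top-left block and the partition has at most $m_i$ parts), together with other constituents. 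Hence if $\ula'\in S(w')$ then the ``same'' $\ula$ (viewed in $\Lambda^+(\um)$) lies in $S(w)$, which is what is claimed. The one subtlety here is making sure $\ula'$, being a triple of partitions with at most $m_i$ parts, is already a triple with at most $m_i$ parts, so the inclusion $\Lambda^+(\um)\hookrightarrow\Lambda^+(\um')$ is exactly the identification used; this is forced because $\Oh(\ol{G'w'})_d$ only contains $\ula'$ with $\la'_i\vdash_{m'_i}d$, but the constituents coming from $w$ have at most $m_i$ rows.

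For the reverse inclusion $S(w)\subseteq S(w')$, the idea is to extend a highest weight function from $\ol{Gw}$ to $\ol{G'w'}$. Given $\ula\in S(w)$, pick a nonzero highest weight vector $f\in\Oh(\ol{Gw})_d$ of weight $\ula^*$. First extend $f$ to a polynomial function $\tilde f$ on all of $W'$ (possible since $\ol{Gw}\subset W\subset W'$ is closed in $W'$ and $\Oh(W')\to\Oh(\ol{Gw})$ is onto); then average $\tilde f$ over a maximal compact of $G'$ or, better, project $\tilde f$ onto the $\ula^*$-isotypic component for $G'$ and extract a highest weight vector $f'$. The crucial point to verify is that $f'$ does not vanish identically on $\ol{G'w'}$: for this I restrict $f'$ back to $\ol{Gw}$ and show its $G$-highest-weight-$\ula^*$ component recovers a nonzero multiple of $f$, using again that the branching $V_{\ula^*}(G')\downarrow G$ contains $V_{\ula^*}(G)$ with multiplicity exactly one, so the projection is injective on that isotypic line. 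This forces $f'|_{\ol{Gw}}\neq 0$, hence $f'|_{\ol{G'w'}}\neq 0$ (as $\ol{Gw}$ sits inside $\ol{G'w'}$), hence $V_{\ula^*}(G')$ occurs in $\Oh(\ol{G'w'})$, i.e. $\ula\in S(w')$. I expect the main obstacle to be exactly this non-vanishing/multiplicity-one argument: one must handle the interaction between the $G'$-isotypic projection and the restriction map carefully, since a priori $f'$ could have its $G$-content concentrated in constituents of $V_{\ula^*}(G')\downarrow G$ other than $V_{\ula^*}(G)$ itself — ruling that out is where the mult-one branching, together with a careful choice of $f'$ as a highest weight vector rather than an arbitrary isotypic element, does the work. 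An alternative, possibly cleaner, route for this direction is to use the $\GL(U_i)$-action: $\ol{G'w'}$ contains $\ol{(\prod\GL(W_i)\times\prod\GL(U_i))\,w'}$, and since $w'$ is literally $w$ under $\iota$ its stabilizer in the $\GL(U_i)$ factors is everything, so one can invoke the characterization of $S$ via invariants under the stabilizer (the key idea from~\cite{gct2} quoted in the abstract) to transfer $\ula$ from $w$ to $w'$; I would try the direct function-extension argument first and fall back to this if the bookkeeping gets unwieldy.
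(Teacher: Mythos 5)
Your two-inclusion plan is a reasonable starting point, but both halves have genuine gaps, and the paper's actual argument is structurally quite different.

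For $S(w')\subseteq S(w)$ your logic runs in the wrong direction. You correctly observe that the restriction $\Oh(\ol{G'w'})\twoheadrightarrow\Oh(\ol{Gw})$ is surjective and $G$-equivariant, and you cite the branching fact that $V_\ula(G')^*\!\downarrow_G$ contains $V_\ula(G)^*$ once. But these two facts together show that every $G$-constituent of $\Oh(\ol{Gw})$ appears in $\Oh(\ol{G'w'})|_G$ — that is information flowing \emph{from} $S(w)$, not \emph{to} it. From $\ula\in S(w')$ and the branching you only learn that $V_\ula(G)^*$ sits inside $\Oh(\ol{G'w'})|_G$; you have no control over whether that specific copy survives the surjection down to $\Oh(\ol{Gw})$ — a priori it could lie in the kernel $I_{\ol{G'w'}}(\ol{Gw})$. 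The concluding ``Hence if $\ula'\in S(w')$ then $\ula\in S(w)$'' is therefore a non sequitur. Likewise, your remark that $\la'_i$ having at most $m_i$ parts ``is forced'' is not an argument; this is a real claim (the paper proves it as Lemma~\ref{le:Jall} by a torus-weight computation showing that any degree-$d$ function not vanishing on $w$ must have its weight supported in the first $m_i$ coordinates).

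For $S(w)\subseteq S(w')$ your lift-and-project idea is close to correct in spirit, but the non-vanishing step is left vague, and as written ``project onto the $\ula^*$-isotypic component for $G'$ and extract a highest weight vector'' does not by itself control the restriction back to $\ol{Gw}$. The fact that actually makes this direction go through cleanly is that the restriction $\Oh(W')_d\to\Oh(W)_d$ is, component-by-component in the Schur--Weyl decomposition~\eqref{eq:decompOW}, the surjective map $\Id_{\Sigma_\ula}\ot S_\ula(\iota^*)$, and that $S_\ula(\iota^*)$ carries the $G'$-highest weight vector of $S_\ula(W'^*)$ to the $G$-highest weight vector of $S_\ula(W^*)$. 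You gesture at multiplicity-one branching instead, which is a weaker and less directly usable statement, since other $G'$-highest weights can also restrict to contain $V_\ula(G)^*$.

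The paper's proof avoids branching rules entirely and proves a stronger statement: the multiplicities agree, not merely the supports. The key geometric input is the existence of a projection $p\colon W'\to W$ with $p\iota=\Id$, from which Lemma~\ref{le:oc} deduces $p(\ol{G'w'})=\ol{Gw}$ and, crucially, $\res(I(\ol{G'w'}))=I(\ol{Gw})$. Feeding this into the commutative diagram~\eqref{eq:comdia} (which records that $\res_d$ acts as $\Id\ot S_\ula(\iota^*)$ on each isotypic block, surjective for $\ula\in\Lambda^+_d(\um)$ and zero otherwise) shows the multiplicity space $J_\ula(w)$ of $\ula$ inside the ideal is literally identical on both sides, giving~\eqref{eq:samemult} and hence both inclusions at once. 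This is what you would need to supply to close the first direction; your current tools (equivariant surjection plus branching) do not suffice. Finally, your fallback route via stabilizer invariants appeals to a characterization of $S^o(w)$ (Proposition~\ref{pro:pewe}), not of $S(w)$, and the two semigroups genuinely differ — this is precisely the extension problem studied later in the paper — so that alternative would also fail as stated.
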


This result can be shown similarly as in~\cite{gct2,lama:04,blmw:09}.

\subsection{Stabilizer and invariants}

As a first approach towards understanding $S(w)$
we may replace the orbit closure $\overline{Gw}$ by the
orbit $Gw$
and focus on the representations occuring in the ring
$\Oh(Gw)$ of regular functions on~$Gw$.
(A regular function on $Gw$ is a function that
is locally rational, cf.~\cite[p.~15]{hart:77}.)
This leads to definition
of the {\em auxiliary semigroup of representations:}
$$
 S^o(w) := \big\{ \ula \in \Lambda^+_G \mid
  \mbox{$V_\ula(G)^*$ occurs in $\Oh(Gw)$} \big\} .
$$
$S^o(w)$ is finitely generated~\cite{brio:87} 
and clearly contains $S(w)$.

The {\em stabilizer} of $w$ is defined as
$H:=\stab(w):=\{g\in G \mid gw =w \}$.
Let $V_\ula(G)^{H}$ denote the space of $H$-invariants in $V_\ula(G)$.
The next characterization follows from the algebraic Peter-Weyl Theorem for~$G$
as in~\cite{blmw:09}.

\begin{proposition}\label{pro:pewe}
We have 
$S^o(w) =\big\{ \ula\in\Lambda^+_G \mid V_\ula(G)^H \ne 0 \big\}$.
\end{proposition}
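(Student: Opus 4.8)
The plan is to derive Proposition~\ref{pro:pewe} from the algebraic Peter--Weyl theorem applied to the reductive group $G$. Recall that this theorem gives a $G\times G$-equivariant decomposition of the coordinate ring of $G$ as
\[
 \Oh(G) \;\cong\; \bigoplus_{\ula\in\Lambda^+_G} V_\ula(G)^* \ot V_\ula(G),
\]
where the first factor carries the left translation action and the second the right translation action. First I would recall that, since $H=\stab(w)$ is a closed (though in general non-reductive) subgroup of $G$, the orbit map $G\to Gw$, $g\mapsto gw$, identifies $Gw$ with the homogeneous space $G/H$, and hence $\Oh(Gw)\cong \Oh(G/H)$. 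The regular functions on $G/H$ are exactly the functions on $G$ that are invariant under right translation by $H$; in other words $\Oh(G/H) = \Oh(G)^H$, where $H$ acts via right translation. This is where the hypothesis that $Gw$ be taken with its locally-rational regular functions is used: $\Oh(G/H)$ in the sense of homogeneous spaces agrees with this ring.

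Next I would take $H$-invariants (for the right translation action) in the Peter--Weyl decomposition. Since the left $G$-action and the right $H$-action commute, taking right $H$-invariants is a $G$-equivariant operation on the left factor, and it acts only on the second tensor factor $V_\ula(G)$ of each summand. Thus
\[
 \Oh(Gw) \;\cong\; \Oh(G)^H \;\cong\; \bigoplus_{\ula\in\Lambda^+_G} V_\ula(G)^* \ot V_\ula(G)^H
\]
as $G$-modules, where $G$ acts on the left factor. Reading off which duals occur: $V_\ula(G)^*$ occurs in $\Oh(Gw)$ precisely when $V_\ula(G)^H\ne 0$ (and then with multiplicity $\dim V_\ula(G)^H$). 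This is exactly the asserted description of $S^o(w)$. I would also note in passing the compatibility with the earlier remark that $S(w)\subseteq S^o(w)$, since $\Oh(\ol{Gw})\hookrightarrow\Oh(Gw)$ by restriction.

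The main obstacle, and the only point requiring care, is the identification $\Oh(Gw)\cong\Oh(G)^H$ together with the legitimacy of commuting the direct sum with the $H$-invariants functor. For the former one must check that the abstract orbit $G/H$ and the embedded orbit $Gw\subseteq W$ have the same ring of regular functions in the stated sense; this is standard for homogeneous spaces of affine algebraic groups, and the parenthetical definition of regular functions on $Gw$ in the text is chosen exactly so this holds. For the latter, one should either invoke that $\Oh(G)$ is a rational (locally finite) $G\times G$-module, so every element lies in a finite-dimensional submodule and taking $H$-invariants commutes with the decomposition, or equivalently argue that $(-)^H$ is left exact and the Peter--Weyl sum is a module direct sum of rational submodules. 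Everything else is a direct transcription of the Peter--Weyl theorem, exactly as carried out in~\cite{blmw:09}.
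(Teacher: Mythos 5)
Your proposal is correct and takes essentially the same route as the paper: identify $Gw$ with the homogeneous space $G/H$, observe $\Oh(Gw)\cong\Oh(G)^H$ (right $H$-invariants), invoke the algebraic Peter--Weyl theorem for $G$, and take $H$-invariants on the right tensor factor to read off which $V_\ula(G)^*$ occur. The paper's proof likewise passes through the universal property of the quotient $G\to G/H$ to establish $\Oh(G/H)\cong\Oh(G)^H$ as $G$-modules before applying Peter--Weyl, so there is no substantive difference.
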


\subsection{Stability}\label{se:stab}

Consider the subgroup $G_s :=\SL(W_1)\times\SL(W_2)\times\SL(W_3)$ of~$G$.

\begin{definition}\label{def:stable}
A tensor $w\in W$ is called {\em stable}, iff $G_sw$ is closed.
\end{definition}

Consider the residue class map
$\prod_{i=1}^3 \Z^{m_i} \to \prod_{i=1}^3 \Z^{m_i}/\Z \e_{m_i}$,
where $\e_{m_i}:=(1,\ldots,1)$.
When interpreting highest weights of $G_s$-modules appropriately,
this defines a surjective morphism
$\pi\colon\Lambda^+_G\to\Lambda^+_{G_s}$
of the semigroup of highest weights of $G$ and $G_s$, respectively.

We put $S_s(w):=\pi(S(w))$ and $S_s^o(w) := \pi(S^o(w))$.
These semigroups describe the irreducible $G_s$-modules occurring in $\Oh(\ol{Gw})$
and $\Oh(Gw)$, respectively.
However, when going over to $G_s$-modules, the information about the degree~$d$
in which the modules occur is lost.


\begin{proposition}\label{th:stabSs}
If $w$ is stable, then $S_s(w)=S^o_s(w)$.
\end{proposition}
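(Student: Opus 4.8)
The plan is to deduce the statement from the characterizations already assembled: $S_s(w)=\pi(S(w))$ and $S_s^o(w)=\pi(S^o(w))$ by definition, and $S(w)\subseteq S^o(w)$ always holds, so only the reverse inclusion $\pi(S^o(w))\subseteq\pi(S(w))$ needs the stability hypothesis. Concretely, I would fix $\ula\in S^o(w)$ and aim to produce $\umu\in S(w)$ with $\pi(\umu)=\pi(\ula)$; equivalently, $\umu$ differs from $\ula$ only by adding multiples of $\e_{m_i}$ to the three partitions, i.e. by twisting each tensor factor by a power of the determinant character. The point of stability is precisely that such a determinantal twist turns a regular function on the \emph{orbit} $Gw$ into one that extends regularly to the \emph{orbit closure} $\overline{Gw}$.

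First I would set up the comparison geometrically. Since $w$ is stable, $G_sw$ is closed in $W$, and I would argue that the boundary $\overline{Gw}\setminus Gw$ is a union of $G$-orbits all lying in a single fiber structure: $\overline{Gw}=\C^\times\!\cdot G_sw \cup\{0\}$ up to the action of the central torus, so the only thing separating $Gw$ from $\overline{Gw}$ is the cone point~$0$ (and possibly lower-dimensional scaled copies). More precisely, because $G=G_s\cdot Z$ with $Z$ the center acting by scalars on each factor, $\overline{Gw}$ is the affine cone over the projective orbit closure of $[w]$, and $Gw$ is that cone minus its vertex. Hence a regular (locally rational) function on $Gw$ is exactly a rational function on $\overline{Gw}$ that is regular away from $0$.

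Next, I would transfer this to representations. Given $\ula\in S^o(w)$, Proposition~\ref{pro:pewe} gives a nonzero $H$-invariant $v\in V_\ula(G)^H$, which by the algebraic Peter--Weyl description corresponds to a highest-weight vector inside $\Oh(Gw)$ of weight $\ula^*$. Such a function is a rational function on $\overline{Gw}$ with poles only along the vanishing locus of the relevant coordinate, i.e. only at the origin; multiplying by a suitable power $\ell$ of the degree-one grading (which, on the affine cone, corresponds to twisting all three Schur modules by the $\ell$-th power of the determinant, hence changing $\ula$ by $\ell(\e_{m_1},\e_{m_2},\e_{m_3})$ but fixing its image under~$\pi$) clears the pole and yields a genuine regular function on $\overline{Gw}$, still a highest-weight vector, of weight $\umu^*$ with $\pi(\umu)=\pi(\ula)$. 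Thus $\umu\in S(w)$ and $\pi(\ula)=\pi(\umu)\in\pi(S(w))=S_s(w)$, giving $S_s^o(w)\subseteq S_s(w)$; combined with the trivial inclusion this proves equality.

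The main obstacle is the middle step: justifying that for a \emph{stable} tensor the orbit closure $\overline{Gw}$ differs from $Gw$ only at the vertex, so that ``regular on $Gw$'' plus ``homogeneous of high enough degree'' forces ``regular on $\overline{Gw}$.'' This requires invoking the structure of $G$-orbit closures for the $\C^\times$-action combined with a closed $G_s$-orbit (a Kempf--Ness / Hilbert--Mumford type argument, as in~\cite{brio:87}), and then handling the algebra of functions carefully, since $\overline{Gw}$ may be singular at $0$ and one must pass through its normalization or argue directly with graded pieces. Everything else is bookkeeping with the map $\pi$ and the dictionary between highest-weight vectors in coordinate rings and $H$-invariants in Schur modules.
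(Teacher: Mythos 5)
The central geometric claim on which your argument rests is false. You assert that for a stable $w$ the boundary $\overline{Gw}\setminus Gw$ consists only of the cone vertex $\{0\}$ (``the only thing separating $Gw$ from $\overline{Gw}$ is the cone point~$0$''), and you use this to say a highest weight vector in $\Oh(Gw)$ is ``a rational function on $\overline{Gw}$ with poles only at the origin.'' That is not what stability gives you. Stability makes $G_sw$ closed, but $\overline{Gw}$ still contains a large boundary of genuinely degenerate tensors: indeed the paper's own Proposition~\ref{pro:codim1} shows that $\overline{Gw}\setminus Gw$, when nonempty, has pure codimension \emph{one} in $\overline{Gw}$ --- a hypersurface, not a point. (Concretely, already for $w=\langle 2\rangle$ in $\C^2\ot\C^2\ot\C^2$ the orbit is dense, so the boundary is a whole hypersurface.) So the pole set you need to clear is much bigger than you acknowledge, and the subsequent step --- ``multiplying by a suitable power of the degree-one grading clears the pole'' --- is both imprecise (the grading is not a function you can multiply by; presumably you mean $\det_w$, but $\det_w$ is itself \emph{not} regular on $\overline{Gw}$, cf.\ Theorem~\ref{th:extend}(3)) and, within the paper's logical structure, circular: the statement that $(\det_w)^k f$ extends to $\overline{Gw}$ for some $k$ is exactly Theorem~\ref{th:extend}(5), whose proof is obtained by \emph{tracing} the proof of Proposition~\ref{th:stabSs}, not the reverse.

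The paper's argument avoids all of this by never trying to extend $f$ across the boundary directly. Instead: take the highest weight vector $f\in\Oh(Gw)_d$; restrict it to the closed slice $G_sw\subseteq Gw$, where it remains a nonzero $G_s$-highest weight vector of weight $\pi(\ula)^*$ (nonvanishing because $Gw$ is the cone over $G_sw$); then use that the restriction morphism $\Oh(\overline{Gw})\to\Oh(G_sw)$ is \emph{surjective} precisely because $G_sw$ is closed (and $G_s$ is reductive), so the irreducible $G_s$-module $V_{\pi(\ula)}(G_s)^*$ must already occur in $\Oh(\overline{Gw})$; finally, any $G$-irreducible summand of $\Oh(\overline{Gw})$ restricting to that $G_s$-type has highest weight $\ula+k\e_\um$ for some $k\in\Z$. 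This is where stability actually enters, and it sidesteps any discussion of the boundary's size, normality, or pole orders. Your determinant-twisting intuition is genuinely related to what happens (Section~\ref{se:exten} develops it), but as a proof of this proposition it is incomplete as written and runs the wrong direction relative to the paper's dependencies.
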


\begin{proof}
Put $\e_\um  := (\e_{m_1},\e_{m_2},\e_{m_3})$.
The assertion is equivalent to the statement
\begin{equation}\label{eq:SS0}
\forall \ula \in S^o(w)\ \exists k\in\Z\quad \ula + k\e_\um \in S(w) .
\end{equation}
Suppose that $\ula \in S^o(w)$. Then
$V_\ula(G)^*$ occurs in $\Oh(Gw)_d$ for some $d\in\Z$.
Let $f\in \Oh(Gw)_d$ be a highest weight vector of $V_\ula(G)^*$.
The restriction $\tilde{f}$ of~$f$ to $G_sw$ does not vanish since
$Gw$ is the cone generated by $G_sw$.
So $\tilde{f}$ is a highest weight vector and
$V_{\pi(\ula)}(G_s)^*$ occurs in $\Oh(G_s w)$.

The $G_s$-equivariant restriction morphism
$\Oh(\overline{Gw})\to \Oh(G_s w)$
is surjective since $G_sw$ is assumed to be closed.
It follows that $\Oh(\overline{Gw})$ contains
an irreducible module $V_{\pi(\ula)}(G_s)^*$.
This means that $V_{\ula + k\e_\um}(G)^*$
occurs in $\Oh(\overline{Gw})$ for some $k\in\Z$.
\end{proof}

Combining this with Proposition~\ref{pro:pewe},
we obtain a characterization of $S_s(w)$ for stable tensors~$w$,
which only involves the stabilizer~$H$ of $w$.
The problem is reduced to the question of which
$V_\ula(G)$ contain nonzero $H$-invariants.

\section{Unit tensors}\label{se:unit}

\subsection{Stabilizer and stability}

Suppose $W_i=\C^m$, $G_m:=\GL_m\ti\GL_m\ti\GL_m$,
and recall the definition of the $m$th unit tensor $\lan m\ran$
from~\eqref{eq:rthunit}.
Let $P_\pi\in\GL_m$ denote the permutation matrix corresponding to $\pi\in S_m$.

\begin{proposition}\label{pro:stabut}
The stabilizer $H_m$ of $\lan m\ran$ is the semidirect product of the normal divisor
$$
 \Tc_m:=\{(\diag(a),\diag(b),\diag(c)) \mid \forall i\ a_ib_ic_i=1\},
$$
and the symmetric group $S_m$ diagonally embedded in $G_m$ via
$\pi\mapsto (P_\pi,P_\pi,P_\pi)$.
\end{proposition}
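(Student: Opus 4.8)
The plan is to show two containments: every element of the claimed group stabilizes $\lan m\ran$, and every stabilizing element lies in the claimed group. The first direction is a routine verification. For $(\diag(a),\diag(b),\diag(c))$ with $a_ib_ic_i=1$ we compute
$$
(\diag(a),\diag(b),\diag(c))\Big(\sum_j e^1_j\ot e^2_j\ot e^3_j\Big)
 =\sum_j a_jb_jc_j\, e^1_j\ot e^2_j\ot e^3_j =\sum_j e^1_j\ot e^2_j\ot e^3_j,
$$
so $\Tc_m\subseteq H_m$; and $(P_\pi,P_\pi,P_\pi)$ merely permutes the summands, so the diagonally embedded $S_m$ lies in $H_m$ too. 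Since $\Tc_m$ is visibly normal in $H_m$ (it is the intersection of $H_m$ with the diagonal torus of $G_m$, and conjugation by $(P_\pi,P_\pi,P_\pi)$ permutes the diagonal entries), the subgroup of $G_m$ generated by $\Tc_m$ and this $S_m$ is exactly the semidirect product $\Tc_m\rtimes S_m$; one checks the intersection is trivial. Hence $\Tc_m\rtimes S_m\subseteq H_m$.

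For the reverse inclusion, suppose $(g_1,g_2,g_3)\in G_m$ fixes $\lan m\ran$. The main idea is to exploit that $\lan m\ran$, read as a trilinear form or via its "slices", forces a monomial-matrix (generalized permutation) structure. Concretely, I would view $w:=\lan m\ran$ through the linear map $\phi_w\colon W_3^*\to W_1\ot W_2$ sending $\xi\mapsto \sum_j \xi(e^3_j)\, e^1_j\ot e^2_j$; its image is the span of the $e^1_j\ot e^2_j$, i.e.\ a space of "diagonal" matrices in $W_1\ot W_2\cong \C^{m\times m}$ of dimension $m$. The condition $(g_1,g_2,g_3)w=w$ says that $g_1\ot g_2$ maps this space of diagonal matrices to itself (compatibly with the $g_3$-action on the index set). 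Now the set of rank-one elements in the span of $\{e^1_j\ot e^2_j\}$ is exactly $\{\C\, e^1_j\ot e^2_j : j=1,\dots,m\}$ — finitely many lines — and $g_1\ot g_2$ permutes rank-one lines; hence it permutes these $m$ lines, yielding a permutation $\pi\in S_m$ with $g_1 e^1_j\ot g_2 e^2_j\in \C\, e^1_{\pi(j)}\ot e^2_{\pi(j)}$. Therefore $g_1 e^1_j = a_j e^1_{\pi(j)}$ and $g_2 e^2_j = b_j e^2_{\pi(j)}$ for scalars $a_j,b_j$, i.e.\ $g_1=P_\pi\diag(a)$-type and $g_2$ likewise (monomial matrices with the same underlying permutation). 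The analogous argument applied to the other two "slice" maps gives that $g_3$ is also monomial with the same permutation $\pi$. Multiplying out, $(g_1,g_2,g_3)w=\sum_j a_jb_jc_j\, e^1_{\pi(j)}\ot e^2_{\pi(j)}\ot e^3_{\pi(j)}$, and comparing with $w$ forces $a_jb_jc_j=1$ for all $j$. Writing $(g_1,g_2,g_3)=(P_\pi,P_\pi,P_\pi)\cdot(\diag(a'),\diag(b'),\diag(c'))$ for suitable diagonal entries $a',b',c'$ (absorbing the permutation) exhibits the element as a product of the $S_m$-part and a $\Tc_m$-part, completing the inclusion $H_m\subseteq\Tc_m\rtimes S_m$.

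The step I expect to be the main obstacle is the rigidity argument that $g_1\ot g_2$ acting on the span of $\{e^1_j\ot e^2_j\}$ must permute the coordinate lines. One has to argue cleanly that the decomposable (rank-one) elements of this span are precisely scalar multiples of the $e^1_j\ot e^2_j$ — this is where the specific structure of $\lan m\ran$ (as opposed to a general tensor) enters — and that an invertible $g_1\ot g_2$ preserving the span must preserve its rank stratification, hence permute these finitely many lines. A careful version of this requires observing that a nontrivial linear combination $\sum_j c_j e^1_j\ot e^2_j$ has rank equal to $\#\{j: c_j\ne 0\}$, so rank one forces exactly one nonzero coefficient; then invertibility of $g_1,g_2$ gives that the induced map on the $m$ coordinate lines is a bijection. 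The bookkeeping to then transfer the \emph{same} permutation $\pi$ across all three factors (rather than three a priori independent permutations) is the other point needing care, and is handled by noting that all three slice decompositions index their rank-one lines by the same set $\{1,\dots,m\}$ via the common summation in \eqref{eq:rthunit}.
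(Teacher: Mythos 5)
Your proof is correct, but it takes a genuinely different route from the paper's. The paper observes that $\lan m\ran$ is the structural tensor of the associative algebra $A=\C^m$ and invokes a general lemma (going back to de~Groote) describing $\stab(w_A)$ for any associative algebra $A$ in terms of $A^\ti$, $\Aut A$, and left/right multiplications; it then only has to compute $\Aut(\C^m)=S_m$. Your argument instead works directly with the tensor: you show that the span of $\{e^1_j\ot e^2_j\}$ is preserved by $g_1\ot g_2$, and since the rank of $\sum_j c_j\,e^1_j\ot e^2_j$ equals the number of nonzero $c_j$, the rank-one locus of this span consists of exactly the $m$ coordinate lines, which an invertible $g_1\ot g_2$ must permute. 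This forces $g_1,g_2$ to be monomial with a common permutation $\pi$. Your handling of why $g_3$ inherits the \emph{same} $\pi$ is slightly loose (``indexed by the same set''); a cleaner finish is to note that once $g_1=P_\pi\diag(a)$ and $g_2=P_\pi\diag(b)$ are fixed, the identity $(g_1,g_2,g_3)w=w$ directly forces $g_3 e^3_j=(a_jb_j)^{-1}e^3_{\pi(j)}$, so one application of the slice argument already suffices (alternatively, a monomial matrix determines its underlying permutation uniquely, so the permutations arising from different slices must coincide on the shared factor). As for what each approach buys: your argument is elementary and self-contained, using only the rank stratification of a space of diagonal matrices; the paper's route is heavier up front (Lemma~\ref{le:bilstab}, Lemma~\ref{le:deGroote}, Skolem--Noether in the matrix-multiplication case) but pays off because the same machinery is reused verbatim to compute the stabilizer of the matrix multiplication tensor in Proposition~\ref{th:stab-mamu}, where a bare-hands slice argument would be considerably messier.
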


We remark that $\lan m \ran$ is uniquely determined by its stabilizer up to a scalar.

\begin{proposition}\label{th:unit-stable}
The unit tensor $\lan m\ran$ is stable.
\end{proposition}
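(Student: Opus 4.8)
The plan is to verify stability directly from the definition via the Hilbert–Mumford criterion combined with an explicit analysis of the closure $\overline{G_s \lan m\ran}$, or, more cleanly, to exploit the large stabilizer computed in Proposition~\ref{pro:stabut}. Since $G_s \lan m\ran$ is closed iff it contains no other orbit in its closure, and since closed orbits in a $G_s$-module are exactly the orbits of minimal dimension among those whose closure they lie in, I would first try to pin down $\dim G_s \lan m\ran$ using $\dim G_s \lan m\ran = \dim G_s - \dim(\stab(w)\cap G_s)$. From Proposition~\ref{pro:stabut} the stabilizer $H_m$ is $\Tc_m \rtimes S_m$; intersecting with $G_s$ forces the three diagonal tori to have determinant one, cutting the $(2m)$-dimensional torus $\Tc_m$ down to dimension $2m-2$ (the conditions $a_ib_ic_i=1$ already give $2m$ free parameters among $3m$, and $\det$-one on each factor removes two more), while the finite part $S_m$ contributes nothing to dimension. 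So $\dim G_s\lan m\ran = 3(m^2-1) - (2m-2) = 3m^2 - 2m - 1$.

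Next I would invoke the Hilbert–Mumford criterion: $G_s w$ fails to be closed precisely when there is a one-parameter subgroup $\lambda\colon\C^\times\to G_s$ with $\lim_{t\to 0}\lambda(t)w$ existing but lying outside $G_s w$. A one-parameter subgroup of $G_s$ is, up to conjugation, given by three diagonal matrices $\mathrm{diag}(t^{a_1},\dots,t^{a_m})$, etc., with $\sum_j a^{(i)}_j = 0$ for $i=1,2,3$. Acting on $\lan m\ran = \sum_j e^1_j\otimes e^2_j\otimes e^3_j$ this scales the $j$-th summand by $t^{a^{(1)}_j + a^{(2)}_j + a^{(3)}_j}$. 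For the limit to exist we need $a^{(1)}_j + a^{(2)}_j + a^{(3)}_j \ge 0$ for all $j$; summing over $j$ and using the trace-zero conditions gives $\sum_j (a^{(1)}_j + a^{(2)}_j + a^{(3)}_j) = 0$, forcing $a^{(1)}_j + a^{(2)}_j + a^{(3)}_j = 0$ for every $j$. Hence $\lambda(t)\lan m\ran = \lan m\ran$ identically, so the limit is in the orbit. A general one-parameter subgroup of $G_s$ is a conjugate $g\lambda g^{-1}$ of a diagonal one; then $\lim_{t\to 0} g\lambda(t)g^{-1}\lan m\ran = g\lim_{t\to 0}\lambda(t)(g^{-1}\lan m\ran)$, and since $g^{-1}\lan m\ran$ is again a unit tensor (with respect to a new basis triple) the same computation applies, showing the limit stays in $G_s\lan m\ran$. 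By Hilbert–Mumford, $G_s\lan m\ran$ is closed.

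The main obstacle is the reduction of a \emph{general} destabilizing one-parameter subgroup to a diagonal one: one must argue that $g^{-1}\lan m\ran$ is (up to the $G$-action) still of unit-tensor shape, which is immediate from the basis-independence remark after \eqref{eq:rthunit}, but one should be careful that the relevant change of basis lies in $G_s$ or at least that the argument is insensitive to scalars — and indeed it is, because scaling $\lan m\ran$ does not change whether a limit exists or lands in the orbit. An alternative, perhaps slicker, route that avoids the Hilbert–Mumford bookkeeping is to appeal to the Kempf–Ness / Luna theory: a $G_s$-orbit is closed iff it contains a critical point of the norm function for a suitable $K_s$-invariant Hermitian norm on $W$, equivalently a point where the moment map vanishes. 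For $W_i = \C^m$ with the standard norm, $\lan m\ran$ has all its ``legs'' equal to the flat distribution $\tfrac1m(1,\dots,1)$ in each factor, so the moment map vanishes; hence $\lan m\ran$ is a critical point and its $G_s$-orbit is closed. I would present the Hilbert–Mumford argument as the main proof and mention the moment-map viewpoint as a remark, since the latter also sets up the moment-polytope discussion later in the paper.
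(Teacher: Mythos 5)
Your main Hilbert--Mumford argument has a genuine gap at the step that reduces a general one-parameter subgroup to a diagonal one. Writing a 1-PS of $G_s$ as $g\lambda g^{-1}$ with $\lambda$ diagonal and commuting $g$ past the limit converts the question into whether $\lim_{t\to 0}\lambda(t)\bigl(g^{-1}\langle m\rangle\bigr)$ lies in the orbit. But $g^{-1}\langle m\rangle$, while in the $G$-orbit of $\langle m\rangle$, is not written in coordinates compatible with the diagonal form of $\lambda$: if $g^{-1}\langle m\rangle=\sum_j f^1_j\otimes f^2_j\otimes f^3_j$ with $f^i_j=g_i^{-1}e^i_j$, then $\lambda(t)$ does not scale each summand by a single power of $t$, because the $f^i_j$ are not eigenvectors of $\lambda_i(t)$ unless $g$ is itself diagonal. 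So ``the same computation'' does not apply. In fact, if it did, you would have shown that for \emph{every} point $v=g^{-1}\langle m\rangle$ of the orbit and \emph{every} diagonal 1-PS the limit stays in the orbit --- which is simply a restatement of the claim to be proved, not a simpler special case. The obstacle you flag (whether $g\in G_s$ or only $g\in G$, sensitivity to scalars) is not the real difficulty; the real difficulty is that the 1-PS and the base point cannot be simultaneously diagonalized.

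This is exactly what the appeal to Kempf's refinement of the Hilbert--Mumford criterion is for, and it is how the paper proceeds. Because the torus part of $\stab(\langle m\rangle)$ intersected with $G_s$ is a reductive subgroup $R_s$ of $G_s$ contained in the stabilizer, and its centralizer in $G_s$ is the maximal torus $T_s$, Kempf's theorem (the paper's Theorem~\ref{th:stb-crit}) lets one restrict to 1-PS's with image in $T_s$ applied to the \emph{fixed} base point $\langle m\rangle$ --- which is precisely the easy diagonal computation you carried out. The large stabilizer of $\langle m\rangle$ is used to supply $R_s$, not to compute the orbit dimension; that dimension count in your first paragraph is never used and can be deleted.

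Your closing remark is, however, a correct alternative: by Kempf--Ness, the $G_s$-orbit of a vector is closed iff the orbit contains a zero of the moment map for the maximal compact $\SU(m)^3$. For the normalized tensor $\langle m\rangle/\|\langle m\rangle\|$ each reduced density matrix is $\tfrac1m I_m$, so all three $\mathfrak{su}(m)$-components of the moment map vanish, and the orbit is closed. If you promote that remark to the proof, the Hilbert--Mumford bookkeeping can be dropped entirely. The paper instead uses Kempf's refinement because the same framework also handles the stability of the matrix multiplication tensor in Proposition~\ref{th:mamu-stable}.
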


This follows from Kempf's refinement~\cite{kempf:78}
of the Hilbert-Mumford criterion.

\subsection{Representations}\label{se:rep-unit}

Let $\Par_m(d)$ denote the set of partitions of $d$ into at most $m$ parts.
The {\em dominance order} $\preceq$ on $\Par_m(d)$ is defined by
$\la \preceq \mu$ iff $\sum_{j=1}^k \la^j \le \sum_{j=1}^k \mu^j$ for all~$k$.
This defines a lattice, in particular two partitions $\la, \mu$
have a well defined meet $\la\meet\mu$, cf.~\cite{StanleyVol2}.
We call $\a\in\Par_m(d)$ {\em regular}
if its components are pairwise distinct.

\begin{lemma}\label{le:unique-min}
(1) The set of regular partitions in $\Par_m(d)$ has a unique
smallest element~$\bot_m(d)$.
(2) For any $\la\in\Par_m(d)$ we have
$\bot_{m+1}(d +km) \preceq (\la^1 + k,\ldots,\la^m +k, 0)$
for sufficiently large~$k$.
\end{lemma}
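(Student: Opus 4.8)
First I would pin down the element $\bot_m(d)$ explicitly. A regular partition of $d$ into at most $m$ parts is a strictly decreasing sequence $a_1 > a_2 > \cdots > a_m \ge 0$ of nonnegative integers summing to $d$. Writing $a_j = b_j + (m-j)$ with $b_1 \ge b_2 \ge \cdots \ge b_m \ge 0$, regularity is exactly the condition $b_j \ge b_{j+1}$, i.e.\ $b$ is an ordinary partition, and the constraint becomes $\sum_j b_j = d - \binom{m}{2}$. So the set of regular partitions of $d$ is nonempty iff $d \ge \binom{m}{2}$, and it is order-isomorphic (via the shift $b \mapsto a$, which clearly respects $\preceq$) to $\Par_m\!\big(d-\binom{m}{2}\big)$. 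The dominance order on $\Par_m(e)$ does \emph{not} have a global minimum in general, but once we restrict to partitions with \emph{at most $m$ parts} the ``flattest'' partition $(\lceil e/m\rceil,\ldots)$ — distributing $e$ as evenly as possible among $m$ slots — is the unique $\preceq$-least element of $\Par_m(e)$. (This is a standard fact: among sequences of length $m$ with fixed sum, the nearly-constant one minimizes every partial sum $\sum_{j\le k}(\cdot)$.) Transporting back through the shift gives the explicit formula for $\bot_m(d)$ and proves part~(1).

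For part~(2), I would argue directly with the explicit description. Fix $\la\in\Par_m(d)$ and let $\la[k] := (\la^1+k,\ldots,\la^m+k,0)\in\Par_{m+1}(d+km)$. Its first $m$ partial sums are $\sum_{j\le\ell}(\la^j+k) = \sigma_\ell + \ell k$ where $\sigma_\ell := \sum_{j\le\ell}\la^j$, and the $(m+1)$-st partial sum equals the total $d+km$. On the other side, by part~(1) (applied with $m+1$ in place of $m$ and $d+km$ in place of $d$) the partial sums of $\bot_{m+1}(d+km)$ are the partial sums of the nearly-constant partition of $d+km$ into $m+1$ parts, so the $\ell$-th partial sum is at most $\lceil \ell(d+km)/(m+1)\rceil \le \ell(d+km)/(m+1) + 1$. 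Thus it suffices to check, for each $\ell\in\{1,\ldots,m\}$, that
\begin{equation}\label{eq:domcheck}
 \frac{\ell(d+km)}{m+1} + 1 \ \le\ \sigma_\ell + \ell k ,
\end{equation}
since the case $\ell=m+1$ holds with equality. Rearranging~\eqref{eq:domcheck}, the coefficient of $k$ on the right minus that on the left is $\ell - \ell m/(m+1) = \ell/(m+1) > 0$ for $\ell\ge 1$, so the inequality holds for all $k$ large enough — uniformly in $\ell$ since there are only finitely many values $\ell=1,\ldots,m$ to satisfy. This is exactly the claimed statement.

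The only genuinely non-routine point is the claim used in part~(1) that the evenly-spread partition is $\preceq$-minimal in $\Par_m(e)$; I would justify it by a short exchange argument: given any $\mu\in\Par_m(e)$ that is not nearly-constant, there exist indices $p<q$ with $\mu^p - \mu^q \ge 2$, and replacing $(\mu^p,\mu^q)$ by $(\mu^p-1,\mu^q+1)$ (then re-sorting, which only decreases partial sums) yields a partition strictly $\prec \mu$; iterating terminates at the nearly-constant partition, which is therefore below every element, and uniqueness of a $\preceq$-least element is automatic. Everything else — translating regularity into an ordinary partition via the staircase shift, and the linear estimate in~\eqref{eq:domcheck} — is bookkeeping. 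The main thing to be careful about is the interaction between the dominance order and the re-sorting step, and the fact that appending a zero part (going from $m$ to $m+1$ parts) does not change any partial sum except by extending the list, which is why the $\ell=m+1$ case comes out as equality rather than an inequality that needs $k$ to be large.
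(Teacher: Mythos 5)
Your derivation of the explicit formula $\bot_m(d)=s'(m)+\square_m\bigl(d-\binom{m}{2}\bigr)$ (in the range $d\ge\binom m2$, which is all that matters here) via the staircase shift $a_j=b_j+(m-j)$ is the same idea as in the paper, and part~(1) is fine modulo one slip: re-sorting a sequence into decreasing order \emph{increases}, not decreases, its partial sums. The Robin Hood move followed by re-sorting still strictly decreases dominance, because the operation at the level of multisets is a transfer and the $\ell$-th partial sum of the sorted sequence is the sum of the $\ell$ largest multiset elements, which can only go down under a transfer; but the parenthetical justification as written is backwards.

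The genuine problem is in part~(2). You assert, citing part~(1), that ``the partial sums of $\bot_{m+1}(d+km)$ are the partial sums of the nearly-constant partition of $d+km$ into $m+1$ parts,'' and hence bounded by $\lceil \ell(d+km)/(m+1)\rceil$. This misquotes your own part~(1): $\bot_{m+1}(d+km)$ is $s'(m+1)+\square_{m+1}\bigl(d+km-\binom{m+1}{2}\bigr)$, not $\square_{m+1}(d+km)$. Worse, the inequality goes the wrong way: $\square_{m+1}(d+km)$ is $\preceq$-minimal among \emph{all} partitions of $d+km$ into $\le m+1$ parts, so $\square_{m+1}(d+km)\preceq\bot_{m+1}(d+km)$, i.e.\ the partial sums of $\bot_{m+1}$ are \emph{at least} those of $\square_{m+1}$, not at most. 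The staircase contribution $s'(m+1)$ adds about $\ell m-\binom\ell2$ to the $\ell$-th partial sum, so your stated bound $\ell(d+km)/(m+1)+1$ is off by a quantity of order $m^2$. The inequality~\eqref{eq:domcheck} as written therefore does not justify the dominance $\bot_{m+1}(d+km)\preceq\la[k]$. The fix is easy and preserves your strategy: compute $S_\ell\bigl(\bot_{m+1}(d+km)\bigr)$ from the correct formula to get $S_\ell\le \ell(d+km)/(m+1)+\ell(m-\ell+3)/2$, a bound whose error term is independent of $k$; since the $k$-coefficient on the right of the desired inequality is $\ell$ and on the left is $\ell m/(m+1)$, the gap $\ell/(m+1)>0$ still wins for large $k$. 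For comparison, the paper sidesteps the explicit formula for $\bot_{m+1}$ entirely in part~(2): it exhibits the concrete regular partition $\la^k_{\mathrm{reg}}=(\la^1+k-1,\ldots,\la^m+k-m,\binom{m+1}{2})$ and checks $\bot_{m+1}(d+km)\preceq\la^k_{\mathrm{reg}}\preceq\la[k]$, the first step being immediate from part~(1). That sandwich argument is shorter and avoids the estimate you need; yours is more computational but also works once the bound on $S_\ell(\bot_{m+1})$ is corrected.
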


Let $T_m$ denote the maximal torus of $\GL_m$
of diagonal matrices.
For $\a\in\Z^m$ with $|\a| := \sum_j \a_j =d$
and $\la\in\Par_m(d)$ one defines
the  {\em weight space} of $V_\la = V_\la(\GL_m)$ for the
weight $\a$ as
$$
 V_\la^\a := \big\{ v\in V_\la \mid \forall t \in T_m\
  t \cdot v = t^\a \big\} .
$$
Here we used the shorthand notation
$t=\diag(t_1,\ldots,t_m)$ and
$t^\a = t_1^{\a_1}\cdots t_m^{\a_m}$.
It is well known that $V_\la$ decomposes as
$V_{\la} = \bigoplus_{\a} V_{\la}^{\a}$.
Moreover, $V_{\la}^{\a}$ is nonzero iff
$\a\preceq \la$, cf.~\cite{fuha:91}.

The symmetric group $S_m$ acts on $\Z^m$
by permutation, namely $(\pi \a)(i):=\a(\pi^{-1}(i))$
for $\pi\in S_m$.
It is easy to check that
$P_\pi V_\la^\a = V_\la^{\pi\a}$.
In particular, the stabilizer $\stab(\a)$ of~$\a$
leaves $V_\la^\a$ invariant.
Note that $\stab(\a)$ is trivial iff $\a$ is regular.

\begin{theorem}\label{th:ut-dimVH}
For $\ula\in\Lambda_d(m,m,m)$ we have
$\dim (V_\ula)^{H_m} = \sum_{\a}
  \dim\big(V_{\la_1}^\a \ot V_{\la_2}^\a \ot V_{\la_3}^\a \big)^{\stab(\a)}$,
where the sum is over all
$\a\in\Par_m(d)$ such that $\a \preceq \la_1\meet\la_2\meet\la_3$.
\end{theorem}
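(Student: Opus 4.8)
The plan is to use the semidirect product decomposition $H_m = \Tc_m \rtimes S_m$ of Proposition~\ref{pro:stabut} and to compute the invariants in two stages, first under the torus $\Tc_m$ and then under the diagonal $S_m$, so that $(V_\ula)^{H_m} = \bigl((V_\ula)^{\Tc_m}\bigr)^{S_m}$.

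For the first stage I would decompose $V_\ula$ into joint weight spaces for the maximal torus $T_m\ti T_m\ti T_m$ of $G_m$, giving $V_\ula = \bigoplus V_{\la_1}^{\a_1}\ot V_{\la_2}^{\a_2}\ot V_{\la_3}^{\a_3}$ over triples $(\a_1,\a_2,\a_3)$ with $|\a_i| = d$. An element $(\diag(a),\diag(b),\diag(c))$ of $\Tc_m$ acts on the summand indexed by $(\a_1,\a_2,\a_3)$ by the scalar $a^{\a_1}b^{\a_2}c^{\a_3}$; substituting the defining relation $c_i = (a_ib_i)^{-1}$ of $\Tc_m$ rewrites this scalar as $a^{\a_1-\a_3}b^{\a_2-\a_3}$, which is identically $1$ on $\Tc_m$ precisely when $\a_1 = \a_2 = \a_3$. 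Hence
$$(V_\ula)^{\Tc_m} = \bigoplus_{\a}\ M_\a, \qquad M_\a := V_{\la_1}^{\a}\ot V_{\la_2}^{\a}\ot V_{\la_3}^{\a},$$
the sum extending over all $\a\in\Z^m$ with $|\a| = d$.

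For the second stage, recall that $P_\pi V_{\la_i}^{\a} = V_{\la_i}^{\pi\a}$, so the diagonally embedded $S_m$ carries $M_\a$ onto $M_{\pi\a}$ while the subgroup $\stab(\a)$ acts linearly on $M_\a$. I would partition the weights occurring above into $S_m$-orbits, each orbit containing a unique partition $\a\in\Par_m(d)$, namely its decreasing rearrangement. The $S_m$-submodule of $(V_\ula)^{\Tc_m}$ spanned by the summands over the orbit of such an $\a$ is $S_m$-equivariantly isomorphic, via $\pi\otimes v\mapsto(P_\pi,P_\pi,P_\pi)v$, to the induced module $\mathrm{Ind}_{\stab(\a)}^{S_m}(M_\a)$; Frobenius reciprocity then gives $\bigl(\mathrm{Ind}_{\stab(\a)}^{S_m}M_\a\bigr)^{S_m}\cong M_\a^{\stab(\a)}$. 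Summing these contributions over the orbits,
$$(V_\ula)^{H_m} = \bigoplus_{\a\in\Par_m(d)} \bigl(V_{\la_1}^{\a}\ot V_{\la_2}^{\a}\ot V_{\la_3}^{\a}\bigr)^{\stab(\a)}.$$
Finally, for a partition $\a$ the summand $M_\a$ is nonzero iff $\a\preceq\la_i$ for $i=1,2,3$, which by the defining property of the meet in the dominance lattice means $\a\preceq\la_1\meet\la_2\meet\la_3$; dropping the vanishing terms and passing to dimensions gives the asserted formula.

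The step I expect to require the most care is the second stage: checking that the block over each weight orbit really is the induced representation of the $\stab(\a)$-module $M_\a$ as $S_m$-modules, not merely as vector spaces, and then applying $\bigl(\mathrm{Ind}_H^{S_m}U\bigr)^{S_m}\cong U^H$ correctly. The torus computation of the first stage and the concluding reduction to $\la_1\meet\la_2\meet\la_3$ are routine, the latter being immediate from the lattice structure on $\Par_m(d)$.
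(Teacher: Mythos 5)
Your argument is correct and follows the same two-stage plan as the paper: first pass to $\Tc_m$-invariants via the weight-space decomposition, which isolates the ``diagonal'' blocks $M_\a := V_{\la_1}^\a\ot V_{\la_2}^\a\ot V_{\la_3}^\a$, and then compute the $S_m$-invariants by decomposing $(V_\ula)^{\Tc_m}$ along $S_m$-orbits of weights. The one place where you genuinely diverge from the paper is the final step: you identify the orbit block $\bigoplus_{\b\in S_m\a} M_\b$ with the induced $S_m$-module $\mathrm{Ind}_{\stab(\a)}^{S_m} M_\a$ and then invoke Frobenius reciprocity, $\bigl(\mathrm{Ind}_{\stab(\a)}^{S_m} M_\a\bigr)^{S_m}\cong M_\a^{\stab(\a)}$, whereas the paper proves the dimension equality $\dim(M_B)^{S_m}=\dim(M^\a)^{\stab(\a)}$ by a bare-hands coset computation (fixing representatives $\pi_1=\mathrm{Id},\dots,\pi_t$ of $S_m/\stab(\a)$ and checking that projection onto the $M^\a$-summand gives a linear bijection between the two invariant spaces). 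These are equivalent --- the paper's computation is in effect an ad hoc proof of exactly this instance of Frobenius reciprocity --- but your packaging is more modular and makes it evident why the bijection exists; the price is that one must verify (as you flag) that the orbit block really carries the induced module structure, i.e.\ that $\pi\otimes v\mapsto (P_\pi,P_\pi,P_\pi)v$ is a well-defined $S_m$-equivariant isomorphism, which is routine. The reduction of the index set to partitions $\a\preceq\la_1\meet\la_2\meet\la_3$ at the end is handled identically in both proofs, via the standard fact that $V_\la^\a\ne 0$ iff the decreasing rearrangement of $\a$ is dominated by $\la$, together with the meet property of the dominance lattice.
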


The next result shows that the highest weights outside
the auxiliary semigroup of the unit tensors are very rare.

\begin{corollary}\label{cor:regmet}
(1) If there is a regular $\a \preceq \la_1\meet\la_2\meet\la_3$, then
$\ula\in S^o(\lan m \ran)$.

(2) If $\la_1,\la_2,\la_3$ are all regular, then $\ula\in S^o(\lan m \ran)$.
\end{corollary}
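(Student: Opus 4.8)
The plan is to derive both statements directly from Theorem~\ref{th:ut-dimVH} together with Proposition~\ref{pro:pewe}. Recall that by Proposition~\ref{pro:pewe} we have $\ula\in S^o(\lan m\ran)$ iff $(V_\ula)^{H_m}\ne 0$, so it suffices in each case to exhibit a single $\a$ contributing a nonzero summand to the formula in Theorem~\ref{th:ut-dimVH}.

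For part~(1): suppose $\a\preceq\la_1\meet\la_2\meet\la_3$ is regular. Then $\stab(\a)$ is trivial, so the corresponding summand is simply $\dim\big(V_{\la_1}^\a\ot V_{\la_2}^\a\ot V_{\la_3}^\a\big)$. Since $\a\preceq\la_1\meet\la_2\meet\la_3\preceq\la_i$ for each $i$, the weight-space criterion $V_{\la}^\a\ne 0 \Leftrightarrow \a\preceq\la$ from~\cite{fuha:91} guarantees $V_{\la_i}^\a\ne 0$ for $i=1,2,3$. Hence the tensor product is nonzero, this summand is strictly positive, and therefore $\dim(V_\ula)^{H_m}>0$, i.e.\ $\ula\in S^o(\lan m\ran)$.

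For part~(2): assume $\la_1,\la_2,\la_3$ are all regular. I would reduce this to part~(1) by showing that $\mu:=\la_1\meet\la_2\meet\la_3$ is itself regular and then taking $\a=\mu$ (which trivially satisfies $\a\preceq\mu$). The point I expect to be the main obstacle is precisely this lattice-theoretic fact: that the meet (in dominance order) of regular partitions is again regular. Concretely, if $\mu^{j}=\mu^{j+1}$ for some $j$, then $\sum_{i\le j}\mu^i = \sum_{i\le j}\la_k^i$ for \emph{some} index $k$ and simultaneously $\sum_{i\le j+1}\mu^i \le \sum_{i\le j+1}\la_k^i$ with the first $j$ partial sums forcing $\mu^{j+1}\le\la_k^{j+1}$; combined with $\mu^{j+1}=\mu^j\ge\la_k^j\ge\la_k^{j+1}$ (using that $\mu\preceq\la_k$ and $\la_k$ is decreasing) this would yield $\la_k^j=\la_k^{j+1}$, contradicting regularity of $\la_k$. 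Thus $\mu$ has pairwise distinct parts, so $\a=\mu$ is a regular partition with $\a\preceq\la_1\meet\la_2\meet\la_3$, and part~(1) applies to give $\ula\in S^o(\lan m\ran)$.

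I would double-check the edge case where $\mu$ has a part equal to zero (so that "pairwise distinct parts" and "regular in $\Par_m(d)$" agree with the definition in the excerpt), but the argument above is not sensitive to this since it only uses the defining inequalities of the meet. The remaining verifications—that the claimed summand in Theorem~\ref{th:ut-dimVH} is the one indexed by $\a$, and that $\a\preceq\mu$ is in the admissible range—are immediate from the statement of that theorem.
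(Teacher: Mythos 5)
Your proof of part~(1) is essentially the paper's; the interest is in part~(2), where you take a genuinely different route. The paper invokes Lemma~\ref{le:unique-min}(1): since each $\la_i$ is regular, the unique minimal regular partition $\bot_m(d)$ satisfies $\bot_m(d)\preceq\la_i$ for all~$i$, hence $\bot_m(d)\preceq\la_1\meet\la_2\meet\la_3$, and (1) applies with $\a=\bot_m(d)$. You instead prove the stronger lattice-theoretic fact that the meet $\mu=\la_1\meet\la_2\meet\la_3$ is itself regular, and take $\a=\mu$. Your argument is correct, but it tacitly relies on the standard description of the meet in the dominance lattice: the partial sums of $\mu$ are the pointwise minima $\sum_{i\le j}\mu^i = \min_k \sum_{i\le j}\la_k^i$ (this is what guarantees the existence of an index $k$ with equality at position $j$). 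That fact holds because the minimum of concave sequences is concave, but you should state it rather than treat it as immediate; it is the real content behind your claim. Once it is in hand, your chain $\la_k^{j+1}\ge\mu^{j+1}=\mu^j\ge\la_k^j\ge\la_k^{j+1}$ (the two outer bounds coming from subtracting the equality at~$j$ from the dominance inequalities at $j+1$ and $j-1$, respectively) correctly forces $\la_k^j=\la_k^{j+1}$, contradicting regularity of~$\la_k$. What each approach buys: the paper's goes through the explicit staircase $\bot_m(d)$, which it needs anyway for Lemma~\ref{le:unique-min}(2) and hence Theorem~\ref{th:no-lower-bound}; yours is self-contained and in fact recovers Lemma~\ref{le:unique-min}(1) (closure of the regular partitions in $\Par_m(d)$ under meet, plus finiteness, gives the unique minimum), but it does not produce the explicit description of $\bot_m(d)$ that the rest of the paper uses.
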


\begin{proof} (1) is an immediate consequence of Proposition~\ref{pro:pewe}
and Theorem~\ref{th:ut-dimVH}.
If $\la_1,\la_2,\la_3$ are all regular, then
$\bot_m(d) \preceq \la_i$ for $i=1,2,3$ by Lemma~\ref{le:unique-min}(1).
Now apply~(1).
\end{proof}

\begin{theorem}\label{th:no-lower-bound}
For any $\ula\in\Lambda^+_d(m,m,m)$ there exists $k\in\N$ such that
$\ula'\in S^o(\langle m+1 \rangle)$,
where
$\lambda_i = (\lambda_{i}^1,\ldots,\lambda_{i}^m)$ and
$\lambda'_i = (\lambda_{i}^1 +k,\ldots,\lambda_{i}^m +k,0)$.
\end{theorem}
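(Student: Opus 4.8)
The plan is to reduce the statement directly to Corollary~\ref{cor:regmet}(1) for the $(m+1)$st unit tensor, using Lemma~\ref{le:unique-min}(2) to produce the required regular partition. Fix $\ula=(\la_1,\la_2,\la_3)\in\Lambda^+_d(m,m,m)$ and, for a parameter $k\in\N$, put $\la'_i:=(\la_i^1+k,\ldots,\la_i^m+k,0)$. Then each $\la'_i$ is a partition of $d+km$ into at most $m+1$ parts, so $\ula'=(\la'_1,\la'_2,\la'_3)\in\Lambda^+_{d+km}(m+1,m+1,m+1)$, and the goal becomes: for $k$ large enough, $\ula'\in S^o(\langle m+1\rangle)$.

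Apply Lemma~\ref{le:unique-min}(2) to each $\la_i\in\Par_m(d)$: there is a threshold $k_i$ such that $\bot_{m+1}(d+km)\preceq\la'_i$ for all $k\ge k_i$. Set $k:=\max(k_1,k_2,k_3)$, so that $\bot_{m+1}(d+km)\preceq\la'_i$ holds simultaneously for $i=1,2,3$. Since the dominance order on $\Par_{m+1}(d+km)$ is a lattice and $\la'_1\meet\la'_2\meet\la'_3$ is by definition the greatest lower bound of $\la'_1,\la'_2,\la'_3$, it follows that $\bot_{m+1}(d+km)\preceq\la'_1\meet\la'_2\meet\la'_3$. By Lemma~\ref{le:unique-min}(1) the partition $\a:=\bot_{m+1}(d+km)$ is regular, so we have exhibited a regular $\a\preceq\la'_1\meet\la'_2\meet\la'_3$. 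Corollary~\ref{cor:regmet}(1), applied with $m$ and $d$ replaced by $m+1$ and $d+km$, then gives $\ula'\in S^o(\langle m+1\rangle)$, which is the claim.

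Thus the argument is a short reduction; all the weight needed is already carried by Theorem~\ref{th:ut-dimVH} (whose formula for $\dim(V_{\ula'})^{H_{m+1}}$ has, for a regular weight $\a\preceq\la'_1\meet\la'_2\meet\la'_3$, the nonzero summand $\dim V_{\la'_1}^\a\cdot\dim V_{\la'_2}^\a\cdot\dim V_{\la'_3}^\a$) and by Lemma~\ref{le:unique-min}. The only point to watch when assembling the proof is the passage from a per-$\la_i$ threshold to one uniform~$k$, handled by taking the maximum of the three, together with the fact that dominance is a lattice order, so that a common lower bound of the $\la'_i$ is automatically dominated by their meet. I do not anticipate a genuine obstacle here: the mathematical content lies less in the proof than in the perhaps unexpected conclusion that every highest weight, suitably shifted and padded, falls back inside $S^o(\langle m+1\rangle)$ — which is exactly what makes $G_s$-representations incapable of yielding nontrivial border rank lower bounds.
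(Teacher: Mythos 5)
Your proof is correct and follows exactly the paper's route: apply Lemma~\ref{le:unique-min}(2) to get $\bot_{m+1}(d+km)\preceq\la'_i$ for $k$ large, observe that this staircase partition is regular and dominates the meet, and invoke Corollary~\ref{cor:regmet}(1). The only difference is that you have spelled out the small steps (taking the maximum of the three thresholds, the lattice property of dominance) that the paper's two-line proof leaves implicit.
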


\begin{proof}
Lemma~\ref{le:unique-min}(2) implies
$\bot_{m+1}(d +km) \preceq \la'_{1},\la'_{2},\la'_{3}$
for sufficiently large~$k$.
Now apply Corollary~\ref{cor:regmet}(1).
\end{proof}

Theorem~\ref{th:no-lower-bound} has severe consequences.
It tells us that for any tensor~$w$ of format $(m,m,m)$,
the trivial lower bound $\br(w)  > m$ is the best that can
be shown using $G_s$-obstructions!

\section{Matrix multiplication tensors}\label{se:mamu}

We fix complex vector spaces $U_i$ of dimension~$n_i$, 
put
$W_{12} := U_1^*\ot U_2,\
 W_{23} := U_2^*\ot U_3,\
 W_{31} := U_3^*\ot U_1$,
and consider the group
$$
 \Gc := \GL(U_1^*\ot U_2)\times \GL(U_2^*\ot U_3)\times\GL(U_3^*\ot U_1) .
$$
acting on $W :=W_{12}\ot W_{23}\ot W_{31}$.
We define the {\em matrix multiplication tensor} $\mamU\in W :=W_{12}\ot W_{23}\ot W_{31}$
as the tensor corresponding to the linear form
$$
 W^*\to\C,\  \ell_1\ot u_2\ot\ell_2\ot u_3\ot\ell_3\ot u_1 \mapsto \ell_1(u_1)\ell_2(u_2)\ell_3(u_3),
$$
obtained as the product of three contractions
($\ell_i\in U_i^*$ and $u_i\in U_i$).
To justify the naming we note that,
using the canonical isomorphisms
$\Hom(U_2,U_1)\simeq U_1\ot U_2^*$ and
$\Bil(U,V;W)\simeq U^*\ot V^* \ot W$,
one easily checks that $\mamU$ corresponds to the bilinear map
$$
\mamU\colon \Hom(U_2,U_1)\times\Hom(U_3,U_2)\to\Hom(U_3,U_1),\
 (\varphi,\psi) \mapsto \varphi\circ\psi
$$
describing the composition of linear maps
(note that we exchanged the order for the third factor:
$\Hom(U_3,U_1) \simeq U_3^*\ot U_1$).
If $U_i=\C^{n_i}$, then this bilinear map
corresponds to the multiplication of
$n_1\times n_2$ with $n_2\times n_3$ matrices.
In this case we shall write $\lan n_1,n_2,n_3 \ran =\mamU$.

\subsection{Stabilizer and stability}

We put
$\Sc := \GL(U_1)\times \GL(U_2) \times \GL(U_3)$
and consider the following morphism of groups
\begin{equation}\label{eq:mophi}
\Phi\colon \Sc \to \Gc,\
 (a_1,a_2,a_3)\mapsto
 ((a_1^*)^{-1} \ot a_2, (a_2^*)^{-1} \ot a_3, (a_3^*)^{-1} \ot a_1)
\end{equation}
with the kernel $\C^\ti\cdot\Id\simeq\C^\ti$.
Note that $\GL(U_i)$ acts on $U_i^*\ot U_i$ in the following way:
$$
 a_i\cdot (\ell_i \ot u_i) =
 \big((a_i^{-1})^* \ot a_i\big)(\ell_i \ot u_i) = (a_i^{-1})^*(\ell_i) \ot a_i(u)
 = (\ell_i\circ a_i^{-1}) \ot a_i(u_i) .
$$
Hence this action leaves the trace
$U_i^*\ot U_i\to\C, \ell_i\ot u_i \mapsto \ell_i(u_i)$ invariant.
This implies that the image of $\Phi$ is contained in the stabilizer $\Hc$ of $\mamU$.
In fact, equality holds.

\begin{proposition}\label{th:stab-mamu}
The stabilizer $\Hc$ of $\mamU$ equals the image of~$\Phi$.
In particular, $\Hc\simeq \Sc/\C^\ti$.
\end{proposition}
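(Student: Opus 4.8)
The plan is to establish the reverse inclusion $\Hc \subseteq \operatorname{im}\Phi$; one inclusion is already done in the text. Fix $g=(g_{12},g_{23},g_{31})\in\Hc$, so $g_{12}\in\GL(W_{12})$, etc., and $g\cdot\mamU = \mamU$. The key structural fact is that $\mamU$, viewed as a linear form on $W^*$, is (up to the identification $\Hom(U_2,U_1)\times\Hom(U_3,U_2)\to\Hom(U_3,U_1)$) the composition map $(\varphi,\psi)\mapsto\varphi\circ\psi$. So I would work on the bilinear-map side: $g_{12},g_{23}$ act on the source spaces $\Hom(U_2,U_1)$, $\Hom(U_3,U_2)$ and $g_{31}$ (or rather its inverse transpose) acts on the target $\Hom(U_3,U_1)$, and the stabilizer condition says
$$
g_{31}\bigl(\varphi\circ\psi\bigr) \;=\; g_{12}(\varphi)\circ g_{23}(\psi) \qquad\text{for all }\varphi,\psi .
$$
(Here I am writing $g_{31}$ for the induced action on $\Hom(U_3,U_1)$; the precise dualization is routine bookkeeping.) The task is to show this forces $g_{12},g_{23},g_{31}$ to come from a single triple $(a_1,a_2,a_3)\in\Sc$ via $\Phi$, with the scalar ambiguity being exactly $\ker\Phi=\C^\times$.

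First I would exploit rank-one elements. Writing $\varphi = u_1\otimes\ell_2$ and $\psi = u_2\otimes\ell_3$ with $u_i\in U_i$, $\ell_i\in U_i^*$, the product $\varphi\circ\psi$ equals $\ell_2(u_2)\,u_1\otimes\ell_3$. Plugging a rank-one $\varphi$ and running over all $\psi$ shows that $g_{12}$ sends rank-one maps to rank-one maps in a controlled way; the standard fact that a linear automorphism of $\Hom(U_2,U_1)=U_1\otimes U_2^*$ preserving the rank-one (decomposable) locus is either of the form $u_1\otimes\ell_2\mapsto a_1(u_1)\otimes b_2(\ell_2)$ or, when $\dim U_1=\dim U_2$, a transposition composed with such (cf.\ the classification of linear preservers of rank one / the Segre variety's linear automorphisms). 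The composition constraint is not symmetric in $\varphi$ and $\psi$, so I expect it to rule out the transpose branch and pin $g_{12}(u_1\otimes\ell_2) = a_1(u_1)\otimes b_2(\ell_2)$ for some $a_1\in\GL(U_1)$, $b_2\in\GL(U_2^*)$, and similarly $g_{23}(u_2\otimes\ell_3)=a_2(u_2)\otimes b_3(\ell_3)$ and $g_{31}(u_3\otimes\ell_1)=a_3(u_3)\otimes b_1(\ell_1)$.

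Next I would substitute these forms back into the identity $g_{31}(\varphi\circ\psi)=g_{12}(\varphi)\circ g_{23}(\psi)$ evaluated on rank-one $\varphi,\psi$ and read off compatibility relations among $a_1,a_2,a_3,b_1,b_2,b_3$. Tracking the scalar $\ell_2(u_2)$ on the left against $b_2(\ell_2)(a_2(u_2)) = \ell_2(a_2^{-1*}b_2^* a_2\,u_2)$-type expressions on the right (again routine), I expect to obtain $b_2 = c_2\,(a_2^*)^{-1}$ for some scalar $c_2$, and likewise $b_3$ proportional to $(a_3^*)^{-1}$, $b_1$ proportional to $(a_1^*)^{-1}$, together with one relation forcing the product of the three scalars to be $1$. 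Absorbing the scalars into the $a_i$ (which is exactly the freedom recorded by $\ker\Phi=\C^\times\cdot\Id$) then exhibits $g=\Phi(a_1,a_2,a_3)$, completing the proof. The isomorphism $\Hc\simeq\Sc/\C^\times$ is then immediate since $\Phi$ is surjective onto $\Hc$ with kernel $\C^\times$.

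The main obstacle I anticipate is the rank-one preserver step: justifying cleanly that $g_{12}$ must respect the decomposable tensors at all, and then excluding the transpose/swap branch. The first part is not automatic from the bilinear identity alone for a single $\varphi$; I would need to combine the constraints coming from varying $\psi$ over all of $\Hom(U_3,U_2)$ — for fixed rank-one $\varphi$ of rank one, $\{\varphi\circ\psi\}$ sweeps out all maps of rank $\le 1$ with image contained in $\operatorname{im}\varphi$, which should suffice to see that $g_{12}(\varphi)$ has rank one and to locate its image and kernel. Once the maps are known to be of Segre-automorphism type on each factor, the remaining algebra is bookkeeping; but getting to that point rigorously, rather than by analogy with the known unit-tensor case (Proposition~\ref{pro:stabut}), is where the real work lies.
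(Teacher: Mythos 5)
Your outline is correct and would, if fleshed out, give a valid proof, but it is a genuinely different route from the paper's. The paper invokes Lemma~\ref{le:deGroote} (de Groote's description of the stabilizer of the structural tensor of a \emph{unital} associative algebra in terms of left/right multiplications by units and an algebra automorphism), applies it to $A=\End(U)$, and then uses the Skolem--Noether theorem to conclude that every automorphism of $\End(U)$ is inner; this immediately packages the stabilizer as $\im\Phi$. Your approach bypasses the algebra structure entirely and works on the decomposable locus: you show that the composition identity $g_{31}(\varphi\circ\psi)=g_{12}(\varphi)\circ g_{23}(\psi)$ forces each $g_{ij}$ to preserve rank one (your ``sweep'' argument is sound: for rank-one $\varphi$, the set $\{\varphi\circ\psi\}$ is an $n_3$-dimensional ``row,'' while $\{g_{12}(\varphi)\circ\chi\}$ has dimension $n_3\cdot\operatorname{rank}(g_{12}(\varphi))$, so bijectivity of $g_{31}$ forces the rank to be one), and you then appeal to the classification of linear rank-one preservers plus the composition constraint to rule out the transpose branch and fix the scalars. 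In fact the sweep argument already shows that $g_{31}$ sends rows to rows and (by the symmetric argument) columns to columns, which excludes the transpose automatically; the residual scalar bookkeeping then produces $\Hc=\im\Phi$ and $\Hc\simeq\Sc/\C^\times$ as you say. Two remarks on what each route buys: the paper's proof is shorter once Lemma~\ref{le:deGroote} is in hand but is stated only in the cubic case $U_i=\C^n$ (since it relies on $\End(U)$ being a unital algebra and on Skolem--Noether); your argument uses no unit and no algebra theory, so it extends naturally to the rectangular format $\langle n_1,n_2,n_3\rangle$, at the cost of carrying out the rank-one preserver analysis which you have sketched but not fully executed.
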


In the cubic case, this is a consequence of~\cite{deGroote:78}.
We remark that $\mamU$ is uniquely determined by its stabilizer up to a scalar.

\begin{proposition}\label{th:mamu-stable}
The matrix multiplication tensor $\mamU$ is stable.
\end{proposition}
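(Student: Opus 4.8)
The plan is to verify the Kempf--Ness criticality criterion for $\mamU$, which gives that its orbit under $\Gc_s := \SL(W_{12})\ti\SL(W_{23})\ti\SL(W_{31})$ is closed. Recall that, for a reductive group acting linearly on a space with a norm invariant under a maximal compact subgroup, an orbit is closed iff it contains a vector of minimal norm, and a vector has minimal norm in its orbit exactly when the associated moment map vanishes at it (cf.\ the related instability theory of Kempf~\cite{kempf:78}). So it suffices to choose an invariant norm for which $\mamU$ itself is a zero of the moment map; in particular -- unlike for a generic tensor -- no preliminary transformation by $\Gc$ will be needed to bring $\mamU$ into critical position.

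I would fix the standard Hermitian inner products on $U_1,U_2,U_3$, the induced ones on $W_{12}=U_1^*\ot U_2$, $W_{23}=U_2^*\ot U_3$, $W_{31}=U_3^*\ot U_1$, and the tensor-product inner product on $W$. This norm is invariant under $\mathrm{SU}(W_{12})\ti\mathrm{SU}(W_{23})\ti\mathrm{SU}(W_{31})$, a maximal compact subgroup of $\Gc_s$, and the moment-map condition at $\mamU$ reduces to: $\langle X\mamU,\mamU\rangle = 0$ for every traceless Hermitian operator $X$ acting on a single one of the factors $W_{ij}$.

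The one real computation is to check this. Fixing orthonormal bases of $U_1,U_2,U_3$ and writing $E_{ab},F_{bc},G_{ca}$ for the resulting orthonormal bases of $W_{12},W_{23},W_{31}$ (so $a$ indexes $U_1$, $b$ indexes $U_2$, $c$ indexes $U_3$), we have $\mamU = \sum_{a,b,c} E_{ab}\ot F_{bc}\ot G_{ca}$. For $X$ on the first factor, $\langle X\mamU,\mamU\rangle = \operatorname{tr}(X\rho_{12})$, where $\rho_{12}$ is the reduced density operator of $\mamU$ on $W_{12}$ (equivalently $\beta\beta^\dagger$ for the flattening $\beta\colon W_{23}^*\ot W_{31}^*\to W_{12}$ of $\mamU$). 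Using the orthonormality relations, the double sum defining $\rho_{12}$ collapses to $\rho_{12}=n_3\cdot\Id_{W_{12}}$, and cyclically $\rho_{23}=n_1\cdot\Id_{W_{23}}$ and $\rho_{31}=n_2\cdot\Id_{W_{31}}$. Being scalar operators, each makes $\operatorname{tr}(X\rho_{ij})$ a multiple of $\operatorname{tr}(X)=0$; so the moment map vanishes at $\mamU$, which is therefore a vector of minimal norm in its $\Gc_s$-orbit, and that orbit is closed.

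The substantive ingredient is the identity $\rho_{ij}=n_\bullet\cdot\Id$: it holds because $\mamU$ is a cyclic contraction tensor whose three flattenings are, up to scalars, partial isometries onto their images, and the rest is bookkeeping. The point requiring care is that the relevant compact group is $\mathrm{SU}$ (not $\mathrm{U}$) in each factor, so criticality is tested only against \emph{traceless} $X$ -- exactly what the scalarity of $\rho_{ij}$ delivers; this plays the role that Kempf's refinement of the Hilbert--Mumford criterion plays in the proof of Proposition~\ref{th:unit-stable}. Beyond setting up the invariant norm and moment map carefully, I do not expect a genuine obstacle.
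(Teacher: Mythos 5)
Your proof is correct, but it takes a genuinely different route from the paper. The paper's argument goes through Kempf's instability theorem, packaged as Theorem~\ref{th:stb-crit}: one picks the reductive subgroup $R_s=\Phi(T(K_s))$ of the stabilizer, identifies its centralizer in $\Gc_s$ as the maximal torus $T_s$, and checks that every one-parameter subgroup $\sigma$ of $T_s$ whose limit $\lim_{t\to 0}\sigma(t)\,\mamU$ exists actually fixes $\mamU$ --- the exponents $\mu_{ij}+\nu_{jk}+\pi_{ki}$ are nonnegative yet sum to zero, so all vanish. You instead invoke the Kempf--Ness criterion: closedness of the $\Gc_s$-orbit is equivalent to $\mamU$ being (conjugate to) a zero of the moment map, and you verify this directly by computing the three reduced density operators (flattenings) of $\mamU$ and observing that each is a scalar multiple of the identity ($\rho_{12}=n_3\Id$, $\rho_{23}=n_1\Id$, $\rho_{31}=n_2\Id$), so that $\langle X\mamU,\mamU\rangle=\operatorname{tr}(X\rho_{ij})=0$ for every traceless Hermitian $X$. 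Both arguments ultimately exploit the same structural fact, the large symmetry of $\mamU$, but yours is more explicit and somewhat slicker: no auxiliary reductive subgroup and no centralizer computation, just a direct contraction identity. The trade-off is that the paper's route reuses the same stability criterion (Theorem~\ref{th:stb-crit}) already employed for the unit tensors, keeping the two stability proofs uniform, whereas your argument requires setting up the Kempf--Ness machinery (invariant Hermitian norm, moment map) that the paper otherwise does not need. Your observation that it is precisely the $\SL$-vs-$\GL$ distinction (criticality tested only against \emph{traceless} $X$) that makes the scalar flattenings suffice is exactly right and worth keeping explicit.
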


This can be shown by Kempf's refinement~\cite{kempf:78}
of the Hilbert-Mumford criterion, cf.~\cite{meyer:06}.

\subsection{Representations}

Suppose that
$\la_{12}\in\Z^{n_1n_2}$ is a highest weight vector for $\GL(U_1^*\ot U_2)$ and
$\la_{23}\in\Z^{n_2n_3}$, $\la_{31}\in\Z^{n_3n_1}$
are highest weight vectors for $\GL(U_2^*\ot U_3)$ and $\GL(U_3^*\ot U_1)$, respectively.
Put $\ula=(\la_{12},\la_{23},\la_{31})$ and
consider the irreducible $\Gc$-module
$$
 V_{\ula} := V_{\la_{12}}(\GL(U_1^*\ot U_2))\ot V_{\la_{23}}(\GL(U_2^*\ot U_3))\ot
   V_{\la_{31}}(\GL(U_3^*\ot U_1)).
$$

\begin{theorem}\label{cor:H-inv-mamu}
Let $\la_{12},\la_{23},\la_{31}$ be partitions of~$d$
and $H$ be the stabilizer of~$\mamU$. Then
$$
 \dim (V_\ula)^{\Hc} \ =
  \sum_{\mu_1\vdash_{n_1}d,\mu_2\vdash_{n_2}d,\mu_3\vdash_{n_3}d}
  g(\la_{12},\mu_1,\mu_2)\cdot
  g(\la_{23},\mu_2,\mu_3)\cdot
  g(\la_{31},\mu_3,\mu_1) .
$$
\end{theorem}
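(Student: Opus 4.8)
The plan is to leverage Proposition~\ref{th:stab-mamu}, which identifies $\Hc$ with the image of $\Phi\colon\Sc\to\Gc$, so as to reduce the computation of $\Hc$-invariants to one about $\Sc=\GL(U_1)\ti\GL(U_2)\ti\GL(U_3)$-invariants. Indeed, a vector of $V_\ula$ is fixed by $\Hc$ if and only if it is fixed by the $\Sc$-action on $V_\ula$ obtained by pulling the $\Gc$-action back along $\Phi$; hence $\dim(V_\ula)^{\Hc}=\dim(V_\ula)^{\Sc}$ for that action. The computation in the paragraph preceding Proposition~\ref{th:stab-mamu} already records the key bookkeeping fact we shall need: under $\Phi$, the factor $\GL(U_i)$ acts on the tensor slot $U_i^*$ through the contragredient representation and on the slot $U_i$ through the standard representation.

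The single substantial ingredient is the $\GL(A)\ti\GL(B)$-branching rule for the Schur module of a tensor product: for a partition $\la\vdash d$,
\begin{equation}\label{eq:tensorbranch}
 V_\la(\GL(A\ot B))\ \cong\ \bigoplus_{\rho\vdash d,\ \sigma\vdash d} g(\la,\rho,\sigma)\, V_\rho(\GL(A))\ot V_\sigma(\GL(B)),
\end{equation}
where only the $\rho$ with at most $\dim A$ parts and the $\sigma$ with at most $\dim B$ parts contribute a nonzero summand. This is standard: it follows from Schur--Weyl duality applied to the $S_d\ti\GL(A)\ti\GL(B)$-module isomorphism $(A\ot B)^{\ot d}\cong A^{\ot d}\ot B^{\ot d}$ (with $S_d$ permuting the $d$ factors, diagonally on the right) together with the definition of $g(\la,\rho,\sigma)$ as $\dim([\la]\ot[\rho]\ot[\sigma])^{S_d}$ recalled in~\S\ref{se:kron}.

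Now I would apply \eqref{eq:tensorbranch} to each of the three factors $W_{12}=U_1^*\ot U_2$, $W_{23}=U_2^*\ot U_3$, $W_{31}=U_3^*\ot U_1$, invoking the contragredient identification above (and the fact that, as a $\GL(U_i)$-module, $V_\rho(\GL(U_i))^*$ is the Schur module built on the dual space $U_i^*$). This presents $V_{\la_{12}}$ as the $\GL(U_1)\ti\GL(U_2)$-module $\bigoplus_{\alpha,\mu_2} g(\la_{12},\alpha,\mu_2)\,V_\alpha(\GL(U_1))^*\ot V_{\mu_2}(\GL(U_2))$, presents $V_{\la_{23}}$ as the $\GL(U_2)\ti\GL(U_3)$-module $\bigoplus_{\beta,\mu_3} g(\la_{23},\beta,\mu_3)\,V_\beta(\GL(U_2))^*\ot V_{\mu_3}(\GL(U_3))$, and presents $V_{\la_{31}}$ as the $\GL(U_3)\ti\GL(U_1)$-module $\bigoplus_{\gamma,\mu_1} g(\la_{31},\gamma,\mu_1)\,V_\gamma(\GL(U_3))^*\ot V_{\mu_1}(\GL(U_1))$, all sums over partitions of $d$ with the bounds $\alpha,\mu_1\vdash_{n_1}d$, $\beta,\mu_2\vdash_{n_2}d$, $\gamma,\mu_3\vdash_{n_3}d$. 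Tensoring the three and regrouping the tensor factors according to $\GL(U_1)$, $\GL(U_2)$, $\GL(U_3)$, one finds that each $\GL(U_i)$ sees exactly a factor of the form $V_\nu(\GL(U_i))^*\ot V_{\nu'}(\GL(U_i))$ (up to the irrelevant order of the two tensorands). Since $\dim\big(V_\nu(\GL(U_i))^*\ot V_{\nu'}(\GL(U_i))\big)^{\GL(U_i)}=\delta_{\nu,\nu'}$ by Schur's lemma, taking $\Sc$-invariants kills every summand except those with $\alpha=\mu_1$, $\beta=\mu_2$, $\gamma=\mu_3$, leaving precisely
$$
 \dim(V_\ula)^{\Hc}=\sum_{\mu_1\vdash_{n_1}d,\ \mu_2\vdash_{n_2}d,\ \mu_3\vdash_{n_3}d} g(\la_{12},\mu_1,\mu_2)\,g(\la_{23},\mu_2,\mu_3)\,g(\la_{31},\mu_3,\mu_1),
$$
which is the assertion.

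I do not expect a genuine obstacle here: the only named result used is the branching rule \eqref{eq:tensorbranch}, and the rest is a short chain of standard manipulations. The one point that must be handled with care is keeping track of which copy of $\GL(U_i)$ is glued to the \emph{standard} slot $U_i$ and which to the \emph{dual} slot $U_i^*$, so that after regrouping the $\GL(U_i)$-invariants collapse exactly to a product of three Kronecker coefficients and not to something more complicated; but this is already settled by the discussion preceding Proposition~\ref{th:stab-mamu}.
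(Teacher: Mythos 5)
Your proof is correct and follows essentially the same route as the paper's: decompose each $V_{\la_{ij}}$ under the subgroup $\GL$ of the two tensor slots via the Kronecker branching rule, pull back along the contragredient maps $\GL(U_i)\to\GL(U_i^*)$, regroup by $\GL(U_i)$, and observe that each $\GL(U_i)$ then sees a tensor product of two irreducibles whose invariants are $\delta_{\nu,\nu'}$-dimensional. The only cosmetic difference is that you justify that last dimension count by Schur's lemma directly, whereas the paper phrases it as a special case of the Littlewood--Richardson rule (its Equation~\eqref{eq:LWRdual}); these are the same fact, and your formulation is arguably the cleaner one for this purpose.
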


\section{A few examples and computations}\label{se:comp}

\subsection{A family of $G$-obstructions}

We use the frequency notation 
$k_1^{e_1} k_2^{e_2}\cdots k_s^{e_s}$ 
to denote the partition of $\sum_i k_i e_i$ 
where $k_i$ occurs $e_i$ times. 


\begin{lemma}\label{le:G-obst}
We have 
$\ula_n := (2^{n^2} 0, 2^{n^2}0, (2n^2 -3)^1 1^3 0^{n^2-3}) 
 \in S(\langle n,n,n\rangle) \setminus S^o(\langle n^2 +1\rangle)$
for $n\ge 2$. This implies 
$\br(\lan n,n,n\ran) > n^2 +1$. 
\end{lemma}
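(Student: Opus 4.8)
The plan is to establish two things separately: the membership $\ula_n \in S(\langle n,n,n\rangle)$, and the non-membership $\ula_n \notin S^o(\langle n^2+1\rangle)$. The lower bound $\br(\langle n,n,n\rangle) > n^2 + 1$ then follows from the general strategy~\eqref{eq:strat}: since $S(\langle n,n,n\rangle) \not\subseteq S^o(\langle n^2+1\rangle) \supseteq S(\langle n^2+1\rangle)$, the orbit closure $\overline{\Gc\langle n,n,n\rangle}$ is not contained in $\overline{G_{n^2+1}\langle n^2+1\rangle}$, which by the criterion of~\S\ref{se:ocp} means exactly that $\br(\langle n,n,n\rangle) > n^2+1$. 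Note $\langle n,n,n\rangle$ lives in format $(n^2, n^2, n^2)$, so comparing with the unit tensor $\langle n^2+1\rangle$ of format $(n^2+1, n^2+1, n^2+1)$ requires the inheritance result (Proposition~\ref{th:inheritance}) to view both in the larger format; the displayed weight $\ula_n$ has its third component padded with zeros to $n^2+1$ parts precisely for this reason, and its first two components, being partitions of $2n^2$ into $n^2$ parts, need one trailing zero appended as well.

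For the non-membership in $S^o(\langle n^2+1\rangle)$, I would apply Theorem~\ref{th:ut-dimVH} with $m = n^2+1$ and $d = 2n^2$. One must show $\dim(V_{\ula'_n})^{H_{n^2+1}} = 0$, i.e.\ that there is no $\alpha \in \Par_{n^2+1}(2n^2)$ with $\alpha \preceq \la'_{1}\meet\la'_{2}\meet\la'_{3}$ carrying a nonzero $\stab(\alpha)$-invariant in $V_{\la'_1}^\alpha \ot V_{\la'_2}^\alpha \ot V_{\la'_3}^\alpha$. Here $\la'_1 = \la'_2 = (2^{n^2}\,0^2)$ (in $n^2+1$ parts after inheritance-padding — actually $(2,\dots,2,0,0)$ with $n^2$ twos) while $\la'_3 = (2n^2-3,1,1,1,0,\dots,0)$. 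The meet $\la'_1 \meet \la'_2 = \la'_1 = (2^{n^2})$ padded; then $(2^{n^2})\meet(2n^2-3,1^3)$ must be computed, and the key point is that any $\alpha$ below this meet in dominance order is forced to be so "flat" that the weight space $V_{\la'_3}^\alpha$ is forced to vanish, or the common weight space structure kills all invariants. I expect the cleanest route is: the meet is a partition whose parts are small enough that $\alpha \preceq \text{meet}$ forces $\alpha$ itself to have at most two distinct part-sizes in a configuration incompatible with $\alpha \preceq \la'_3$ unless $\alpha$ is very special, and then check those finitely many special $\alpha$ directly, using that $V_{\la'_3}^\alpha = 0$ whenever $\alpha \not\preceq \la'_3$.

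For the membership $\ula_n \in S(\langle n,n,n\rangle)$, I would use Theorem~\ref{cor:H-inv-mamu}: writing $\langle n,n,n\rangle = \mamU$ with $U_1=U_2=U_3=\C^n$, so the relevant general-linear groups act on $\C^n\ot\C^n = \C^{n^2}$, the dimension $\dim(V_{\ula_n})^{\Hc}$ equals a sum over $\mu_1,\mu_2,\mu_3 \vdash_n (2n^2)$ of products of three Kronecker coefficients $g(\la_{12},\mu_1,\mu_2)\, g(\la_{23},\mu_2,\mu_3)\, g(\la_{31},\mu_3,\mu_1)$, with $\la_{12} = \la_{23} = (2^{n^2}\,0)$ restricted appropriately and $\la_{31} = (2n^2-3,1^3,0^{n^2-3})$. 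It suffices to exhibit one triple $(\mu_1,\mu_2,\mu_3)$ making all three Kronecker coefficients nonzero; the natural guess is $\mu_1 = \mu_2 = \mu_3 = (2^n)$ (a partition of $2n$, wait — of $2n^2$ into $n$ parts, so rather $\mu_i = (2n, 2n, \dots, 2n)$, i.e.\ $(2n)^n$, the unique-up-to-check rectangular candidate). Then $g((2^{n^2}\,0), (2n)^n, (2n)^n)$ should be nonzero because $[2^{n^2}]$-type partitions pair well with rectangles, and $g((2n^2-3,1^3), (2n)^n, (2n)^n)$ requires a short argument — possibly via a semigroup/monotonicity property of Kronecker coefficients or an explicit construction of the required $S_{2n^2}$-invariant. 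But $\dim(V_{\ula_n})^{\Hc} \ne 0$ only gives $\ula_n \in S^o(\mamU)$, not $S(\mamU)$; to promote this to membership in the orbit-closure semigroup I would invoke stability of $\mamU$ (Proposition~\ref{th:mamu-stable}) via Proposition~\ref{th:stabSs}, which gives $S_s(\mamU) = S^o_s(\mamU)$, hence $\pi(\ula_n) \in S_s(\mamU)$, meaning some shift $\ula_n + k\e$ lies in $S(\mamU)$ — and then I would need to check that the particular normalization in the statement (all three components genuine partitions of the same $d$) is the correct representative, i.e.\ that the extension integer $k$ of Theorem~\ref{th:extend} is $0$ here, which is the subtle geometric point flagged in the introduction.

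The main obstacle I anticipate is precisely this last step: showing that the highest-weight function witnessing $\ula_n \in S^o(\mamU)$ actually extends to the orbit closure \emph{in the right degree}, rather than only after adding a multiple of $\e_\um$. The paper's own framing — "we understand the extension problem very little" and the dedicated Theorem~\ref{th:extend} — signals that this requires a genuine argument, perhaps exploiting that the third partition $(2n^2-3,1^3,0^{n^2-3})$ is "close to" a known extendable weight, or a direct construction of a degree-$2n^2$ invariant polynomial in the vanishing ideal via the classical Strassen-type invariants. The non-membership side, by contrast, I expect to be a finite (if fiddly) combinatorial verification about the dominance lattice, and the Kronecker-coefficient positivity should follow from known results about Kronecker products with rectangular partitions.
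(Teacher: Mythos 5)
Your plan for the non-membership $\ula_n \notin S^o(\langle n^2+1\rangle)$ is essentially the paper's: apply Theorem~\ref{th:ut-dimVH} with $m = n^2+1$, $d = 2n^2$, identify which $\alpha \preceq \la'_1 \meet \la'_2 \meet \la'_3$ can occur, and check that the $\stab(\alpha)$-invariant subspaces vanish. You are vaguer than you need to be about the candidate set: the paper observes that exactly two $\alpha$ arise, namely $2^{n^2}0$ and $2^{n^2-1}1^2$, and kills both via the tableau-straightening algorithm (Lemmas~\ref{le:split1}, \ref{le:split2} in the appendix). Your ``finitely many special $\alpha$'' heuristic is correct in spirit but you should pin down the list.

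The membership $\ula_n \in S(\langle n,n,n\rangle)$ is where your proposal has a genuine gap, and to your credit you flag it explicitly. Your route — Kronecker positivity via Theorem~\ref{cor:H-inv-mamu} with $\mu_i=(2n)^n$ to get $\ula_n \in S^o(\mamU)$, then stability plus Proposition~\ref{th:stabSs} to transfer to $S(\mamU)$ up to a shift $k\e_\um$, then argue $k=0$ — stalls at the last step, and the paper tells you (Remark~\ref{re:mu}) that it must: stability only delivers $S_s$-level information, and they know ``no better way'' than direct evaluation to control the extension integer. What the paper actually does is quite different: they first establish $g(\ula_n)=1$ using known results on Kronecker coefficients of near-rectangular shapes (\cite{rewe:94,rosa:01}, with the splitting $\ula_n = \umu + (n-1)(2n,2^n,2^n)$), so the highest weight vector $f \in \Oh(W)_{2n^2}$ of weight $\ula_n^*$ is unique up to scalar; they then construct $f$ explicitly (via the machinery of \S\ref{se:hwv}, $P_\Sym(\underline\pi\, v_{\ula_n})$ for a suitable $\underline\pi \in (S_{2n^2})^3$) and verify, with computer assistance, that $f(\langle n,n,n\rangle) \neq 0$. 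Since $f$ does not vanish on the orbit, its restriction to $\overline{G\langle n,n,n\rangle}$ is a nonzero highest weight vector, giving $\ula_n \in S(\langle n,n,n\rangle)$ directly — no extension argument needed. So the key idea you are missing is: replace the abstract ``$k=0$'' question by the concrete fact that multiplicity one forces the $S^o$-witness and the $S$-witness to be the \emph{same} function, reducing everything to a single nonvanishing evaluation. Without this (or some substitute for it), your proof of the membership half is not complete.
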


\begin{proof} (Sketch)
1. We apply Theorem~\ref{th:ut-dimVH}. Put $N:=n^2$ and 
$\ula_n=(\la_1,\la_2,\la_3)$. 
The only partitions $\a\in\Par_{N+1}(2 N)$ smaller than $\la_1,\la_2,\la_3$ 
are $2^{N}0$ and $2^{N-1} 1^2$.  
A computation using the tableaux straightening algorithm from~\cite[p.110]{fult:97}
shows that
$(V_{\la_1}^\a \ot V_{\la_2}^\a \ot V_{\la_3}^\a \big)^{\stab(\a)}=0$
for both $\a$.
Proposition~\ref{pro:pewe} tells us that $\ula_n\not\in S^o(\langle n^2 +1\rangle)$. 

2. Using~\cite{rewe:94,rosa:01} one can show $g(\ula_n)=1$. Hence 
the highest weight vector $f\in\Oh(W)$ of weight~$\ula_n$
is uniquely determined up to a scalar. 
We explicitly constructed~$f$ and (guided by computer calculations) 
proved that $f(\langle n,n,n\rangle)\ne 0$.
Hence $\ula_n\in S(\langle n,n,n\rangle)$.
For the lower bound on $\br$ apply~\eqref{eq:strat}.
\end{proof}

\begin{remark}\label{re:mu}
1. Theorem~\ref{cor:H-inv-mamu} with $\mu_i=(2n)^n$, 
a positivity proof for the resulting Kronecker coefficients, 
and Proposition~\ref{pro:pewe} 
yield $\ula_n\in S^o(\langle n,n,n\rangle)$.
In order to guarantee $\ula_n\in S(\langle n,n,n\rangle)$
we currently know of no better way than 
to evaluate a highest weight vector at $\langle n,n,n\rangle$. 
In general, this becomes prohibitively costly for larger dimension formats.

2. Lemma~\ref{le:G-obst} yields $\br(\langle 2,2,2\rangle) > 5$.
It is known~\cite{land:06} that
$\br(\langle 2,2,2\rangle) = 7$.
So far we have been unable to reach the optimal lower bound 
by an obstruction.
\end{remark}

\subsection{Strassen's invariant}

Let $W=\C^m\ot\C^m\ot\C^3$, $m\ge 3$, and consider $\ula_m:=(3^m,3^m,m^3)$.
Strassen~\cite{stra:83} constructed an explicit invariant
$f_m\in\Oh(W)$ of highest weight $\ula_m^*$, that vanishes on all
tensors in $W$ with border rank at most $r=\lceil 3m/2\rceil -1$.
Hence $f_m(w)\ne 0$ implies $\br(w) > r$.

Let $\ula'_m\in\Lambda^+(r,r,r)$ be obtained from $\ula_m$ by appending
zeros.
It is tempting to conjecture that
$\ula'_m\not\in S(\langle r\rangle)$, because then Strassen's
implication would be a consequence of the existence of the obstruction
$\ula_m$.
Indeed, $f_m(w)\ne 0$ implies $\ula_m\in S(w)$ and, assuming the
conjecture,
$\ula_m\in S(w)\setminus S(\langle r\rangle)$ and thus $\br(w) >r$.
Unfortunately, the conjecture is already false for $m=4$!
An extensive computer calculation revealed the existence of 
$\tilde f_4\in \Oh(W)_{12}$ of highest weight $\ula_4^*$ and $g \in G$
such that $\tilde f_4 (g \langle 5\rangle) \neq 0$, which shows
$\ula'_4\in S(\langle 5\rangle)$.
Note $g(\ula_4)=2$. 

\section{Extension problem and nonnormality}\label{se:exten}

In order to advance, we need to study the difference between $S(w)$ and $S^o(w)$.
Let $W$ be of format $\um$ and $w\in W$ be stable.
If $\ula\in S^o(w)$, then Proposition~\ref{th:stabSs} implies that there
exists $k\in\Z$ such that $\ula + k\e_\um\in S(w)$,
where $\e_\um  =(\e_{m_1},\e_{m_2},\e_{m_3})$.
It is of interest to know the smallest such~$k$.
Below we will see that~$k$ can be given a geometric interpretation in terms of
the problem of extending regular functions from $Gw$ to $\ol{Gw}$.

We call the group morphism
$\det\colon G\to\C^\times, (g_1,g_2,g_3)\mapsto \det g_1\,\det g_2\,\det g_3$
the determinant on~$G$.
In the following we will assume that $\e_\um\in S^o(w)$.
By Proposition~\ref{pro:pewe} this is equivalent to
$\det g =1$ for all $g\in\stab(w)$.
We note that this condition is satisfied for $w=\langle n,n,n\rangle$
due to Proposition~\ref{th:stab-mamu}.

If $\e_\um\in S^o(w)$, then $\det$ induces the well defined regular function
$\det_w\colon Gw \to \C, gw \mapsto \det g$.

\begin{theorem}\label{th:extend}
Suppose that $w\in W$ is a stable tensor and $\e_\um\in S^o(w)$.
\begin{enumerate}
\item Then $w$ has the cubic format $(m,m,m)$.
\item The extension of $\det_w$ to $\ol{Gw}$
with value~$0$ on the boundary $\ol{Gw}\setminus Gw$
is continuous in the $\C$-topology.
\item $\det_w$ is not a regular function
on $\ol{Gw}$ if $m>1$.
\item $\ol{Gw}$ is not a normal variety if $m>1$.
\item For all highest weight vectors~$f\in\Oh(Gw)$
we have $(\det_w)^k f \in \Oh(\ol{Gw})$ for some $k\in\Z$.
\end{enumerate}
\end{theorem}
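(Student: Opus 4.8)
The plan is to establish the five claims roughly in the order listed, since later parts build on earlier ones. For part (1), I would use the fact that $\e_\um\in S^o(w)$ forces $\det g=1$ for all $g\in\stab(w)$ together with the stability of $w$. A stable tensor has $G_sw$ closed, hence $\stab(w)\cap G_s$ is reductive of the ``expected'' dimension; comparing $\dim\stab(w)$ with $\dim G-\dim Gw$ and using that $Gw$ is a cone (so $\C^\times\subseteq\stab$ acts trivially mod scalars only in balanced format) should pin down $m_1=m_2=m_3$. Concretely: the character $\det$ restricted to $\stab(w)$ being trivial is a strong constraint; for unbalanced formats one exhibits an element of the stabilizer (coming from the torus acting on the ``extra'' directions) with nontrivial determinant, a contradiction. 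I expect this to be a short dimension/character count.

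For part (2), the key point is that $\det_w(gw)=\det g$ extends continuously by $0$ to the boundary. I would argue via the Hilbert--Mumford/Kempf picture: any boundary point of $\ol{Gw}$ is a limit $\lim_{t\to 0}\lambda(t)gw$ along a one-parameter subgroup $\lambda$ that must ``shrink'' $w$, and since $w$ is stable (so $G_sw$ is closed) the degeneration must happen through the central $\C^\times$ direction, along which $\det$ tends to $0$ as the scalar does. More carefully, one uses that on the cone $\ol{Gw}$ the $d$-th graded piece of any highest-weight function scales like $t^d$, while $\det_w$ would have to scale with a fixed power of the central torus; matching these forces the boundary value $0$ and continuity follows from properness of the relevant map. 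Parts (3) and (4) are then quick consequences: if $\det_w$ were regular on $\ol{Gw}$ it would be a polynomial vanishing on the boundary divisor but nowhere zero on $Gw$, forcing the boundary to be empty or $\det_w$ to be a unit, impossible for $m>1$ since $\ol{Gw}$ is a non-trivial cone; hence $\det_w\in\C(\ol{Gw})$ is a non-regular rational function that is nonetheless bounded (indeed continuous) near the boundary, which is exactly a failure of normality, so $\ol{Gw}$ is not normal.

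For part (5), given a highest weight vector $f\in\Oh(Gw)_d$ of some weight $\ula$, I want $(\det_w)^k f\in\Oh(\ol{Gw})$ for suitable $k$. The idea is that multiplying by $\det_w$ shifts the weight by $\e_\um$ and, crucially, kills the pole along the boundary: by part (2) $\det_w$ vanishes to positive order on each boundary component, so for $k$ large enough $(\det_w)^k f$ is regular on all of $\ol{Gw}$ (it is regular on $Gw$ by construction, and the boundary singularities are resolved by the vanishing of $\det_w^k$). This is precisely the statement, already extracted in \eqref{eq:SS0} via Proposition~\ref{th:stabSs}, that $\ula+k\e_\um\in S(w)$ for some $k$; here we are identifying that abstract $k$ with the order of vanishing of the extended $\det_w$ along the boundary, which makes it ``geometric.'' Concretely I would take $k$ to be the maximum over boundary components of $\lceil (\text{pole order of } f)/(\text{vanishing order of }\det_w)\rceil$ and verify regularity in codimension one, then invoke that $\Oh(\ol{Gw})$ is the intersection of the local rings in codimension one on the normalization pulled back appropriately --- or, more cleanly, argue directly that a rational function on an affine variety that is regular on the complement of a divisor and has no pole along that divisor is regular.

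The main obstacle will be part (2): making rigorous that the \emph{only} way to reach the boundary is through the central scaling direction, so that $\det_w\to 0$ rather than oscillating or blowing up. This requires carefully combining stability (closedness of $G_sw$, via Kempf's theorem) with the cone structure, and controlling all one-parameter degenerations simultaneously; once the continuous extension by $0$ is in hand, parts (3), (4), (5) follow with comparatively little effort. I would budget most of the write-up for a clean proof of (2), likely phrasing it as: the map $G/\stab(w)\to Gw$ extends to a proper map from a suitable completion, along which the pullback of $\det_w$ is a regular function vanishing exactly on the preimage of the boundary.
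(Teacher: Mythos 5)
Your outline of part (1) is on the right track (the key is exactly the scalar torus $\{(a\,\Id,b\,\Id,c\,\Id):abc=1\}\subseteq\stab(w)$), but note that stability is not needed there — the determinant constraint alone pins down $m_1=m_2=m_3$, so the dimension/Kempf machinery is superfluous. Your approach to part (2) via one-parameter subgroups and a ``proper completion'' is substantially different from the paper's and substantially harder to make precise; the paper instead takes a direct compactness route: since $G_sw$ is closed and does not contain $0$, the infimum $\varepsilon:=\min\{\|\tilde g w\|:\tilde g\in G_s\}$ is strictly positive, and writing $g_n w=\det(g_n)\,\tilde g_n w$ with $\tilde g_n\in G_s$ shows $|\det g_n|$ is bounded and, after passing to a subsequence, that any nonzero limit of $\det g_n$ would force the boundary point back into $Gw$. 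You should be aware that controlling ``all one-parameter degenerations simultaneously'' is exactly the difficulty the paper avoids by this argument.

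The genuinely problematic steps are (3) and (5). For (3), your claimed dichotomy — that a regular $\det_w$ vanishing on the boundary and nonzero on $Gw$ would force the boundary to be empty or $\det_w$ to be a unit — does not follow: a cone certainly admits nonconstant regular functions vanishing on a proper closed subset and nonvanishing on the open complement, so there is no contradiction along these lines. The paper instead observes from the equivariance $g\cdot\det_w=(\det g)^{-1}\det_w$ that a regular $\det_w$ would span a $G$-submodule of $\Oh(\ol{Gw})$, hence of $\Oh(W)$, with highest weight $-\e_\um$, contradicting $g(\e_\um)=0$ for $m>1$ via \eqref{eq:multOW}. This representation-theoretic input is essential; a purely geometric argument of the kind you sketch does not close the gap.

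For (5), your plan is internally inconsistent with part (4). You propose to choose $k$ so that $(\det_w)^k f$ has ``no pole along the boundary divisor'' and then invoke that a rational function on an affine variety regular off a divisor and with no pole along it is regular. That last principle requires normality (it is essentially the statement that the ring equals the intersection of its codimension-one localizations, or equivalently $R_1+S_2$), and part (4) has just shown $\ol{Gw}$ is \emph{not} normal precisely because a bounded rational function ($\det_w$ itself) fails to be regular. So the pole-order bookkeeping cannot conclude. The paper's proof of (5) avoids this entirely: it lifts the highest weight vector $f$ (restricted to the closed orbit $G_sw$) along the surjection $\Oh(\ol{Gw})\to\Oh(G_sw)$ to a highest weight vector $F\in\Oh(\ol{Gw})_\delta$ of the same $G_s$-weight, and then compares $F(gw)=(\det g)^\delta f(\tilde gw)$ with $f(gw)=(\det g)^d f(\tilde gw)$ to get $F=(\det_w)^{\delta-d}f$, which is regular on $\ol{Gw}$ by construction. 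You would need to replace your valuation-theoretic argument with something like this to make (5) work.
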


We can also show a variant of this result with $\det$ replaced by $\det^2$.
This is of interest since
$(\det g)^2 = 1$ for all $g\in\stab(\langle m\rangle)$,
cf.~Proposition~\ref{pro:stabut}.

\begin{corollary}\label{cor:nn-mamu}
(1) The orbit closure of the matrix multiplication tensor $\langle n,n,n\rangle$
is not normal if $n>1$.
(2) The orbit closure of the unit tensor $\langle m\rangle$ is not normal if $m \ge 5$.
\end{corollary}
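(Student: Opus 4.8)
The plan is to deduce Corollary~\ref{cor:nn-mamu} from Theorem~\ref{th:extend}(4) and its $\det^2$-variant by verifying, in each of the two cases, the hypotheses of the respective statement. For part~(1), the tensor $w=\langle n,n,n\rangle$ is stable by Proposition~\ref{th:mamu-stable}, it has the cubic format $(n^2,n^2,n^2)$, and by Proposition~\ref{th:stab-mamu} every element of $\Hc=\stab(\langle n,n,n\rangle)$ has determinant~$1$: indeed any $g\in\Hc$ is of the form $\Phi(a_1,a_2,a_3)$, and the three factors $(a_i^*)^{-1}\ot a_{i+1}$ have determinants $\det(a_i)^{-n_{i+1}}\det(a_{i+1})^{n_i}$, whose product over the cyclic index telescopes to~$1$. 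Hence $\e_\um\in S^o(\langle n,n,n\rangle)$ by Proposition~\ref{pro:pewe}, so Theorem~\ref{th:extend}(4) applies directly and gives nonnormality of $\overline{\Gc\langle n,n,n\rangle}$ for $n^2>1$, i.e.\ $n>1$.

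For part~(2), the tensor $\langle m\rangle$ is stable by Proposition~\ref{th:unit-stable} and has cubic format $(m,m,m)$. Here one cannot use $\det$ itself: by Proposition~\ref{pro:stabut} the stabilizer $H_m$ contains the torus $\Tc_m$, on which $\det$ takes the value $\prod_i a_i b_i c_i = 1$, but it also contains the diagonally embedded symmetric group, on which $\det(P_\pi,P_\pi,P_\pi)=\mathrm{sgn}(\pi)^3=\mathrm{sgn}(\pi)$, which is $-1$ for odd~$\pi$. So $\e_\um\notin S^o(\langle m\rangle)$ in general, but $(\det g)^2=1$ for all $g\in H_m$, which is exactly the hypothesis of the announced $\det^2$-variant of Theorem~\ref{th:extend}. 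Invoking that variant yields nonnormality of $\overline{G_m\langle m\rangle}$ once $m>1$.

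The remaining point — and the one place where real work beyond bookkeeping is needed — is the sharper threshold $m\ge 5$ rather than $m\ge 2$. The $\det^2$-variant, run verbatim, only shows that $(\det_w)^{2k}f$ extends for highest weight vectors~$f$ and that $\det_w^2$ is not regular; to conclude nonnormality one must exhibit a concrete highest weight vector~$f\in\Oh(G_m\langle m\rangle)$ whose extension to $\overline{G_m\langle m\rangle}$ genuinely requires a nontrivial power of $\det_w$, equivalently some $\ula\in S^o(\langle m\rangle)\setminus S(\langle m\rangle)$ (up to shifting by $\e_\um$). By Corollary~\ref{cor:regmet} and Theorem~\ref{th:no-lower-bound} such $\ula$ are extremely rare — any triple with all $\la_i$ regular, or even with a single regular partition below $\la_1\meet\la_2\meet\la_3$, already lies in $S^o$, and generic shifts land in $S^o$ — so one must produce a weight $\ula$ for which $\dim(V_\ula)^{H_m}\neq 0$ (via Theorem~\ref{th:ut-dimVH}) yet $V_\ula(G_m)^*$ does not occur in $\Oh(\overline{G_m\langle m\rangle})$. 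The expectation is that no such phenomenon occurs for $m\le 4$ (forcing $\overline{G_m\langle m\rangle}$ to be normal, or at least leaving our method silent) while for $m\ge 5$ one can write down an explicit small weight — of the flavour of the partitions appearing in Lemma~\ref{le:G-obst} — that is in $S^o$ but provably not in $S$, the non-occurrence being checked by the tableaux straightening argument already used there. This case analysis at $m=5$, rather than the application of Theorem~\ref{th:extend}, is the main obstacle.
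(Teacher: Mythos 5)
Part~(1) of your argument is correct and matches the paper, which simply cites Theorem~\ref{th:extend}(4); your telescoping verification that $\det\equiv 1$ on $\Hc$ is a reasonable (if slightly more explicit) way to check the hypothesis $\e_\um\in S^o(w)$ together with Proposition~\ref{pro:pewe}.

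For part~(2), however, you have gone off in the wrong direction at exactly the point where you flag difficulty. You are right that $\det^2\equiv 1$ on $H_m$, that the $\det^2$-variant of Theorem~\ref{th:extend} is the right tool, and that the continuity of the boundary extension (the analogue of part~(2)) goes through unchanged. But the threshold $m\ge 5$ does \emph{not} come from having to produce an explicit weight in $S^o(\langle m\rangle)\setminus S(\langle m\rangle)$; nonnormality follows the moment one knows that $\det_w^2$ has a continuous but non-regular extension to $\ol{Gw}$, by Proposition~\ref{pro:norm-ext}. So the only thing to check is the $\det^2$-analogue of Theorem~\ref{th:extend}(3): that $\det_w^2$ is not a regular function on $\ol{Gw}$. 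By the same argument as in~\eqref{eq:gdet}, a regular extension of $\det_w^2$ would force $\Oh(W)$ to contain an irreducible $G$-submodule of highest weight $-2\e_\um$, i.e.\ the Kronecker coefficient $g(2^m,2^m,2^m)$ would have to be nonzero. The symmetry $g(\la,\mu,\nu)=g(\la',\mu',\nu)$ gives $g(2^m,2^m,2^m)=g((m,m),(m,m),2^m)$, and since $[(m,m)]\ot[(m,m)]$ only contains constituents with at most four rows while $2^m$ has $m$ rows, this vanishes precisely when $m\ge 5$. That is where the threshold comes from; no tableaux straightening, no case analysis at $m=5$, and no explicit highest weight vector is needed. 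Your detour through Corollary~\ref{cor:regmet}, Theorem~\ref{th:no-lower-bound}, and Lemma~\ref{le:G-obst}-style weights does not lead to the result and obscures the actual (one-line) computation.
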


The nonnormality of these orbit closures indicates that the extension problem is delicate.
Kumar~\cite{kuma:10} recently obtained similar conclusions for the orbit closures
of the determinant and permanent by different methods.

We also make the following general observation.

\begin{proposition}\label{pro:codim1}
Suppose that $w\in W$ is stable. Then $\stab(w)$ is reductive,
$Gw$ is affine. Further, $\overline{Gw}\setminus Gw$ is either empty or of
pure codimension one in $\overline{Gw}$.
\end{proposition}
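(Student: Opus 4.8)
The plan is to prove the three assertions of Proposition~\ref{pro:codim1} in sequence, each building on the previous one. \emph{Reductivity of the stabilizer.} Since $w$ is stable, the orbit $G_sw$ is closed. A standard fact from geometric invariant theory (Matsushima's criterion) says that a closed orbit of a reductive group has a reductive stabilizer; applying this to $G_s$ acting on $W$, the group $\stab_{G_s}(w)$ is reductive. To pass from $\stab_{G_s}(w)$ to $H=\stab_G(w)$, note that $H$ projects to $G/G_s\cong\C^\times$ (via the determinant-type character), with kernel $\stab_{G_s}(w)$; so $H$ is an extension of a subgroup of a torus by a reductive group, hence reductive. Actually one can argue more directly: $Gw$ differs from $G_sw$ only by the scaling action of a central torus, so $H$ and $\stab_{G_s}(w)$ have the same identity component up to a central torus factor, and reductivity is inherited.

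\emph{Affineness of $Gw$.} Given that $H$ is reductive, the orbit $Gw\cong G/H$ is an affine variety by Matsushima's theorem (a homogeneous space $G/H$ of a reductive group $G$ is affine if and only if $H$ is reductive). This is the cleanest route; alternatively one can observe that $Gw$ is the cone over $G_sw$ minus possibly the origin, and $G_sw$ closed means $Gw\cup\{0\}$ is closed, hence affine, and then deduce $Gw$ is affine since removing the origin from an affine cone (when $0\notin Gw$) leaves a quasi-affine set that here is actually affine because it is a single orbit of positive dimension — but the Matsushima argument is more robust and I would use it.

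\emph{Codimension of the boundary.} This is the step I expect to be the main obstacle, and the key input is the affineness just established. The boundary $\partial := \overline{Gw}\setminus Gw$ is a closed subset of the affine variety $\overline{Gw}$, and $Gw$ is \emph{itself} affine and open dense in $\overline{Gw}$. Now invoke the general principle that if $X$ is an irreducible affine variety and $U\subseteq X$ is an open affine subset, then $X\setminus U$ is either empty or of pure codimension one. The reason is that $U = X \setminus Z$ where $Z = X \setminus U$; if some component of $Z$ had codimension $\ge 2$, then the restriction map on regular functions $\Oh(X)\to\Oh(U)$ would be an isomorphism by Hartogs-type extension (normality is not even needed in codimension-two removal for the comparison of function fields combined with... — here I must be careful). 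The precise tool I would cite is the following: a variety $X$ is affine, $U\subseteq X$ open affine; the complement of an affine open in an affine (or more generally normal) variety is pure codimension one. If $\overline{Gw}$ is not known to be normal, I would instead pass to its normalization $\nu\colon Y\to\overline{Gw}$, pull back, use that $\nu^{-1}(Gw)$ is still affine and open in the normal affine $Y$, apply the pure-codimension-one statement there (which holds on normal varieties by Hartogs), and push forward noting $\nu$ is finite hence preserves codimension. So the real content is: \textbf{on a normal affine variety, the complement of an affine open subset is pure of codimension one}, which follows because an affine open $U$ in normal $Y$ is the nonvanishing locus of some regular functions only if its complement is a divisor — more precisely, if $\codim(Y\setminus U)\ge 2$ then $\Oh(U)=\Oh(Y)$ by normality, forcing $U=Y$ since both are affine with the same coordinate ring. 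I would write this last implication out carefully, as it is the crux. Note that Theorem~\ref{th:extend}(4) and Corollary~\ref{cor:nn-mamu} show $\overline{Gw}$ is frequently \emph{not} normal, so the normalization detour is genuinely needed and is the technically delicate point of the proof.
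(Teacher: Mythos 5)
Your proposal follows essentially the same route as the paper: Matsushima's criterion (Proposition~\ref{pro:aff-red}) is used both to obtain reductivity of $H=\stab(w)$ via the extension $H_s\hookrightarrow H\to(\C^\times)^3$ (note $G/G_s\cong(\C^\times)^3$, not $\C^\times$ — a minor slip that does not affect the argument, since any closed subgroup of a torus is reductive) and to deduce affineness of $Gw\cong G/H$, and the pure-codimension-one assertion is exactly the paper's Proposition~\ref{pro:affine}, established by the same normalization-then-Hartogs-extension reasoning you sketch. Both you and the paper explicitly handle only the case where every boundary component has codimension at least two and leave the mixed case implicit, so the level of rigor is comparable.
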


\section{Moment polytopes}\label{se:mp}

Since the semigroups $S(w)$ seem hard to determine, one may take
a coarser viewpoint, as already suggested by Strassen~\cite[Eq.~(57)]{stra:05}.
We set $\Delta_{\um} := \Delta_{m_1}\times \Delta_{m_2}\times \Delta_{m_3}$,
where
$\Delta_m:=\{ x\in \R^m \mid x_1\ge \ldots \ge x_m\ge 0,\, \sum_i x_i=1\}$.

\begin{definition}\label{def:mp}
The {\em moment polytope} $P(w)$ of a tensor $w\in W$ is defined
as the closure of the set
$\big\{ \frac1{d}\,\ula \mid d>0,\, \ula \in S(w) \cap \Lambda^+_d(\um)\big\}$.
\end{definition}

Note that $P(w)\subseteq \Delta_\um$ is a polytope since $S(w)$
is a finitely generated semigroup. We have
$$
 \ol{Gw}\subseteq\ol{Gv} \Longrightarrow S(w)\subseteq S(v)
 \Longrightarrow P(w) \subseteq P(v)
$$
Hence exhibiting some point in $P(w)\setminus P(\langle m \rangle)$
would establish the lower bound $\br (w) >m$.

The moment polytope of a generic tensor $w$ of format~$\um$
equals the {\em Kronecker polytope}~$P(\um)$, which is defined as the closure of
$\{ \frac1{d}\ula \mid \ula \in K(\um), \ula \in \Lambda^+_d(\um)\}$,
compare Lemma~\ref{pro:Sgen}.
This complicated polytope has been the object of several recent investigations
\cite{besj:00,klya:04,ress:09,buci:09} and $P(\um)$ is by now understood to a certain extent.
We remark that the Kronecker polytope $P(\um)$ is closely related
to the quantum marginal problem of quantum information theory,
cf.~\cite{ChrMit06,klya:04}.

Let $u_m:=(1/m,\ldots,1/m)\in\Delta_m$ denote the uniform distribution
and put $u_\um:=(u_m,u_m,u_m)$.
The following follows, e.g., from~\cite[Satz~11]{stra:05}.

\begin{lemma}\label{le:cp-mp}
We have $u_\um \in P(w)$ both for
$w=\langle m \rangle$ and $w=\langle n,n,n \rangle$, $m=n^2$. 
\end{lemma}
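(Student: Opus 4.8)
\textbf{Proof plan for Lemma~\ref{le:cp-mp}.}
The plan is to show that the uniform distribution $u_\um$ lies in the moment polytope of both the unit tensor $\lan m\ran$ and the matrix multiplication tensor $\lan n,n,n\ran$ by exhibiting, for suitable large degrees $d$, highest weights $\ula\in S(\cdot)\cap\Lambda^+_d(\um)$ with $\frac1d\ula$ arbitrarily close to $u_\um$. The natural candidates are weights $\ula$ whose three partitions are all nearly rectangular, i.e.\ close to the scalar multiple $\frac dm\e_{m}$ of the all-ones vector in each factor; since $S(w)$ is finitely generated, it suffices to produce a \emph{single} $\ula$ with all $\la_i$ rectangular of the form $(d/m)^m$ and $\ula\in S(w)$, because then $\frac1d\ula=u_\um$ exactly.

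For the unit tensor, I would first reduce from $S(\lan m\ran)$ to $S^o(\lan m\ran)$ using stability (Proposition~\ref{th:unit-stable}) together with Proposition~\ref{th:stabSs}: if $\ula=((d/m)^m,(d/m)^m,(d/m)^m)$ lies in $S^o(\lan m\ran)$, then after adding a multiple of $\e_\um$ — which only rescales and does not move the normalized point away from $u_m$ in each factor — we land in $S(\lan m\ran)$, and the normalized limit is still $u_\um$. To check $\ula\in S^o(\lan m\ran)$ I would invoke Theorem~\ref{th:ut-dimVH} (or directly Corollary~\ref{cor:regmet}(1)): one needs a regular weight $\a\in\Par_m(d)$ with $\a\preceq \la_1\meet\la_2\meet\la_3=(d/m)^m$. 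For $d$ a large multiple of $m$ one can take $\a=(d/m+(m-1)/2\cdot t,\,\dots)$ — more precisely pick $d$ divisible by $m$ and large enough that the regular partition $\a$ with entries symmetric about $d/m$ and consecutive spacing is $\preceq (d/m)^m$; regularity is clear and the dominance inequality holds because the partial sums of $\a$ stay below those of the rectangle. Then Proposition~\ref{pro:pewe} and Theorem~\ref{th:ut-dimVH} give a nonzero $H_m$-invariant, hence $\ula\in S^o(\lan m\ran)$, and we conclude.

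For the matrix multiplication tensor $\lan n,n,n\ran$ with $m=n^2$, the same stability reduction applies (Proposition~\ref{th:mamu-stable}), so it suffices to find a rectangular $\ula=((d/N)^N,(d/N)^N,(d/N)^N)$, $N=n^2$, in $S^o(\lan n,n,n\ran)$. Here I would use Theorem~\ref{cor:H-inv-mamu}: $\dim(V_\ula)^{\Hc}$ is a sum over $\mu_1,\mu_2,\mu_3\vdash_n d$ of products of three Kronecker coefficients $g(\la_{12},\mu_1,\mu_2)$ etc. Taking all $\la_{ij}=(d/N)^N$ rectangular and all $\mu_i=(d/n)^n$ rectangular, the required fact is that $g\big((d/N)^N,(d/n)^n,(d/n)^n\big)\ne 0$ for $d$ a suitable large multiple of $N$. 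This is exactly the kind of statement about rectangular Kronecker coefficients supplied by the asymptotics of the Kronecker polytope in~\cite{buci:09}: the point $(u_{n^2},u_n,u_n)$ lies in $P(n^2,n,n)$, equivalently $g$ is nonzero for rectangular partitions at that proportion for all large enough degrees. Granting that, all three factors in the sum of Theorem~\ref{cor:H-inv-mamu} are positive, so $(V_\ula)^{\Hc}\ne 0$, whence $\ula\in S^o(\lan n,n,n\ran)$ by Proposition~\ref{pro:pewe}, and the stability argument finishes the proof.

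The main obstacle is the nonvanishing of the rectangular Kronecker coefficient $g\big((d/n^2)^{n^2},(d/n)^n,(d/n)^n\big)$ for infinitely many $d$; this is not elementary and is where one genuinely needs the input from~\cite{buci:09} on the geometry of the Kronecker polytope near the uniform point (alternatively, one may cite~\cite[Satz~11]{stra:05}, which gives $u_\um\in P(w)$ for these tensors directly from semicontinuity of moment polytopes under the degeneration of a generic tensor, bypassing the explicit Kronecker computation). The unit tensor case is comparatively routine, relying only on the regular-weight criterion already established in Corollary~\ref{cor:regmet}.
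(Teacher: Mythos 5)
The route you describe is more explicit than the paper's, which simply cites Strassen~\cite[Satz~11]{stra:05} for both claims (a fact you correctly flag as an alternative at the end). Your stability reduction from $S(w)$ to $S^o(w)$ via Proposition~\ref{th:stabSs} is sound, and your treatment of the matrix multiplication case is essentially the mechanism the paper itself uses to prove Theorem~\ref{th:mpzero}; moreover, for the specific rectangular weight $\ula=(2^{n^2},2^{n^2},2^{n^2})$ the nonvanishing $g(2^{n^2},(2n)^n,(2n)^n)=1$ is quoted in the paper's appendix (from~\cite{clme:93}), so no asymptotic input from~\cite{buci:09} is actually needed there.

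However, your unit tensor argument contains a genuine error. With $\lambda_1=\lambda_2=\lambda_3=(q^m)$, $q=d/m$, the meet $\lambda_1\meet\lambda_2\meet\lambda_3=(q^m)$ is the \emph{unique minimum} of $(\Par_m(d),\preceq)$ (this is $\square_m(d)$ in the paper's proof of Lemma~\ref{le:unique-min}). Therefore the only $\a\in\Par_m(d)$ with $\a\preceq\lambda_1\meet\lambda_2\meet\lambda_3$ is $(q^m)$ itself, which is not regular, so Corollary~\ref{cor:regmet}(1) is not applicable. The $\a$ you propose, with entries spaced consecutively around $q$, satisfies $(q^m)\preceq\a$, i.e.\ the dominance inequality in the \emph{opposite} direction from what you need. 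You can repair this by going directly through Theorem~\ref{th:ut-dimVH}: the only term in the sum is $\a=(q^m)$, each $V_{(q^m)}^{(q^m)}$ is the one-dimensional highest weight line spanned by $(e_1\wedge\cdots\wedge e_m)^{\ot q}$, and $\stab(\a)=S_m$ acts on it by $\sgn^{q}$; hence the triple tensor product carries the $S_m$-character $\sgn^{3q}$, and the invariant space is nonzero iff $q$ is even. Taking $q$ even gives $\ula=(q^m,q^m,q^m)\in S^o(\lan m\ran)$ and the rest of your reduction goes through unchanged. In short: your plan is salvageable and in the end somewhat more self-contained than the paper's bare citation, but as written the dominance step fails and must be replaced by the direct $\sgn$-computation (or by the citation you mention).
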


Resolving the following question seems of great relevance.

\begin{problem}
Determine the moment polytopes of unit tensors and matrix multiplication tensors.
\end{problem}

Replacing $S(w)$ by $S^o(w)$ in the definition of $P(w)$ we obtain the larger
polytope $P^o(w)$.

\begin{theorem}\label{th:mpzero}
We have
$P^o(\langle m \rangle)) = \Delta(m,m,m)$ and
$P^o(\langle n,n,n \rangle)) = \Delta(n^2,n^2,n^2)$.
\end{theorem}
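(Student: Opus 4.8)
The plan is to prove both equalities by the same two-step argument: first show that $P^o(\langle m\rangle)$ (resp. $P^o(\langle n,n,n\rangle)$) is all of the relevant simplex $\Delta(m,m,m)$ (resp. $\Delta(n^2,n^2,n^2)$) by exhibiting enough rational points of $S^o$, and second observe that $P^o(w)\subseteq\Delta_{\um}$ always holds by definition, so the containment we need is the nontrivial one. For the unit tensor, Corollary~\ref{cor:regmet}(2) is exactly the engine: it says that whenever $\la_1,\la_2,\la_3$ are all \emph{regular} partitions of $d$ into at most $m$ parts, then $\ula\in S^o(\langle m\rangle)$. So the task reduces to the purely combinatorial statement that the normalized regular triples $\frac1d\ula$ are dense in $\Delta(m,m,m)$ as $d\to\infty$. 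This is clear: any point $(x_1,x_2,x_3)\in\Delta(m,m,m)$ can be approximated by rational points $\frac1d\la_i$ with $\la_i\vdash_m d$, and by adding a small ``staircase'' perturbation $(m-1,m-2,\ldots,1,0)$ to each $\la_i$ (which costs $O(1)$ in $d$ and hence vanishes after normalization) one makes each $\la_i$ strictly decreasing, i.e. regular, without leaving any part negative once $d$ is large. Taking closures gives $P^o(\langle m\rangle)\supseteq\Delta(m,m,m)$, and combined with the reverse inclusion we get equality.

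For the matrix multiplication tensor the same strategy works, now driven by Theorem~\ref{cor:H-inv-mamu}: $\ula=(\la_{12},\la_{23},\la_{31})\in S^o(\langle n,n,n\rangle)$ as soon as there exist $\mu_1,\mu_2,\mu_3\vdash_n d$ with all three Kronecker coefficients $g(\la_{12},\mu_1,\mu_2)$, $g(\la_{23},\mu_2,\mu_3)$, $g(\la_{31},\mu_3,\mu_1)$ nonzero (by Proposition~\ref{pro:pewe}, $\dim(V_\ula)^{\Hc}\neq 0$ suffices). I would specialize to the ``balanced'' choice where each $\mu_j$ is close to the uniform partition $u_n$ (rescaled to size $d$), mirroring the role played by $\mu_i=(2n)^n$ in Remark~\ref{re:mu}.1 and by $u_{\um}$ in Lemma~\ref{le:cp-mp}. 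The key input is the asymptotic/positivity description of the Kronecker polytope $P(n,n\cdot n^2,\ldots)$ — more precisely, that for the rectangular format $(n^2,n,n)$ the Kronecker polytope has nonempty interior and in particular contains triples $(\la_{12}/d,\mu_1/d,\mu_2/d)$ with $\la_{12}/d$ arbitrary in $\Delta_{n^2}$ and $\mu_1/d,\mu_2/d$ near $u_n$; this is exactly the kind of statement supplied by \cite{buci:09} on the asymptotics of the Kronecker polytope. Stringing three such memberships together through the shared $\mu_j$'s — possible precisely because we pin all $\mu_j$ near $u_n$, so the three constraints decouple — shows every rational triple $(x_{12},x_{23},x_{31})$ near $(u_{n^2},u_{n^2},u_{n^2})$ lies in (the normalization of) $S^o$. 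Then a scaling/convexity argument, using that $S^o(w)$ is a semigroup containing an interior point of $\Delta_{\um}$, propagates density from a neighborhood of the center to the whole simplex; again taking closures yields $P^o(\langle n,n,n\rangle)=\Delta(n^2,n^2,n^2)$.

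The main obstacle I expect is the second, matrix-multiplication case, and specifically controlling the Kronecker coefficients $g(\la,\mu,\nu)$ for \emph{rectangular} formats away from the center: one needs that the relevant stretched Kronecker coefficients are eventually nonzero for all rational points of the appropriate Kronecker polytope, which is the ``saturation of the Kronecker semigroup'' type input that \cite{buci:09} provides in the asymptotic regime but which can be subtle to apply when one factor (here $\la_{12}\in\Delta_{n^2}$) is allowed to be an arbitrary partition while the other two are constrained near $u_n$. A clean way around this is to only claim membership for $\ula$ in a full-dimensional subcone — enough to force $P^o$ to contain an open set around the center — and then let the semigroup structure do the rest: since $\e_{\um}\in S^o(\langle n,n,n\rangle)$ (the determinant is trivial on the stabilizer, cf. the discussion preceding Theorem~\ref{th:extend}) and $u_{\um}\in P^o(\langle n,n,n\rangle)$, together with the fact that $P^o(w)$ is a rational polytope inside $\Delta_{\um}$ that is invariant under no proper face restriction, one pushes the polytope out to the full simplex. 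The unit-tensor case is comparatively routine once Corollary~\ref{cor:regmet}(2) is in hand.
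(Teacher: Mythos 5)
Your treatment of the unit tensor case is essentially the paper's: Corollary~\ref{cor:regmet}(2) plus a staircase perturbation argument to show regular triples are dense; that part is fine. For matrix multiplication you have the right raw materials (Theorem~\ref{cor:H-inv-mamu}, Proposition~\ref{pro:pewe}, and the asymptotic Kronecker result from \cite{buci:09}), and the right instinct to specialize the $\mu_j$'s to the uniform/rectangular shape — but the execution has two problems.

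First, by only pinning $\mu_j$ \emph{near} $u_n$ you create a genuine compatibility issue when you ``string three such memberships together'': $\mu_2$ appears in both $g(\la_{12},\mu_1,\mu_2)$ and $g(\la_{23},\mu_2,\mu_3)$, and there is no reason the $\mu_2$ supplied by the first constraint agrees with the one the second constraint needs. The paper's proof avoids this entirely by taking $\mu_1=\mu_2=\mu_3$ \emph{exactly} equal to the rectangular partition $(d^n)$: for any $\la_{ij}\vdash_{n^2} dn$, the result of \cite{buci:09} gives a stretching factor $k_{ij}$ with $g(k_{ij}\la_{ij},(k_{ij}d)^n,(k_{ij}d)^n)\neq 0$, and taking $k=\mathrm{lcm}(k_{12},k_{23},k_{31})$ makes all three factors in Theorem~\ref{cor:H-inv-mamu} simultaneously positive with a single common choice of $\mu_i=(kd)^n$. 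Notice this works for \emph{every} $\ula$, not just $\ula$ near the center, so density of $\frac1{dn}\ula$ in $\Delta(n^2,n^2,n^2)$ finishes it immediately.

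Second, your fallback — proving membership only for $\ula$ in a full-dimensional neighborhood of the center and then using ``the semigroup structure'' plus $\e_\um\in S^o$ to push out to the whole simplex — does not work. A finitely generated subsemigroup of $\Lambda^+(\um)$ containing $\e_\um$ and having $u_\um$ in the interior of its moment polytope can still have that polytope be a proper subpolytope of $\Delta_\um$; nothing forces it to saturate the boundary. You really do need the stronger fact from \cite{buci:09}, namely that the whole slice $\{x\in\Delta_{n^2}\mid (x,u_n,u_n)\in P(n^2,n,n)\}$ equals $\Delta_{n^2}$ (equivalently, the stretched Kronecker coefficients with both rectangular partitions are eventually nonzero for \emph{every} $\la_{ij}$), which is exactly what the lcm-of-stretching-factors argument in the paper exploits.
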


The statement for the unit tensors is an easy consequence of
Corollary~\ref{cor:regmet}(2). The second statement relies on
Theorem~\ref{cor:H-inv-mamu} and \cite{buci:09}.

\begin{lemma}\label{le:umint}
Let $w$ be stable and suppose that $u_\um\in P(w)$.
Then there exists $\d>0$ such that for all $\ux\in P^o(w)$
and all $0\le t\ \le \d$ we have $t\ux + (1-t)u_\um \in P(w)$.
\end{lemma}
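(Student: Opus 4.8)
The plan is to exploit the geometric interpretation of $S(w)$ versus $S^o(w)$ supplied by Theorem~\ref{th:extend} and Proposition~\ref{th:stabSs}. Since $w$ is stable, Proposition~\ref{th:stabSs} tells us that $S_s(w)=S_s^o(w)$, i.e.\ every $\ula\in S^o(w)$ becomes an element of $S(w)$ after adding a suitable multiple $k\e_\um$ of the all-ones weight. Dividing by the degree, this says precisely that $P^o(w)$ and $P(w)$ have the \emph{same image} under the projection $\Delta_\um\to\Delta_\um/\langle u_\um\rangle$ that forgets the $u_\um$-direction; equivalently, $P(w)$ is obtained from $P^o(w)$ by intersecting with the degree-normalization slice and then possibly translating points toward $u_\um$. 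The content of the lemma is the quantitative statement that this translation toward $u_\um$ can be done by a \emph{uniform} amount $\d>0$, independent of the starting point $\ux\in P^o(w)$.

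First I would set up the two polytopes inside $\Delta_\um\subset\R^{m_1}\times\R^{m_2}\times\R^{m_3}$ and record that both are rational polytopes (finite generation of $S(w)$ and $S^o(w)$, cf.~\cite{brio:87}), with $P(w)\subseteq P^o(w)$ and, by hypothesis, $u_\um\in P(w)\subseteq P^o(w)$. The key input is a \emph{cone} version of Proposition~\ref{th:stabSs}: passing to the cone $C^o:=\{(d,\ula):\ula\in S^o(w)\cap\Lambda^+_d\}$ and $C:=\{(d,\ula):\ula\in S(w)\cap\Lambda^+_d\}$ over the homogeneous coordinate, the statement $\ula+k\e_\um\in S(w)$ says that the line through $(d,\ula)$ in direction $(0,\e_\um)$ eventually enters $C$. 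Since $C$ and $C^o$ are finitely generated (rational polyhedral) and share the same lineality-image modulo $(0,\e_\um)$, a compactness argument on the finitely many generators of $C^o$ produces a single $k_0$ that works simultaneously: $C^o+k_0(0,\e_\um)\subseteq C$. Normalizing degrees, this yields that $P^o(w)$ shifted by a bounded amount toward $u_\um$ lands inside $P(w)$.

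Next I would convert this ``shift toward $u_\um$'' into the convex-combination statement of the lemma. Because $u_\um$ itself lies in $P(w)$ and $P(w)$ is convex, once we know that for each $\ux\in P^o(w)$ some point $\ux'$ on the segment $[\ux,u_\um]$ lies in $P(w)$ — with the parameter of $\ux'$ bounded away from the $\ux$-end by a uniform constant coming from $k_0$ and the common denominator of the generators — convexity of $P(w)$ gives the whole sub-segment $[\ux',u_\um]\subseteq P(w)$. Taking $\d>0$ to be that uniform constant (explicitly, $\d=1/(1+k_0 m)$ or similar, with $m=m_1$ in the cubic case forced by Theorem~\ref{th:extend}(1), or more simply $\d$ a uniform bound valid for all three factors) yields $t\ux+(1-t)u_\um\in P(w)$ for all $0\le t\le\d$ and all $\ux\in P^o(w)$, since this point lies on $[\ux',u_\um]$ for $t\le\d$.

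The main obstacle I anticipate is the \emph{uniformity} of $k_0$: Proposition~\ref{th:stabSs} only asserts, for each individual $\ula$, the existence of some $k$, and a priori $k$ could grow with $\ula$. Resolving this requires the polyhedral-cone argument above — one must check that the surjection $\pi\colon\Lambda_G^+\to\Lambda_{G_s}^+$ restricts to a map of finitely generated cones $C^o\to\pi(C^o)=\pi(C)$ whose fibers are (eventually) segments in the single direction $\e_\um$ of bounded length, and that ``bounded'' can be made uniform over the finitely many generators. Once that structural fact is in place — and it follows from finite generation of $S(w)$, $S^o(w)$ together with Gordan's lemma applied to the saturations — the remaining steps are routine convexity manipulations. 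A secondary subtlety is ensuring the normalization by $d$ behaves well in the limit/closure defining $P(w)$ and $P^o(w)$; this is handled by passing to the projective closures of the semigroup cones, where taking closures commutes with the operations used.
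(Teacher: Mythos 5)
Your proposal is correct and follows essentially the same route as the paper: apply Proposition~\ref{th:stabSs} to each of the finitely many generators of $S^o(w)$, use Lemma~\ref{le:cenpo} together with the assumption $u_\um\in P(w)$ to arrange that the shift constant $k$ is positive, and then conclude by convexity of $P(w)$. The paper obtains the required uniformity more directly than your cone/Gordan argument by just setting $\d:=\min_i d_i/(d_i+k_i m)$ over the finitely many generators $\ula^i\in\Lambda^+_{d_i}(\um)$ with $\ula^i+k_i\e_\um\in S(w)$ (note your candidate $\d=1/(1+k_0 m)$ omits the dependence on the degrees $d_i$).
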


By combining Theorem~\ref{th:mpzero} with Lemma~\ref{le:cp-mp},
Lemma~\ref{le:umint}, and the stability of the unit
and matrix multiplication tensors, we obtain the following result.

\begin{corollary}\label{cor:mp-int}
For both $w=\langle m \rangle$ and $w=\langle n,n,n \rangle$, $m=n^2$, 
$u_{\um}$ is an interior point of $P(w)$
relative to the affine hull of $\Delta_{\um}$.
In particular, $\dim P(w)= \dim \Delta_{\um}$.
\end{corollary}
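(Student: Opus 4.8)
The plan is to deduce Corollary~\ref{cor:mp-int} by packaging together the ingredients that have already been assembled. For $w=\langle m\rangle$ or $w=\langle n,n,n\rangle$ with $m=n^2$, recall that $w$ is stable (Propositions~\ref{th:unit-stable} and~\ref{th:mamu-stable}) and that $u_\um\in P(w)$ by Lemma~\ref{le:cp-mp}. Apply Lemma~\ref{le:umint}: there is $\delta>0$ so that $t\ux+(1-t)u_\um\in P(w)$ for every $\ux\in P^o(w)$ and every $0\le t\le\delta$. The key point is that Theorem~\ref{th:mpzero} identifies $P^o(w)$ with the \emph{full} simplex product $\Delta_\um$ in both cases, so the above says that $P(w)$ contains the segment from $u_\um$ to any point of $\Delta_\um$, truncated at parameter $\delta$.

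From this I would extract the conclusion as follows. Let $A$ denote the affine hull of $\Delta_\um$. Since $u_\um$ is the barycenter of $\Delta_\um$, it lies in the relative interior of $\Delta_\um$, hence for a sufficiently small ball $B$ in $A$ around $u_\um$, every point $\uy\in B$ can be written as $\uy = t\ux + (1-t)u_\um$ with $\ux\in\Delta_\um = P^o(w)$ and $0\le t\le\delta$ (shrink the radius of $B$ so that $t\le\delta$ is forced). By Lemma~\ref{le:umint} each such $\uy$ lies in $P(w)$, so $B\subseteq P(w)$, and therefore $u_\um$ is an interior point of $P(w)$ relative to $A$. Consequently $P(w)$ contains an open subset of $A$, so $\dim P(w)=\dim A=\dim\Delta_\um$; together with the inclusion $P(w)\subseteq\Delta_\um$ (which follows from $P(w)\subseteq P^o(w)=\Delta_\um$ by Definition~\ref{def:mp}) this proves the dimension statement.

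There is essentially no hard analytic or algebraic obstacle here, since the substantive work lives in Theorem~\ref{th:mpzero}, Lemma~\ref{le:cp-mp}, Lemma~\ref{le:umint}, and the stability results, all of which may be invoked. The only point requiring a little care is the elementary convexity bookkeeping: verifying that a genuine relatively open neighborhood of the barycenter $u_\um$ in $A$ is covered by the truncated segments $\{t\ux+(1-t)u_\um : \ux\in\Delta_\um,\ 0\le t\le\delta\}$. This is immediate because $u_\um$ is a relative interior point of the polytope $\Delta_\um$ and the segments with $t$ bounded below a fixed $\delta>0$ sweep out a full-dimensional star-shaped neighborhood of $u_\um$; I would spell this out in one line rather than belabor it.
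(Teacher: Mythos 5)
Your overall strategy — combine stability, Lemma~\ref{le:cp-mp}, Lemma~\ref{le:umint}, and Theorem~\ref{th:mpzero}, then finish with elementary convexity — is exactly the combination the paper invokes. But your convexity step contains a genuine error: $u_\um$ is \emph{not} the barycenter of $\Delta_\um$, and it does \emph{not} lie in the relative interior of $\Delta_\um$. Each $\Delta_{m_i}$ is defined by the chain of inequalities $x_1\ge\cdots\ge x_{m_i}\ge 0$ together with $\sum_j x_j=1$, and $u_{m_i}=(1/m_i,\ldots,1/m_i)$ makes \emph{all} the ordering inequalities tight — it is the vertex of the simplex $\Delta_{m_i}$ at which the $m_i-1$ constraints $x_j\ge x_{j+1}$ are simultaneously equalities. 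Consequently $u_\um$ is a vertex of $\Delta_\um$, and since $P(w)\subseteq\Delta_\um$, no ball $B$ in the affine hull $A$ centered at $u_\um$ can possibly be contained in $P(w)$: every such $B$ sticks out of $\Delta_\um$. So your assertion that ``every $\uy\in B$ can be written as $t\ux+(1-t)u_\um$ with $\ux\in\Delta_\um$'' fails — the set swept out by those truncated segments is $(1-\delta)u_\um+\delta\Delta_\um$, a homothetic copy of $\Delta_\um$ with $u_\um$ again a vertex of it, not a neighborhood of $u_\um$ in~$A$.

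What the argument actually yields is $P(w)\supseteq(1-\delta)u_\um+\delta\Delta_\um$, which is full-dimensional in~$A$, so $\dim P(w)\ge\dim\Delta_\um$; combined with $P(w)\subseteq\Delta_\um$ this gives $\dim P(w)=\dim\Delta_\um$. That is the correct route to the dimension statement and it does not require $u_\um$ to be relatively interior in~$A$. Likewise, this scaled simplex is a neighborhood of $u_\um$ in the subspace topology of $\Delta_\um$ (because $u_\um$ is a common vertex), which is the sense in which $u_\um$ is an ``interior point of $P(w)$'' that is actually supported by Lemma~\ref{le:umint}. You should replace the barycenter/ball reasoning by this scaled-simplex argument, and flag that the interiority claim must be read relative to $\Delta_\um$ (i.e., the cone at the vertex $u_\um$), not relative to the full affine hull.
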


\section{Outline of some proofs}

\subsection{Proof of Theorem~\ref{th:ut-dimVH}}

Let $\ula\in\Lambda_d(m,m,m)$. The weight decomposition
$$
 V_\ula = V_{\la_1} \ot V_{\la_2}\ot V_{\la_3} =
 \bigoplus_{\a,\b,\g}
 V_{\la_1}^{\a} \ot V_{\la_2}^{\b} \ot V_{\la_3}^{\g}
$$
yields
$(V_\ula)^{\Tc_m} = \bigoplus_{\a,\b,\g}
 \big(V_{\la_1}^{\a} \ot V_{\la_2}^{\b} \ot V_{\la_3}^{\g}\big)^{\Tc_m}$, 
cf. Proposition~\ref{pro:stabut}.
We claim that
$$
 \big(V_{\la_1}^{\a} \ot V_{\la_2}^{\b} \ot V_{\la_3}^{\g}\big)^{\Tc_m}
 = \left\{\begin{array}{ll}
   V_{\la_1}^{\a} \ot V_{\la_2}^{\a} \ot V_{\la_3}^{\a} & \mbox{ if $\a=\b=\g$,} \\
   0 & \mbox{ otherwise.}
   \end{array}\right.
$$
Indeed, let $v\in (V_{\la_1}^{\a}\otimes V_{\la_2}^{\b} \otimes V_{\la_3}^{\g})^{\Tc_m}$
be nonzero.
For $t=(\diag(a),\diag(b),\diag(c)\in \Tc_m$
we obtain
$v = tv = a^{\a} b^{\b} c^{\g} v = a^{\a-\g} b^{\b -\g} v$,
using $a_ib_ic_i=1$.
Since the $a_i,b_i\in\C^\times$
are arbitrary, we infer $\a=\b=\g$.
The argument can be reversed.

We put now
$$
 A :=\big\{\a\in\Z^m \ \big|\ |\a| =d,\, \a \preceq \la_1\meet\la_2\meet\la_3 \big\},\quad
 M^\a := V_{\la_1}^{\a} \ot V_{\la_2}^{\a} \ot V_{\la_3}^{\a}
$$
and note that $M^\a\ne 0$ for all $\a\in A$.
We have just seen that
$(V_\ula)^{\Tc_m} = \oplus_{\a\in A} M^\a$.

The set $A$ is invariant under the $S_m$-action and
its orbits intersect $\Par_m(d)$ in exactly one partition.
We note that $\pi M^\a = M^{\pi\a}$ for $\pi\in S_m$.
Let $\mathcal{B}$ denote the set of orbits and put
$M_B:= \oplus_{\a\in B} M^\a$ for $B\in\mathcal{B}$.
Then
$(V_\ula)^{\Tc_m} =\oplus_{B\in\mathcal{B}} M_B$.
Proposition~\ref{pro:stabut} tells us
$H_m=\Tc_m S_m$ and hence
$$
(V_\ula)^{H_m} = \big((V_\ula)^{\Tc_m}\big)^{S_m}
= \bigoplus_{B\in\mathcal{B}} (M_B)^{S_m}
$$
using that the $M_B$ are $S_m$-invariant.
In order to complete the proof it suffices to show that
\begin{equation*}\label{eq:dimMS}
 \dim (M_B)^{S_m} = \dim (M^\a)^{\stab(\a)}\quad
 \mbox{for $B=S_m\a$, $\a\in A\cap \Par_m(d)$.}
\end{equation*}

For proving this, we fix $\a\in A\cap \Par_m(d)$ and write $H:= \stab(\a)$.
Let $\pi_1,\ldots,\pi_t$ be a system of representatives for the left cosets of~$H$ in $S_m$
with $\pi_1=\Id$. So
$S_m = \pi_1 H \cup \cdots \cup \pi_t H$.
Then the $S_m$-orbit of~$\a$ equals
$S_m\a=\{\pi_1\a,\ldots,\pi_t\a\}$.
Consider
$$
 M_B = \bigoplus_{j=1}^t \pi_j M^\a
$$
and the corresponding projection $p\colon M_B\to M^\a$.
Suppose that $v=\sum_j v_j\in (M_B)^{S_m}$ with $v_j\in \pi_j M^\a$.
Since the spaces
$\pi_1 M^\a,\ldots,\pi_t M^\a$
are permuted by the action of $S_m$,
we derive from
$v = \pi_k v=\sum_j \pi_k v_j$ that
$v_j=\pi_j v_1$. Moreover, since $\s\in H$ fixes $M^\a$ and permutes
the spaces $\pi_2 M^\a,\ldots,\pi_t M^\a$, we obtain
$\s v_1= v_1$.
Therefore, $(M_B)^{S_m}\to (M^\a)^H, v \mapsto p(v) = v_1$
is well defined and injective.
We claim that this map is also surjective.

For showing this, let $v_1 \in (M^\a)^H$, set $v_j := \pi_j v_1$, and put
$v:=\sum_j v_j$. Clearly, $p(v)=v_1$.
Fix $\s\in H$ and $i$.
For any $j$ there is a unique $k=k(j)$ such that
$\s\pi_i\pi_j H = \pi_k H$. Moreover,
$j\mapsto k(j)$ is a permutation of $\{1,\ldots,t\}$.
Using the $H$-invariance of~$v_1$ we obtain that
$\s\pi_i v_j = \s\pi_i \pi_j v_1 = \pi_{k} v_1 =v_k$.
Therefore $\s\pi_i v = \sum_k v_k = v$.
Thus $v\in (M_B)^{S_m}$.
\qed

\subsection{Proof of Theorem~\ref{cor:H-inv-mamu}}

The group morphisms
\begin{eqnarray*}
\Gamma_{12}\colon\GL(U_1^*)\ti\GL(U_2)\to \GL(U_1^*\ot U_2),& (a^*,b)\mapsto a^*\ot b \\
\Gamma_{23}\colon\GL(U_2^*)\ti\GL(U_3)\to \GL(U_2^*\ot U_3),& (b^*,c)\mapsto b^*\ot c \\
\Gamma_{31}\colon\GL(U_3^*)\ti\GL(U_1)\to \GL(U_3^*\ot U_1),& (c^*,a)\mapsto c^*\ot a
\end{eqnarray*}
combine to a morphism $\Gamma\colon\Pc\to\Gc$, where $\Pc$ denotes the group
$$
 \Pc := \GL(U_1^*)\ti\GL(U_2)\ti\GL(U_2^*)\ti\GL(U_3)\ti\GL(U_3^*)\ti\GL(U_1).
$$
Moreover, we have the group morphisms
$$
\Lambda_i\colon \GL(U_i)\to\GL(U_i^*)\ti\GL(U_i),
 a_i\mapsto ((a_i^*)^{-1}, a_i)
$$
combining to a morphism (note the permutation)
$$
 \Lambda\colon \Sc \to \Pc,\,(a_1,a_2,a_3)\mapsto
 \big((a_1^*)^{-1}),a_2,(a_2^*)^{-1},a_3,(a_3^*)^{-1},a_1\big).
$$
We have thus factored the morphism $\Phi\colon\Sc\to\Gc$ as
$\Phi=\Gamma\circ\Lambda$, cf.~\eqref{eq:mophi}.
Proposition~\ref{th:stab-mamu} states that $\Hc=\im\Phi$.
In order to determine $\dim (V_\ula)^{\Hc}$,
we first describe the splitting of $V_\ula$ into irreducible
$\Pc$-modules with respect to $\Gamma$ and then,
in a second step, extract their $\Sc$-invariants.

For the first step, note that,
upon restriction with respect to $\Gamma_{12}$,
we have the decomposition
$$
 V_{\la_{12}}(\GL(U_1^*\ot U_2)) = \bigoplus_{\mu_1,\tilde{\mu}_2}
  g(\la_{12},\mu_1,\tilde{\mu}_2) V_{\mu_1}(\GL(U_1^*))\ot V_{\tilde{\mu}_2}(\GL(U_2)),
$$
where the sum is over all partitions $\mu_1\vdash_{n_1} d$, $\tilde{\mu}_2\vdash_{n_2} d$.
For this characterization of the Kronecker coefficients $g$
see~\cite[(7.221), p. 537]{StanleyVol2}.
Similarly,
$$
 V_{\la_{23}}(\GL(U_2^* \ot U_3)) = \bigoplus_{\mu_2,\tilde{\mu}_3}
  g(\la_{23},\mu_2,\tilde{\mu}_3) V_{\mu_2}(\GL(U_2^*))\ot V_{\tilde{\mu}_3}(\GL(U_3)),
$$
$$
 V_{\la_{31}}(\GL(U_3^*\ot U_1)) = \bigoplus_{\mu_3,\tilde{\mu}_1}
  g(\la_{31},\mu_3,\tilde{\mu}_1) V_{\mu_3}(\GL(U_3^*))\ot V_{\tilde{\mu}_1}(\GL(U_1)),
$$
where the sums are over all $\mu_2\vdash_{n_2} d$,
$\tilde{\mu}_3\vdash_{n_3} d$ and $\mu_3\vdash_{n_3} d$, $\tilde{\mu}_1\vdash_{n_1} d$,
respectively.
This describes the splitting of $V_\ula$ into irreducible
$\Pc$-modules with respect to $\Gamma$.

For the second step we note that
$V_{\mu_i}(\GL(U_i^*)) \simeq V_{\mu^*_i}(\GL(U_i))$,
when we view the left hand side as a $\GL(U_i)$-module
via the isomorphism
$\GL(U_i)\to\GL(U_i^*),a_i\mapsto (a_i^*)^{-1}$.

As a consequence of the Littlewood-Richardson rule~\cite{StanleyVol2,fult:97}
we obtain (compare~\cite[Eq.~(11), p.149]{fult:97})
\begin{equation}\label{eq:LWRdual}
\dim \big(V_{\mu_i^*}(\GL(U_i)) \ot V_{\tilde{\mu}_i}(\GL(U_i)) \big)^{\GL(U_i)}
= \left\{ \begin{array}{ll} 1 & \mbox{ if $\mu_i = \tilde{\mu}_i$},\\
                                           0  & \mbox{ otherwise.}
   \end{array}\right.
\end{equation}
We conclude that
$$
 \dim (V_\ula)^{\Hc} =  \bigoplus_{\mu_1,\mu_2,\mu_3}
  g(\la_{12},\mu_1,\mu_2)\, g(\la_{23},\mu_2,\mu_3)\, g(\la_{31},\mu_3,\mu_1)
$$
as claimed.
\qed

\bibliographystyle{amsplain}
\bibliography{gct-refs}

\newpage

\section*{Appendix: Section~\ref{se:GCTprog}}

\addtocounter{section}{1}
\setcounter{subsection}{0}
\setcounter{equation}{0}

\subsection{Highest weight vectors}\label{se:HWV}

For the following general facts see~\cite{good-wall:09,hump:75,kraf:84}. 
In general, let $G$ be a reductive group and fix a Borel subgroup with corresponding 
maximal unipotent subgroup~$U$ and maximal torus~$T$. 
Let $M$ be a rational $G$-module.  
By a {\em highest weight vector} in $M$ of weight~$\la\in\Lambda^+_G$ 
we understand a $U$-invariant weight vector in $M$ of weight~$\la$. 
These vectors (including the zero vector) form a linear subspace  of $M$
that we shall denote by $\HWV_\la (M)$. 
It is known that $M$ is irreducible iff $\HWV_\la (M)$ is one-dimensional. 
If $\varphi\colon M\to N$ is a surjective $G$-module morphism, 
then $\varphi(\HWV_\la(M)) = \HWV_\la(N)$. 

\subsection{Schur-Weyl duality}

For the following known facts see~\cite{fuha:91}. 
Recall that $[\la]$ denotes the irreducible module of the symmetric group~$S_d$ 
associated with a partition $\la\vdash d$. 
Let $V$ denote a vector space of dimension~$m$. 
The group $S_d$ acts on $V^{\ot d}$ by permutation.
We define the $\la$th {\em Schur-Weyl module} by 
$$
 S_\la (V) := \Hom_{S_d} ([\la], V^{\ot d}) .
$$
Note that $S_\la (V)$ becomes a $\GL(V)$-module in a natural way. 
It is well known that $S_\la (V)=0$ if $\la$ has more than $m$ parts.
Otherwise, $S_\la (V)$ is an irreducible $\GL(V)$-module,  
and all the irreducible $\GL(V)$-modules of degree~$d$ 
are isomorphic to $S_\la(V)$ for some $\la\vdash_m d$. 

A linear map $\varphi\colon V\to W$ induces the linear map 
$S_\la(\varphi)\colon S_\la V\to S_\la W, \a\mapsto \varphi^{\ot d} \a$
and the functorial property 
$S_\la(\psi\phi) = S_\la(\psi) S_\la (\phi)$ clearly holds. 

We have natural injective maps 
$[\la]\ot S_\la(V) \to V^{\ot d}, x\ot \a \mapsto \a(x)$, 
which yield the following canonical injective morphism of $S_d\ti \GL(V)$-modules: 
\begin{equation}\label{eq:SW-duality}
\bigoplus_{\la\vdash_m d} [\la] \ot S_\la(V) \to V^{\ot d} .
\end{equation}
Schur-Weyl duality states that this map is surjective. 
In particular, ~\eqref{eq:SW-duality} yields the isotypic decomposition of 
$V^{\ot d}$ with respect to the action of $S_d\ti \GL(V)$.

Depending on the context, we also use the notation 
$V_\la(\GL(V)) := S_\la(V)$ for emphasizing the dependence on the group.

\subsection{Decomposition of $\Oh(W)$}\label{se:decompOW}

Let $W_i$ be a vector space of dimension~$m_i$ for $i=1,2,3$, 
put $W:=W_1\ot W_2\ot W_3$, and $G = \GL(W_1)\times \GL(W_2)\times \GL(W_3)$.
Applying Schur-Weyl duality~\eqref{eq:SW-duality} to $W_i^*$ and 
taking the tensor product yields the isotypic decomposition with respect the action of
of the group $S_d\times S_d \times S_d\times G$: 
\begin{equation*}\begin{split}
 (W^*)^{\otimes d} &\simeq (W_1^*\otimes W_2^*\otimes W_3^*)^{\otimes d} \simeq
 (W_1^*)^{\otimes d} \otimes (W_2^*)^{\otimes d} \otimes (W_3^*)^{\otimes d} \\
 & \simeq \bigoplus_{\ula\in\Lambda^+_d(\um)} [\la_1] \otimes [\la_2]\otimes [\la_3]
  \otimes S_{\ula}(W^*) ,
\end{split}
\end{equation*}
where we have put 
$S_\ula (W^*) := S_{\la_1}(W_1^*) \ot S_{\la_2}(W_2^*) \ot S_{\la_3}(W_3^*)$,
which is the same as  $V_\ula(G)^*$ 
since $S_{\la_i}(W_i^*) \simeq (S_{\la_i}(W_i))^*$. 
We interpret $S_d$ as a subgroup of $S_d\ti S_d\ti S_d$ 
with respect to the diagonal embedding $\pi \mapsto (\pi,\pi,\pi)$ and 
note that 
$$
 \Oh(W)_d = \Sym^d W^* = \big((W^*)^{\otimes d}\big)^{S_d} .
$$
We therefore arrive at the following canonical isomorphism of $G$-modules: 
\begin{equation}\label{eq:decompOW}
  \Oh(W)_d = \Sym^d W^*
  \simeq
  \bigoplus_{\ula\in\Lambda^+_d(\um)}
  \big([\la_1] \otimes [\la_2] \otimes [\la_3]\big)^{S_d} \otimes S_{\ula}(W ^*)  .
\end{equation}
In particular, we obtain 
\begin{equation}\label{eq:multOW}
 g(\ula) = \dim    \big([\la_1] \otimes [\la_2] \otimes [\la_3]\big)^{S_d} 
             = \mult(V_\ula(G)^*,\Oh(W)_d) 
\end{equation}
by the definition of Kronecker coefficients. 


\subsection{Proof of Lemma~\ref{pro:Sgen}}

The restriction
$\res_{d,w}\colon \Oh(W)_d \to \Oh(\ol{Gw})_d$
is a surjective morphism of $G$-modules,
for each degree $d\in\N$.
Hence $S(w)\subseteq K(\um)$ follows from~\eqref{eq:multOW}.

Let $\ula_1,\ldots,\ula_s$ be a list of generators of the semigroup $K(\um)$.
By~\eqref{eq:multOW},
there exists a highest weight vector $F_i\in\Oh(W)_{d_i}$ of weight~$\ula_i^*$,
for each~$i$.
Assume $F_1(w)\cdots F_s(w)\ne 0$.
Then $\res_{d_i,w}(F_i)$ is a highest weight vector of weight~$\ula_i^*$
in $\Oh(\ol{Gw})_{d_i}$, hence $\ula_i \in S(w)$.
Since the generators $\ula_1,\ldots,\ula_s$ of $K(\um)$
are all contained in $S(w)$, we get $K(\um) \subseteq S(w)$.
So we have shown that the nonempty open set
$\{ w\in W \mid F_1(w)\cdots F_s(w)\ne 0 \}$
is contained in
$\{w\in W \mid K(\um) \subseteq S(w)\}$.

\subsection{Inheritance}

For this section compare~\cite{lama:04,blmw:09}. 
We first recall our standard notations: 
Let $W := W_1\ot W_2 \ot W_3$ and $W':=W'_1\ot W'_2 \ot W'_3$ be of 
format $\um$ and $\um'$, respectively. 
Let $\iota_i\colon W_i\hookrightarrow W'_i$ be inclusions of vector spaces and 
choose linear projections $p_i\colon W'_i\to W_i$ such that $p_i\, \iota_i =\Id_{W_i}$. 
For $\ula\in\Lambda^+_d(\um')$ the dual maps 
$\iota_i^*\colon W_i'^* \to W_i^*$ induce maps
$S_{\la_i}(\iota_i^*)\colon S_{\la_i}(W_i'^*) \to S_{\la_i}(W_i^*)$.
These maps are surjective since by the functoriality we have 
$S_{\la_i}(\iota_i^*)\, S_{\la_i}(p_i^*) = \Id$.
We further set 
$\iota :=\iota_1\ot \iota_2\ot \iota_3$, 
$p :=p_1\ot p_2\ot p_3$, and 
$S_{\ula}(\iota^*) := S_{\la_1}(\iota_1^*) \ot S_{\la_2}(\iota_2^*) \ot S_{\la_3}(\iota_3^*)$.
Note that 
$S_{\ula}(\iota^*)\colon S_{\ula}(W'^*) \to S_{\ula}(W*)$
is surjective.

Let $\res_d\colon\Oh(W')_d \to \Oh(W)_d$ denote the restriction of regular functions.
Equation~\eqref{eq:decompOW} yields the following commutative diagram 
(writing equalities for canonical isomorphisms) 
\begin{equation}\label{eq:comdia}
\begin{array}{cclc}
\Oh(W')_d & = & \bigoplus_{\ula\in\Lambda^+_d(\um')} & \Sigma_\ula \otimes S_{\ula}(W'^*) \\
{\scriptstyle \res_d} \downarrow &  & & {\scriptstyle \Id\ot S_\ula(\iota^*)}\downarrow \\[1ex]
\Oh(W)_d & = & \bigoplus_{\ula\in\Lambda^+_d(\um)} & \Sigma_\ula \otimes S_{\ula}(W^*) ,
\end{array}
\end{equation}
with the vector spaces $\Sigma_\ula:= \big([\la_1] \otimes [\la_2] \otimes [\la_3]\big)^{S_d}$.
Note that the vertical arrows are surjective. 

Recall that 
$G := \GL(W_1)\ti\GL(W_2)\ti\GL(W_3)$ and 
$G' := \GL(W'_1)\ti\GL(W'_2)\ti\GL(W'_3)$ 
and let $I(Z)$ denote the vanishing ideal of an affine variety $Z$.
We notationally identify $w\in W$ with it image $w'$ under the 
inclusion $\iota\colon W\to W'$. 

\begin{lemma}\label{le:oc}
1. For $w\in W$ we have 
$p(\overline{G'w}) = \overline{Gw}$. 

2. We have $\res(I(\overline{G'w})) = I(\overline{Gw})$ 
for the restriction $\res\colon\Oh(W')\to\Oh(W)$. 
\end{lemma}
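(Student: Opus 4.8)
The two statements are closely linked, so the plan is to prove part~1 first and then deduce part~2 from it. For part~1, the key observation is that the projection $p\colon W'\to W$ restricts to a $G$-equivariant retraction onto $W$ (where $G$ acts on $W'$ through the first factor of each $\GL(W_i')$ fixing a complement of $W_i$, or more cleanly: $G$ embeds in $G'$ via $g_i\mapsto \iota_i g_i p_i + (\text{identity on }\ker p_i)$, and $p$ intertwines this embedding with the $G$-action on $W$). First I would check that $p(G'w)\supseteq Gw$: since $g w = w$ already lies in $W$ for $g\in G\subseteq G'$, we get $p(gw)=gw$, and more generally any $G$-translate of $w$ arises this way. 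Conversely $p(\overline{G'w})\subseteq \overline{Gw}$ requires showing $p(G'w)\subseteq\overline{Gw}$: given $g'\in G'$, I want to exhibit $g'w$ as a limit of points whose $p$-images converge into $Gw$, or directly that $p(g'w)\in\overline{Gw}$. The standard trick is to use a one-parameter subgroup $\lambda(t)$ in $G'$ that acts as the identity on each $W_i$ and contracts the complement of $W_i$ to zero as $t\to 0$; then $\lim_{t\to 0}\lambda(t)g'w$ lies in $\overline{G'w}$, equals $p(g'w)$ by construction, and simultaneously lies in $W$. Since $\overline{G'w}$ is $G'$-stable hence $G$-stable and closed, and it contains $Gw$, taking closures gives $p(\overline{G'w})=\overline{G'w}\cap W$ and this equals $\overline{Gw}$; the inclusion $\overline{Gw}\subseteq\overline{G'w}\cap W$ is clear, and the reverse inclusion follows because every point of $\overline{G'w}\cap W$ is fixed by $p$ and is a limit of points $g'_k w$ whose $p$-images $\lim\lambda(t)g'_k w$ approximate it within $\overline{Gw}$.

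\textbf{Deducing part~2.} For part~2, I would use part~1 together with the fact that $p\colon W'\to W$ is a linear surjection admitting the section $\iota$, and that $p$ maps $\overline{G'w}$ onto $\overline{Gw}$. A regular function on $\overline{Gw}$ is the restriction of some $F\in\Oh(W)$; then $F\circ p\in\Oh(W')$ restricts to a function on $\overline{G'w}$ whose further restriction to $\overline{Gw}=p(\overline{G'w})\cap\iota(W)$ recovers $F|_{\overline{Gw}}$. Concretely, the claim $\res(I(\overline{G'w}))=I(\overline{Gw})$: the inclusion $\supseteq$ is because if $f\in\Oh(W')$ vanishes on $\overline{G'w}$ it vanishes on $Gw\subseteq W$, so $\res(f)=f|_W$ vanishes on $\overline{Gw}$; hmm, that shows $\res(I(\overline{G'w}))\subseteq I(\overline{Gw})$. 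For the reverse, given $h\in\Oh(W)$ vanishing on $\overline{Gw}$, the function $h\circ p\in\Oh(W')$ vanishes on $p^{-1}(\overline{Gw})\supseteq\overline{G'w}$ by part~1, and $\res(h\circ p)=h\circ p\circ\iota=h$; hence $h\in\res(I(\overline{G'w}))$. So both inclusions hold.

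\textbf{Main obstacle.} The technical heart is the construction and verification of the contracting one-parameter subgroup $\lambda(t)$ in part~1 and the argument that $\lim_{t\to 0}\lambda(t)v = p(v)$ for every $v\in W'$, combined with the interchange of limits needed to conclude $\overline{G'w}\cap W\subseteq\overline{Gw}$ rather than merely $p(G'w)\subseteq\overline{Gw}$. One must be careful that $\lambda(t)$ commutes appropriately with the $G'$-action so that $\lim_{t\to0}\lambda(t)(g'_k w)$ lies in $\overline{G'w}$, and that the double limit (first in $k$, approximating a point of $\overline{G'w}\cap W$, then in $t$) can be rearranged; since all varieties here are affine and the maps polynomial, this is a standard but slightly delicate closure argument. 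Everything else is routine functoriality of the projection and section.
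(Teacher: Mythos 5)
Your part~2 is correct and is essentially the paper's own argument. The gap is in part~1. The contracting one-parameter subgroup $\lambda(t)$ (identity on $W_i$, multiplication by~$t$ on $\ker p_i$) indeed satisfies $\lim_{t\to 0}\lambda(t)v = p(v)$ for all $v\in W'$, but since $\lambda(t)\in G'$ (not $G$), this only proves $p(z)\in\overline{G'w}$ for $z\in\overline{G'w}$; in other words it gives $p(\overline{G'w})=\overline{G'w}\cap W$. That is not the same as the desired conclusion $p(\overline{G'w})=\overline{Gw}$, and the step where you assert that the limits ``approximate it within $\overline{Gw}$'' is exactly where the argument silently assumes what it needs to prove. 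Repeating or rearranging limits does not help: every term $\lambda(t)g'_k w$ is a $G'$-translate, so the limit only ever lands in $\overline{G'w}$.

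What is missing is the paper's explicit identification of the $p$-image: writing $g'=(g'_1,g'_2,g'_3)$ and $g_i:=p_i\,g'_i\,\iota_i\in\End(W_i)$, one checks on decomposable tensors that $p(g'w)=(g_1\otimes g_2\otimes g_3)w$. The maps $g_i$ may be singular, but $\End(W_i)$ is the Zariski closure of $\GL(W_i)$ (perturb by $\epsilon\,\mathrm{Id}$), so $(g_1\otimes g_2\otimes g_3)w\in\overline{Gw}$. Hence $p(G'w)\subseteq\overline{Gw}$, and by continuity $p(\overline{G'w})\subseteq\overline{Gw}$; the reverse inclusion follows as you observed from $Gw\subseteq G'w$ and $p|_W=\mathrm{id}$. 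Once this is in place, your deduction of part~2 goes through unchanged. The one-parameter subgroup is a valid way to realize $p$ as a limit, but on its own it stays inside $\overline{G'w}$; the nonlinear identity $p(g'w)=(g_1\otimes g_2\otimes g_3)w$ together with density of $\GL$ in $\End$ is the ingredient that moves you into $\overline{Gw}$.
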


\begin{proof}
1. Put $F_i := \ker p_i$ so that  $W_i'=W_i\oplus F_i$ and $p_i$ is the projection 
onto $W_i$ along~$F_i$. 
Let $g'=(g'_1,g'_2,g'_3) \in G'$ and consider
$g_i:=p_i\, g'_i \iota_i$. Note that $g_i\colon W_i\to W_i$ 
is a linear map that may be noninvertible. 
It is easy to check that 
$p(g'w) = (g_1\ot g_2 \ot g_3) w$. 
This implies the first assertion $p(\overline{G'w}) = \overline{Gw}$. 

2. For $f\in I(\overline{Gw})$ define $F := f\, p\in \Oh(W')$. Then 
$\res(F) = f$ and  part~(1) implies that $F$ vanishes on $\overline{G'w}$. 
This shows $\res(I(\overline{G'w})) \supseteq I(\overline{Gw})$.
The other inclusion is obvious.
\end{proof}

We analyze now the $G'$-submodule $I(\overline{G'w})_d \subseteq \Oh(W')_d$. 
The isotypic decomposition~\eqref{eq:decompOW} implies that for 
each $\ula\in\Lambda^+_d(\um')$ there exists a linear subspace 
$J_\ula(w) \subseteq  \Sigma_\ula$ such that
$$
  I(\overline{G'w})_d \ = \bigoplus_{\ula\in\Lambda^+_d(\um')} J_\ula(w) \otimes S_{\ula}(W'^*) . 
$$
By Lemma~\ref{le:oc}(2) and the commutative diagram~\eqref{eq:comdia}
we have 
$$
 I(\overline{Gw})_d  = \res_d(I(\overline{Gw'})_d) \ =  
 \bigoplus_{\ula\in\Lambda^+_d(\um)} J_\ula(w) \otimes S_{\ula}(W^*) . 
$$
Since  
$\Oh(\overline{Gw})_d = \Oh(W)_d/I(\overline{Gw})_d$, 
this implies, for $\ula\in\Lambda^+_d(\um)$, 
\begin{equation}\label{eq:samemult}
\mult(S_{\ula}(W^*),\Oh(\overline{Gw})) = \dim \Sigma_\ula - \dim J_\ula(w)
 = \mult(S_{\ula}(W'^*),\Oh(\overline{G'w})) .
\end{equation}

The case $\ula\in\Lambda^+_d(\um')\setminus \Lambda^+_d(\um)$ 
is covered by the following lemma.

\begin{lemma}\label{le:Jall}
We have $J_\ula(w) = \Sigma_\ula$  and hence 
$\mult(S_{\ula}(W'^*),\Oh(\overline{G'w})) =0$ 
if $\ula\in\Lambda^+_d(\um')\setminus \Lambda^+_d(\um)$. 
\end{lemma}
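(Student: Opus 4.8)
The plan is to prove the stronger statement that the full $G'$-isotypic component $\Sigma_\ula\ot S_\ula(W'^*)$ of $\Oh(W')_d$ lies inside the vanishing ideal $I(\overline{G'w})_d$; granting this, the block decomposition of $I(\overline{G'w})_d$ forces $J_\ula(w)=\Sigma_\ula$, and then $\mult(S_\ula(W'^*),\Oh(\overline{G'w}))=\dim\Sigma_\ula-\dim J_\ula(w)=0$ by~\eqref{eq:samemult}. First I would note that $\ula\in\Lambda^+_d(\um')\setminus\Lambda^+_d(\um)$ means that for at least one index~$i$ the partition $\la_i$ has more than $m_i=\dim W_i$ parts, so by Schur--Weyl duality $S_{\la_i}(W_i^*)=0$ and hence $S_\ula(W^*)=S_{\la_1}(W_1^*)\ot S_{\la_2}(W_2^*)\ot S_{\la_3}(W_3^*)=0$. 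Consequently the right vertical arrow $\Id\ot S_\ula(\iota^*)$ in the commutative diagram~\eqref{eq:comdia} vanishes on the $\ula$-summand, so $\res_d\colon\Oh(W')_d\to\Oh(W)_d$ annihilates $\Sigma_\ula\ot S_\ula(W'^*)$; equivalently this component is contained in $I(W)_d$, the degree-$d$ part of the vanishing ideal of the linear subspace $W\hookrightarrow W'$.

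Next I would upgrade this to vanishing on $\overline{G'w}$. The key is that $\Sigma_\ula\ot S_\ula(W'^*)$ is a $G'$-submodule of $\Oh(W')_d$ (a single highest weight vector is only Borel-stable, but the whole isotypic component is $G'$-stable), so from $\Sigma_\ula\ot S_\ula(W'^*)\subseteq I(W)_d$ one gets $\Sigma_\ula\ot S_\ula(W'^*)\subseteq\bigcap_{g'\in G'} g'\,I(W)_d=\bigcap_{g'\in G'} I(g'^{-1}W)_d=I(G'W)_d$. Since $w\in W$ we have $\overline{G'w}\subseteq\overline{G'W}$, whence $I(G'W)_d\subseteq I(\overline{G'w})_d$, and therefore $\Sigma_\ula\ot S_\ula(W'^*)\subseteq I(\overline{G'w})_d$ and $J_\ula(w)=\Sigma_\ula$. (If one prefers a concrete description, $G'W$ is exactly the closed subspace variety of tensors of multilinear rank at most~$\um$, but this is not needed.)

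The hard part is really just the second paragraph, and even there it is only a matter of being careful: one must work with the entire $G'$-isotypic component lying in $I(W)_d$ rather than with individual highest weight vectors, so that $G'$-equivariance can propagate the vanishing from the coordinate subspace $W$ to the whole sweep $G'W$, and hence to $\overline{G'w}$. The rest is an immediate consequence of the Schur--Weyl vanishing $S_{\la_i}(V)=0$ for partitions with too many parts, combined with the commutative diagram~\eqref{eq:comdia} already established before the lemma.
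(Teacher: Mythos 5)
Your proof is correct, and it takes a genuinely different route from the paper's. The paper proves the contrapositive directly for an arbitrary irreducible submodule $M\simeq S_\ula(W'^*)$: if $M\not\subseteq I(\overline{G'w})$, then (after translating by some $g'\in G'$, using $G'$-stability of $M$) there is a weight vector $F\in M$ with $F(w)\neq 0$; writing its weight as $-\ua$ and using torus elements $t\in G'$ acting trivially on the coordinates $j\le m_i$ (which therefore fix~$w$), the nonvanishing of $F(w)$ forces $\a_{ij}=0$ for $j>m_i$, and then the dominance $\ua\preceq\ula$ gives $\la_i\vdash_{m_i}d$, i.e.\ $\ula\in\Lambda^+_d(\um)$. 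Your approach instead reads off from $S_{\la_i}(W_i^*)=0$ (too many parts) and the commutative diagram that the whole isotypic block $\Sigma_\ula\otimes S_\ula(W'^*)$ lies in $I(W)_d$, and then propagates this by $G'$-equivariance of the block to $I(G'W)_d\subseteq I(\overline{G'w})_d$. This is a clean conceptual alternative: the paper's torus argument exploits which elements of $G'$ fix~$w$, whereas yours uses that $w$ happens to lie in the $G'$-swept subspace variety $G'W$ and that the isotypic component already vanishes on the seed subspace $W$. One small note: you write $g'\,I(W)_d = I(g'^{-1}W)_d$, but with the convention $(gf)(x)=f(g^{-1}x)$ it is $g'\,I(W)_d = I(g'W)_d$; either way the intersection over $g'\in G'$ is $I(G'W)_d$, so the conclusion is unaffected. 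Also, \eqref{eq:samemult} as stated covers $\ula\in\Lambda^+_d(\um)$, but the identity $\mult=\dim\Sigma_\ula-\dim J_\ula(w)$ you invoke follows for all $\ula\in\Lambda^+_d(\um')$ directly from $\Oh(\overline{G'w})_d=\Oh(W')_d/I(\overline{G'w})_d$ and the block decomposition.
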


\begin{proof}
Let $M\subseteq \Oh(W')_d$ be a submodule such that $M\simeq S_{\ula}(W'^*)$. 
We need to show that 
$M\subseteq I(\overline{G'w})$ if $\ula\in\Lambda^+_d(\um')\setminus \Lambda^+_d(\um)$.  

It is convenient to prove the contraposition. So suppose that $M$ is not contained in $I(\overline{G'w})$.
Then there exists $F\in M$ such that $F(w)\ne 0$. 
(Indeed, by assumption there exist $F\in M$ and $g\in G'$ such that 
$0\ne F(g^{-1}w) =(gF)(w)$. Just replace $F$ by $gF\in M$.) 

We may assume that $F$ is a weight vector and 
let $-\ua = -(\a_1,\a_2,\a_3)$ denote its weight.
Write $\a_i=(\a_{i1},\ldots,\a_{im'_i})$.  
Let $t=(t_1,t_2,t_3)\in G'$ with $t_i= \diag(t_{i1},\ldots,t_{i m_i'})$
such that $t_{ij}=1$ for $1\le j \le m_i$.
Then $t w = w$ and we have 
$$
 \mbox{$F(w) = F(tw) =  (t^{-1} F)(w) = \prod_{i=1}^3\prod_{j =m_i +1}^{m_i'} t_{ij}^{\a_{ij}} F(w)$.}
$$
Since $t_{ij}\in\C^\times$ is arbitrary for $j>m_i$,
it follows that $\a_{ij}=0$ for $j>m_i$.

Since $\ua$ is a weight of $M^*\simeq S_{\ula}(W')$ and 
$\ula$ is the highest weight of~$M$, we have $\ua\preceq \ula$.
From $\a_{ij}=0$ for $j>m_i$ it follows that 
$\la_i$ has at most $m_i$ parts. 
This means $\ula\in\Lambda^+_d(\um)$.
\end{proof}

Equation~\eqref{eq:samemult} and Lemma~\ref{le:Jall}
imply Proposition~\ref{th:inheritance}.

\subsection{Decomposition of ring of regular functions on orbits}\label{se:peter-weyl}

Consider the action of $G\times G$ on $G$ defined by
$(g_1,g_2)\cdot g := g_1\, g\, g_2^{-1}$.
This induces the following action of $G\times G$ on~$\Oh(G)$:
$$
 ((g_1,g_2)\cdot f)(g) := f(g^{-1}\,g\, g_2) \quad\mbox{where $g_1,g_2,g\in G, f\in \Oh(G)$}.
$$
The usual ``left action'' of $G$ on $\Oh(G)$ is obtained
by the embedding
$G\hookrightarrow G\times G,\,g_1\mapsto (1,\Id)$.
But note that we also have a ``right action'' of $G$
given by
$G\hookrightarrow G\times G,\,g_2\mapsto (1,g_2)$.

We now state the fundamental algebraic Peter Weyl Theorem
for the group $G$, cf. \cite[p. 93]{kraf:84} or \cite{good-wall:09}.
(This result actually holds for any reductive group $G$.)

\begin{theorem}\label{th:peweorig}
The isotypic decomposition of $\Oh(G)$ as a
$G\times G$-module is given as
$$
 \Oh(G) = \bigoplus_{\ula\in\Lambda^+_G} V_\ula(G)^*\otimes V_\ula(G) .
$$
Here the $G\times G$-module structure on $V_\ula(G)^*\otimes V_\ula(G)$
is to be interpreted as
$(g_1,g_2)\,(\ell\otimes v) := g_1\ell \otimes g_2 v$
for $\ell\in V_\ula(G)^*$ and $v\in V_\ula(G)$.
\end{theorem}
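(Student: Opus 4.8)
The plan is to realise $\Oh(G)$ as the linear span of the matrix coefficients of finite-dimensional rational $G$-modules and then to let reductivity sort these by isomorphism type. First I would recall the standard fact (see~\cite{kraf:84,good-wall:09}) that the left--right translation action of $G\times G$ on $\Oh(G)$ described in the statement is rational and \emph{locally finite}: every $f\in\Oh(G)$ lies in a finite-dimensional $G\times G$-stable subspace. Since $G$ is reductive, $\Oh(G)$ is then a direct sum of irreducible rational $G\times G$-modules, and every such module has the form $M\otimes N$ with $M,N$ irreducible rational $G$-modules carrying the two factor actions; so the task is to determine which $M\otimes N$ occur, and with what multiplicity.

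Next I would introduce, for any finite-dimensional rational $G$-module $V$, the matrix coefficient map
\[
 \Phi_V\colon V^*\otimes V\longrightarrow\Oh(G),\qquad \Phi_V(\ell\otimes v)(g):=\ell(gv),
\]
and verify directly from the conventions in the statement that $\Phi_V$ is a morphism of $G\times G$-modules, with $V^*\otimes V$ carrying the structure $(g_1,g_2)(\ell\otimes v)=g_1\ell\otimes g_2v$. For $V=V_\ula(G)$ irreducible, $\Phi_V$ is injective: by Burnside's density theorem over $\C$ the image of $G$ spans $\mathrm{End}(V)$, so the functions $g\mapsto\ell_i(gv_j)$ attached to bases $(\ell_i)$ of $V^*$ and $(v_j)$ of $V$ are linearly independent. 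The image of $\Phi_{V_\ula}$ is an irreducible $G\times G$-submodule of type $V_\ula(G)^*\boxtimes V_\ula(G)$, and since $\ula\mapsto V_\ula(G)$ is injective on isomorphism classes, distinct $\ula$ give non-isomorphic types; hence the images of the various $\Phi_{V_\ula}$ lie in distinct isotypic components of $\Oh(G)$ and $\Phi:=\bigoplus_\ula\Phi_{V_\ula}$ is injective.

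It remains to prove surjectivity of $\Phi$, which I expect to be the main obstacle. Given $f\in\Oh(G)$, local finiteness makes $V:=\langle R_gf\mid g\in G\rangle$, the span of the right translates $(R_gf)(x)=f(xg)$, a finite-dimensional subspace of $\Oh(G)$ that is a rational (left) $G$-module under $g\cdot h:=R_gh$ (a genuine left action since $R_{g_1}R_{g_2}=R_{g_1g_2}$), and being a rational module over the reductive group $G$ it splits into irreducible summands, each of the form $V_\ula(G)$. Let $\iota\colon V\hookrightarrow\Oh(G)$ be the inclusion and $\varepsilon\in V^*$ the functional $\varepsilon(v):=\iota(v)(e)$, where $e$ is the identity of $G$. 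The identity
\[
 \iota(v)(g)=(R_g\,\iota(v))(e)=\iota(g\cdot v)(e)=\varepsilon(g\cdot v)
\]
exhibits $\iota(v)=\Phi_V(\varepsilon\otimes v)$ as a matrix coefficient of $V$, hence as a sum of matrix coefficients of its irreducible summands; therefore $f\in\im\Phi$. Combining injectivity and surjectivity and reading off the $G\times G$-module structure of each $\im\Phi_{V_\ula}\cong V_\ula(G)^*\otimes V_\ula(G)$ gives the asserted decomposition, with each summand occurring exactly once. The only genuinely nontrivial ingredient is the local finiteness and rationality of the translation action on $\Oh(G)$, for which I would cite~\cite{kraf:84}; everything else is bookkeeping with Schur's lemma and Burnside's theorem.
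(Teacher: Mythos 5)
Your argument is correct and complete, and it is the standard matrix-coefficient proof of the algebraic Peter--Weyl theorem. The paper does not actually prove this statement — it simply cites it (to Kraft and to Goodman--Wallach), and those references use essentially the same route you take: local finiteness of the translation action, the matrix-coefficient map $\Phi_V(\ell\otimes v)(g)=\ell(gv)$ checked to be $G\times G$-equivariant, injectivity of $\Phi_{V_\ula}$ via Burnside/Jacobson density, and surjectivity via the observation that $f=\Phi_V(\varepsilon\otimes f)$ when $V$ is the span of the right translates of $f$. So you have supplied a correct, self-contained proof of a fact the paper only quotes.
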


\subsection{Proof of Proposition~\ref{pro:pewe}}

The stabilizer $H$ of $w$ is a closed subgroup of $G$.
It is possible to give $G/H=\{gH \mid g\in G\}$
the structure of an algebraic variety such that
$G\to G/H$ is a morphism of varieties satisfying the universal property of
quotients (cf.~\cite[Chap.~12]{hump:75}.
This implies that $G/H$ is isomorphic to $Gw$ as a variety
and this morphism is $G$-equivariant.

The morphism of coordinate rings $\Oh(G/H) \to \Oh(G)$
induced by $G\to G/H$ is injective, and maps to the subring
$$
 \Oh(G)^H = \{f\in\Oh(G) \mid \forall g\in G, h\in H\ f(gh) = f(g)\}
$$
of $H$-invariant functions with respect to the right action of $H$ on $\Oh(G)$.
We note that $\Oh(G)^H$ is a $G$-submodule of $\Oh(G)$ with respect
to the left action of $G$. Moreover, $\Oh(G/H) \to \Oh(G)^H$ is
$G$-equivariant.
The universal property of quotients implies the surjectivity of
$\Oh(G/H) \to \Oh(G)^H$.
So we have shown that $\Oh(Gw)$ is isomorphic to the $G$-module $\Oh(G)^H$.

Theorem~\ref{th:peweorig} implies that
$$
 \Oh(G)^H = \bigoplus_{\ula\in\Lambda^+_G} V_\ula(G)^*\otimes V_\ula(G)^H .
$$
Hence
$\mult(V_\ula(G)^*, \Oh(G)^H)  = \dim V_\ula(G)^H$, which
completes the proof.
\qed


\section*{Appendix: Section~\ref{se:unit}}

\subsection{Stabilizers of associative algebras}\label{se:stab-ass}

Let $\Bil(U,V;W)$ denote the space of bilinear maps $U\ti V\to W$,
where $U,V,W$ are finite dimensional vector spaces.
The group $G=\GL(U)\ti \GL(V)\ti \GL(W)$ acts on
$\Bil(U,V;W)$ via
$(\a,\b,\g) \cdot \varphi := \g\,\varphi(\a^{-1}\ti\b^{-1})$.
By definition, $\GL(U)$ acts on the dual module $U^*$ via
$\a \cdot \ell := (\a^{-1})^* (\ell)$ for $\a\in\GL(U)$, $\ell \in U^*$.
It is straightforward to check that the canonical isomorphism
$U^*\ot V^*\ot W\to \Bil(U,V;W)$ is $G$-equivariant.
Hence we obtain the following result.

\begin{lemma}\label{le:bilstab}
Let $\varphi\in\Bil(U,V;W)$ and $w\in U^*\ot V^*\ot W$ be the
corresponding tensor. Then
$$
 \stab(w)  =\big\{ ((\a^{-1})^*, (\b^{-1})^*, \g) \mid \forall u,v \
    \varphi(\a(u),\b(v)) =\g(\varphi(u,v)) \big\} .
$$
\end{lemma}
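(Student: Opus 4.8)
The plan is to unwind the definitions and track the group action through the canonical isomorphism $U^*\ot V^*\ot W \to \Bil(U,V;W)$. First I would recall precisely what this isomorphism does: a decomposable tensor $\ell\ot \ell'\ot z$ (with $\ell\in U^*$, $\ell'\in V^*$, $z\in W$) is sent to the bilinear map $(u,v)\mapsto \ell(u)\ell'(v)\,z$. I want to verify this map is $G$-equivariant for the two actions already defined in the excerpt: $G$ acts on the tensor side by $(\a,\b,\g)\cdot(\ell\ot\ell'\ot z) = (\a^{-1})^*(\ell)\ot(\b^{-1})^*(\ell')\ot \g(z)$, and on $\Bil(U,V;W)$ by $(\a,\b,\g)\cdot\varphi = \g\circ\varphi\circ(\a^{-1}\ti\b^{-1})$. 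Checking equivariance reduces to a one-line computation on decomposables: $(\a,\b,\g)$ sends $\ell\ot\ell'\ot z$ to the bilinear map $(u,v)\mapsto \ell(\a^{-1}u)\,\ell'(\b^{-1}v)\,\g(z)$, which is exactly $\g\circ[(u,v)\mapsto\ell(u)\ell'(v)z]\circ(\a^{-1}\ti\b^{-1})$; linear extension finishes it.

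Granting equivariance, the stabilizer of $w$ in $G$ equals the stabilizer of the corresponding $\varphi$. So I would compute $\{(\a,\b,\g)\in\GL(U)\ti\GL(V)\ti\GL(W) \mid \g\circ\varphi\circ(\a^{-1}\ti\b^{-1}) = \varphi\}$. The condition $\g\,\varphi(\a^{-1}u,\b^{-1}v) = \varphi(u,v)$ for all $u,v$ is, after the substitution $u\mapsto \a u$, $v\mapsto \b v$, equivalent to $\g\,\varphi(u,v) = \varphi(\a u,\b v)$ for all $u,v$, which is the defining relation in the asserted formula. The only remaining bookkeeping is the relabeling: the problem statement wants the stabilizer written as a subgroup of $\GL(U^*)\ti\GL(V^*)\ti\GL(W)$ acting on the tensor $w\in U^*\ot V^*\ot W$, i.e.\ in terms of the elements $((\a^{-1})^*,(\b^{-1})^*,\g)$ rather than $(\a,\b,\g)$. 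Since $\a\mapsto(\a^{-1})^*$ is a bijection $\GL(U)\to\GL(U^*)$ (and likewise for $\b$), this is just a change of parametrization and the set described is the same.

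I do not expect a genuine obstacle here; this is essentially a definitional lemma. The one point requiring mild care is consistency of conventions: the excerpt defines the $\GL(U)$-action on $U^*$ by $\a\cdot\ell = (\a^{-1})^*(\ell)$, so in the tensor picture the group element that acts ``as $\a$ on $U$'' is literally the element $(\a^{-1})^*$ of $\GL(U^*)$, and one must make sure the inverses and transposes are not doubled or dropped when passing between the $\Bil$ picture and the tensor picture. Once the equivariance of the canonical isomorphism is checked cleanly, the stabilizer identity is immediate, and the displayed formula follows by the substitution $u\mapsto\a u$, $v\mapsto\b v$ described above.
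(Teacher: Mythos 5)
Your proposal is correct and takes the same route as the paper, which simply notes that the canonical isomorphism $U^*\otimes V^*\otimes W\to \operatorname{Bil}(U,V;W)$ is $G$-equivariant and that the lemma follows; you have merely filled in the (indeed straightforward) equivariance check on decomposables, the resulting identification of stabilizers, and the substitution $u\mapsto\alpha u$, $v\mapsto\beta v$ that produces the displayed formula.
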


Now let $A$ be a finite dimensional associative $\C$-algebra with~$1$.
Its multiplication map $A\ti A\to A$ corresponds to a tensor
$w_A\in A^*\ot A^* \ot A$.
We denote by $A^\ti$ the unit group of $A$ and by $\Aut A$ its group
of algebra automorphisms. For $a\in A$ we denote by
$L_a\colon A\to A, x\mapsto ax$ the left multiplication with~$a$.
Similarly, $R_a$ denotes the right multiplication with~$a$.

The following observation goes back to~\cite{deGroote:78}.

\begin{lemma}\label{le:deGroote}
We have
$$
 \stab(w_A) = \Big\{
  \big( L^*_{\e^{-1}} (\psi^{-1})^*, R^*_{\eta^{-1}} (\psi^{-1})^*, L_\e R_\eta \psi \big)
  \ \Big|\  \e,\eta \in A^\times, \psi\in\Aut A\Big\} .
$$
\end{lemma}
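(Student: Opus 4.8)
\textbf{Proof plan for Lemma~\ref{le:deGroote}.}
The plan is to apply Lemma~\ref{le:bilstab} to the multiplication bilinear map $\varphi\colon A\times A\to A$, $\varphi(u,v)=uv$, and then to identify the resulting condition on $(\alpha,\beta,\gamma)\in\GL(A)^3$ — namely $\alpha(u)\,\beta(v)=\gamma(uv)$ for all $u,v\in A$ — with the parametrization on the right-hand side. By Lemma~\ref{le:bilstab} the stabilizer consists of the triples $((\alpha^{-1})^*,(\beta^{-1})^*,\gamma)$ with $\alpha,\beta,\gamma$ satisfying this identity, so it suffices to show that the solution set for $(\alpha,\beta,\gamma)$ is exactly $\{(L_{\varepsilon^{-1}}\psi,R_{\eta^{-1}}\psi,L_\varepsilon R_\eta\psi)\mid \varepsilon,\eta\in A^\times,\psi\in\Aut A\}$; substituting $\alpha\mapsto (\alpha^{-1})^*$ etc.\ and using $L_a^*$, $R_a^*$ for the transposes then yields the displayed formula (the transpose of $L_{\varepsilon^{-1}}\psi$ is $\psi^* L_{\varepsilon^{-1}}^*$, and inverting gives $L_{\varepsilon^{-1}}^*(\psi^{-1})^*$ after relabeling $\varepsilon\leftrightarrow\varepsilon^{-1}$; I would be careful to get these bookkeeping steps consistent).

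For the containment ``$\supseteq$'' I would simply verify the identity directly: with $\alpha=L_{\varepsilon^{-1}}\psi$, $\beta=R_{\eta^{-1}}\psi$, $\gamma=L_\varepsilon R_\eta\psi$ and $\psi$ an algebra automorphism, one computes
$\alpha(u)\beta(v)=(\varepsilon^{-1}\psi(u))(\psi(v)\eta^{-1})=\varepsilon^{-1}\psi(u)\psi(v)\eta^{-1}=\varepsilon^{-1}\psi(uv)\eta^{-1}$,
while $\gamma(uv)=\varepsilon\,\psi(uv)\,\eta$; so in fact I should double-check signs of exponents — the correct statement is that $\gamma(uv)=\varepsilon^{-1}\psi(uv)\eta^{-1}$ forces $\gamma=L_{\varepsilon^{-1}}R_{\eta^{-1}}\psi$, so I will adjust the parametrization (or equivalently relabel $\varepsilon,\eta$) so that it matches the claimed form exactly. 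The point is that this is a routine substitution once the labels are fixed.

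The substantive direction is ``$\subseteq$'': given invertible $\alpha,\beta,\gamma$ with $\alpha(u)\beta(v)=\gamma(uv)$ for all $u,v$, recover $\varepsilon,\eta,\psi$. The key idea is to evaluate at the unit. Setting $v=1$ gives $\alpha(u)\beta(1)=\gamma(u)$, so $\gamma=R_{\beta(1)}\alpha$; setting $u=1$ gives $\alpha(1)\beta(v)=\gamma(v)$, so $\gamma=L_{\alpha(1)}\beta$. Writing $\varepsilon:=\alpha(1)$, $\eta:=\beta(1)$ (these are invertible since $\alpha,\beta,\gamma$ are and $\gamma(1)=\varepsilon\eta$), define $\psi:=L_{\varepsilon^{-1}}\alpha=R_{\eta^{-1}}\beta$ — the two expressions agree because $L_{\varepsilon^{-1}}\gamma=L_{\varepsilon^{-1}}R_\eta\,(R_{\eta^{-1}}\beta)=\ldots$; more directly, $L_{\varepsilon^{-1}}\alpha(u)\cdot R_{\eta^{-1}}\beta(v)=\varepsilon^{-1}\alpha(u)\beta(v)\eta^{-1}=\varepsilon^{-1}\gamma(uv)\eta^{-1}$, and a short computation shows this equals $\psi(uv)$ for the common value $\psi$, with $\psi(1)=1$; hence $\psi$ is a unital algebra endomorphism, and it is bijective since $\alpha$ is, so $\psi\in\Aut A$. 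Then $\alpha=L_\varepsilon\psi$, $\beta=R_\eta\psi$, $\gamma=L_\varepsilon R_\eta\psi$, which is the desired form (up to the inversion bookkeeping mentioned above).

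\textbf{Main obstacle.} There is no deep obstacle — the argument is a standard "evaluate at the identity" trick. The only real care needed is the index/inverse/transpose bookkeeping when passing between Lemma~\ref{le:bilstab}'s parametrization $((\alpha^{-1})^*,(\beta^{-1})^*,\gamma)$ and the stated form involving $L^*_{\varepsilon^{-1}}$, $R^*_{\eta^{-1}}$: one must check that $(L_\varepsilon\psi)^{-1}$ transposed equals $L^*_{\varepsilon^{-1}}(\psi^{-1})^*$, using $(L_a)^{-1}=L_{a^{-1}}$, $(AB)^*=B^*A^*$, and the fact that $\psi^{-1}\in\Aut A$ so that conjugating multiplication operators by $\psi$ behaves well. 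I would present this as a clean two-inclusion proof with the computation for $\supseteq$ done once carefully and the recovery of $(\varepsilon,\eta,\psi)$ for $\subseteq$ spelled out via the two unit evaluations.
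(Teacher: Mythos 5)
Your plan is correct and follows the same route as the paper: reduce to the bilinear condition $\alpha(u)\beta(v)=\gamma(uv)$ via Lemma~\ref{le:bilstab}, evaluate at the unit element to extract $\varepsilon:=\alpha(1)$, $\eta:=\beta(1)$, and show $\psi:=L_{\varepsilon^{-1}}\alpha=R_{\eta^{-1}}\beta$ (equivalently $\psi(a)=\varepsilon^{-1}\gamma(a)\eta^{-1}$) is a unital automorphism, with $\supseteq$ a direct verification. On the bookkeeping you flag: decoding $(\alpha^{-1})^*=L^*_{\varepsilon^{-1}}(\psi^{-1})^*$ correctly gives $\alpha=L_\varepsilon\psi$ (not $L_{\varepsilon^{-1}}\psi$), after which $\alpha(u)\beta(v)=\varepsilon\,\psi(uv)\,\eta=\gamma(uv)$ matches with no relabeling needed.
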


\begin{proof}
Let $\a,\b,\g\in\GL(A)$.
Suppose that
$\big( (\a^{-1})^*, (\b^{-1})^*, \g \big) \in \stab(w_A)$.
By Lemma~\ref{le:bilstab} we have
$\a(a)\b(b) = \g(ab)$ for all $a,b\in A$.
Plugging in~$1$ we get
$\a(a)\b(1) = \g(a)$ and $\a(1)\b(b) = \g(b)$.
Hence $\e:=\a(1)$ and $\eta:=\b(1)$ must be units of~$A$.
We define now
$\psi(a) := \e^{-1} \g(a)\eta^{-1}$.
Then we have $\psi(1)=1$ and
$$
\psi(a)\psi(b) =\e^{-1}\g(a)\eta^{-1}\, \e^{-1}\g(b) \eta^{-1}
 = \e^{-1} \a(a) \b(b)\eta^{-1} =\e^{-1} \g(ab) \eta^{-1} = \psi(ab).
$$
Therefore $\psi\in\Aut A$. By construction,
$\a= L_\e\psi$, $\b=R_\eta\psi$, and $\g=L_\e R_\eta\psi$,
and hence
$(\a^{-1})^* = L^*_{\e^{-1}} (\psi^{-1})^*$,
$(\b^{-1})^* = R^*_{\eta^{-1}} (\psi^{-1})^*$.
The argument is reversible.
\end{proof}

\subsection{Proof of Proposition~\ref{pro:stabut}}

Let $S_m$ denote the diagonal embedding of the symmetric group in~$G_m$.
Obviously, $\Tc_m\cap S_m =\{\Id\}$.
It is easy to see that $S_m$ normalizes $\Tc_m$.
Hence $\Tc_m S_m$ is a subgroup
of $G_m$ and $\Tc_m$ is a normal divisor of $\Tc_m S_m$.
It remains to prove that the stabilizer~$H_m$ equals $\Tc_m S_m$.
The inequality $\Tc_m S_m \subseteq H_m$ is obvious.

Note that $\langle m\rangle$ is the structural tensor of the algebra $A=\C^m$.
It is straightforward to check that
$\Aut A =\{P_\pi \mid \pi \in S_m \}$ .
Note that $(P_\pi^{-1})^* = P_\pi$ .
Hence Lemma~\ref{le:deGroote} implies
$$
 H_m = \stab(\langle m\rangle) = \big\{
 (\diag(\e^{-1}) P_\pi, \diag(\eta^{-1}) P_\pi,\diag(\e\eta) P_\pi) \mid
 \e,\eta \in (\C^\ti)^m, \pi\in S_m \big\}
$$
and we obtain $H_m = \Tc_m S_m$.
\qed

\begin{lemma}\label{le:ut-unique}
If the stabilizer of $w\in\C^m\ot\C^m\ot\C^m$ contains~$H_m$,
then $w=c\,\lan m\ran$ for some $c\in \C$.
\end{lemma}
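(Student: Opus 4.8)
The plan is to exploit separately the torus part $\Tc_m$ and the symmetric group part $S_m$ of $H_m$, by looking at the weight decomposition of~$w$. Fix the standard basis $e^i_1,\ldots,e^i_m$ of the $i$th tensor factor $\C^m$ and write $w = \sum_{i,j,k} w_{ijk}\, e^1_i\ot e^2_j\ot e^3_k$ with coefficients $w_{ijk}\in\C$. By Proposition~\ref{pro:stabut} the hypothesis $H_m\subseteq\stab(w)$ means precisely that $w$ is invariant under the torus $\Tc_m$ and under the diagonally embedded symmetric group~$S_m$.

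First I would use $\Tc_m$-invariance. The element $(\diag(a),\diag(b),\diag(c))\in\Tc_m$ scales $e^1_i\ot e^2_j\ot e^3_k$ by $a_ib_jc_k$, so invariance of $w$ forces $w_{ijk}=0$ whenever the character $(a,b,c)\mapsto a_ib_jc_k$ is nontrivial on $\Tc_m$. Using the parametrization $\Tc_m \simeq (\C^\times)^m\ti(\C^\times)^m$, $(a,b)\mapsto\big(\diag(a),\diag(b),\diag((a_1b_1)^{-1},\ldots,(a_mb_m)^{-1})\big)$, this character becomes $(a,b)\mapsto (a_ia_k^{-1})(b_jb_k^{-1})$, which is trivial on the whole torus exactly when $i=k$ and $j=k$. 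Hence $w_{ijk}=0$ unless $i=j=k$, so that $w = \sum_{i=1}^m w_{iii}\, e^1_i\ot e^2_i\ot e^3_i$.

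Then I would bring in $S_m$-invariance. A permutation $\pi\in S_m$ acts through $(P_\pi,P_\pi,P_\pi)$ and permutes the chosen basis, sending $e^1_i\ot e^2_i\ot e^3_i$ to $e^1_{\pi(i)}\ot e^2_{\pi(i)}\ot e^3_{\pi(i)}$; invariance of~$w$ therefore gives $w_{\pi(i)\pi(i)\pi(i)}=w_{iii}$ for all $i$ and all~$\pi$. As $S_m$ acts transitively on $\{1,\ldots,m\}$, all the coefficients $w_{iii}$ equal a common scalar $c\in\C$, whence $w = c\sum_{i=1}^m e^1_i\ot e^2_i\ot e^3_i = c\,\lan m\ran$. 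The single point deserving care is the determination of which characters $a_ib_jc_k$ restrict trivially to the $2m$-dimensional torus $\Tc_m$ — this is what forces $i=j=k$ — but no genuine obstacle arises.
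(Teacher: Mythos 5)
Your proof is correct and follows essentially the same route as the paper's: both use $\Tc_m$-invariance to force $w_{ijk}=0$ unless $i=j=k$, and then $S_m$-invariance to equalize the diagonal entries. The only cosmetic difference is that you argue directly via triviality of the characters $a_ib_jc_k$ on $\Tc_m$, whereas the paper reaches the same conclusion by a short proof by contradiction with explicitly perturbed torus elements.
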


\begin{proof}
Assume the stabilizer of
$w=\sum w_{ijk}e _i\otimes e_j\otimes e_k$
contains $H_m$.
By contradiction, we suppose that $w_{ijk}\ne 0$ for some
$i,j,k$ with $i\ne k$.
For any $(\diag(a),\diag(b),\diag(c))\in \Tc_m$ we have
$a_ib_jc_k w_{ijk} = w_{ijk}$ and hence
$a_ib_jc_k =1 =a_k b_k c_k$, which implies
$a_i = a_k b_k/b_j$.
However, defining
$\tilde{a}_i = 2 a_i$, $\tilde{c}_i = \frac12 c_i$,
$\tilde{a}_\ell = a_\ell,\ \tilde{c}_\ell = c_\ell$ 
for $\ell\ne i$ we get 
$(\diag(\tilde{a}),\diag(b),\diag(\tilde{c}))\in \Tc_m$.
This yields the contradiction
$\tilde{a}_i = \tilde{a}_k b_k/b_j = a_k b_k/b_j = a_i´$. 
We have thus shown that $w_{ijk}\ne 0$ implies $i=k$.
By symmetry, we conclude that $w_{ijk}=0$ unless $i=j=k$.
Finally,
from the invariance of $w$ under $S(m)$, we get $w_{iii} = w_{111}$ for all~$i$.
Hence $w=w_{111}\lan m\ran$.
\end{proof}

\subsection{Stability}

We need some criterion for testing stability.
By a {\em one-parameter subgroup} of~$G_s$ we understand
a morphism $\s\colon \C^\ti\to G_s$ of algebraic groups.
The {\em centralizer} $Z_{G_s}(R_s)$ of a subgroup $R_s$ of $G_s$
is defined as the set of $g\in G_s$ such that
$gh=hg$ for all $h\in R_s$.
For instance, let $T_s$ denote the maximal torus of $G_s$.
Then we have $Z_{G_s}(T_s) = T_s$.

The following important stability criterion
is a consequence Kempf's~\cite{kempf:78} refinement
of the Hilbert-Mumford criterion.

\begin{theorem}\label{th:stb-crit}
Let $w\in W$ be a tensor and $R_s$ be a reductive subgroup of $G_s$
contained in the stabilizer of $w$.
We assume that for all one-parameter subgroups~$\s$ of $G_s$,
with image in the centralizer $Z_{G_s}(R_s)$,
the limit $\lim_{t\to 0}\s(t)w$ lies in the $G_s$-orbit of $w$,
provided the limit exists. Then $w$ is stable.
\end{theorem}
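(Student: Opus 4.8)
The plan is to obtain this criterion as a consequence of Kempf's refinement of the Hilbert--Mumford criterion \cite{kempf:78}, arguing by contraposition. So I would assume that $G_s w$ is \emph{not} closed and manufacture a one-parameter subgroup $\sigma$ of $G_s$ with $\im\sigma\subseteq Z_{G_s}(R_s)$ such that $\lim_{t\to 0}\sigma(t)w$ exists yet does not lie in $G_s w$; this contradicts the hypothesis. The starting observation is that, $G_s$ being reductive and $W$ affine, $\overline{G_s w}$ contains a unique closed $G_s$-orbit $S$, and $S\neq G_s w$ precisely because $G_s w$ fails to be closed, so $w$ is unstable relative to the closed invariant set $S$.

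Next I would feed this into Kempf's instability theory, equipping the lattice of one-parameter subgroups of the semisimple group $G_s$ with a conjugation-invariant length function (for instance the one coming from the Killing form). This produces the ``optimal parabolic'' subgroup $P$ of $w$ together with a one-parameter subgroup $\lambda$ with $P(\lambda)=P$, enjoying: (i) $\im\lambda$ lies in the center of the Levi subgroup $L:=Z_{G_s}(\im\lambda)$ of $P$; (ii) the optimal one-parameter subgroups of $w$ of a fixed length form a single orbit under conjugation by the unipotent radical $R_u(P)$, and for each member $\mu$ of this orbit the limit $\lim_{t\to 0}\mu(t)w$ exists and lies outside $G_s w$; (iii) $\stab(w)\subseteq P$, which follows formally from (ii) and the fact that $\stab(w)$ permutes the optimal class (as it fixes $w$).

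The remaining step is bookkeeping with centralizers. Since $R_s\subseteq\stab(w)\subseteq P$ by (iii) and $R_s$ is reductive, standard structure theory of parabolic subgroups gives a Levi factor $L_1$ of $P$ with $R_s\subseteq L_1$; writing $L_1=uLu^{-1}$ with $u\in R_u(P)$, I set $\sigma:=u\lambda u^{-1}$. By (ii), $\sigma$ again lies in the optimal class of $w$, so $\lim_{t\to 0}\sigma(t)w$ exists and lies outside $G_s w$; on the other hand $Z_{G_s}(\im\sigma)=uLu^{-1}=L_1\supseteq R_s$, so by (i) the image $\im\sigma$ is central in $L_1$, hence commutes with $R_s$, i.e. $\im\sigma\subseteq Z_{G_s}(R_s)$. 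This $\sigma$ supplies the contradiction, so $G_s w$ is closed and $w$ is stable.

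The step I would expect to require the most care is importing the exact fine properties of Kempf's optimal class: that it is a single $R_u(P)$-conjugacy orbit \emph{all} of whose members still destabilize $w$, and that as $\lambda$ ranges over it the Levi subgroups $Z_{G_s}(\im\lambda)$ exhaust all Levi factors of $P$. This has to be reconciled with the classical fact that a reductive subgroup of a parabolic subgroup sits inside some Levi factor and that $R_u(P)$ acts transitively on Levi factors. Alternatively one could quote Kempf directly in the form ``for a reductive subgroup $R_s$ of $\stab(w)$, an optimal destabilizing $\lambda$ may be chosen to centralize $R_s$'', but unpacking that statement relies on exactly the same parabolic-theoretic ingredients.
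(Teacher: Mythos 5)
Your proof is correct and follows essentially the same route as the paper's: both argue by contraposition and rest on Kempf's refinement of the Hilbert--Mumford criterion~\cite{kempf:78}. The paper simply invokes Kempf's result directly in the form you mention at the end (an optimal destabilizing one-parameter subgroup may be chosen with image in $Z_{G_s}(R_s)$), whereas you additionally unpack the optimal-parabolic machinery that underlies that statement.
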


\begin{proof}
Suppose that $w$ is not stable. Then there is a nonempty closed $G_s$-invariant
subset~$Y$ of $\ol{G_sw}\setminus G_sw$.
Kempf's result~\cite{kempf:78} states that there exists
a one-parameter subgroup
$\s\colon\C^\ti\to Z_{G_s}(R_s)$ such that
$\lim_{t\to 0}\s(t)w \in Y$.
Hence this limit does not lie in $G_sw$.
\end{proof}

\subsection{Proof of Proposition~\ref{th:unit-stable}}

We apply Theorem~\ref{th:stb-crit} with $R_s := \Tc_m\cap G_s$.
The group $R_s$ is a torus and hence reductive~\cite{kraf:84}.
It is easy to see that $Z_{G_s}(R_s)$ equals the maximal torus~$T_s$ of $G_s$.

Any one-parameter subgroup $\s\colon\C^\times\to T_s$ is of the form
$\s(t)=(\s_1(t),\s_2(t),\s_3(t))$ with
$$
 \s_1(t) = \diag(t^{\mu_{i}}),\
 \s_2(t) = \diag(t^{\nu_{i}}),\
 \s_3(t) = \diag(t^{\pi_{i}})
$$
with integers $\mu_i,\nu_i,\pi_i$.
Since $\det\s_k(t)= 1$ we must have
\begin{equation*}
 \sum_{i}\mu_{i} = 0,\
 \sum_{i}\nu_{i} = 0,\
 \sum_{i}\pi_{i} = 0.
\end{equation*}
We have
$\s(t)\lan m\ran =\sum_i t^{\mu_i + \nu_i +\pi_i} e_i \ot e_i \ot e_i$.
If the limit of $\s(t)\lan m\ran$ exists for $t\to 0$,
then $\mu_i + \nu_i +\pi_i \ge 0$ for all~$i$.
On the other hand,
$\sum_{i} (\mu_{i} + \nu_i +\pi_i) = 0$.
It follows that $\mu_{i} + \nu_i +\pi_i=0$ for all~$i$,
hence
$\s(t)\lan m\ran =\lan m\ran$.
\qed

\subsection{Proof of Lemma~\ref{le:unique-min}}

Recall that a partition $\la=(\la_1,\la_2,\ldots)$ is a weakly decreasing sequence
of natural numbers such that only finitely many components are nonzero.
We define the {\em length} of~$\la\ne 0$ as $\len(\la):=\max\{i \mid \la_i \ne 0\}$
and we put $\len(0):=0$.
The componentwise sum of partitions is well defined.
Note that $\la \preceq \mu$ implies  $\la+\nu \preceq \mu+\nu$
for any partitions $\la,\mu,\nu$.
We call $s(n) := (n,n-1,\ldots,1)$ the {\em symmetric staircase partition}
with $n$ rows and $n$ columns.
Note that if $\la$ is a regular partition with at least $j$ nonzero rows, then
$\la-s(j)$ is again a partition, since $\la$ has at least $j-i+1$ boxes in row $i$.

(1) Let $d=qm + r$ with $0\le r < m$. Then
$\square_m(d) :=  (q^m) + (1^r)$  (frequency notation)  
is the unique smallest element of $\Par_m(d)$.
The corresponding diagram has $q$ columns of length~$m$ plus
one additional column of length~$r$.

For given $d,m\in\N$ we set
$\ell := \ell(m,d) := \max\{n\leq m \mid n(n+1)/2 \leq d\}$
and we define the {\em staircase partition}
$\bot_m(d) := s(\ell) + \square_{\ell}(d-|s(\ell)|)$, see Figure~\ref{fig:bot}.
We observe that $\len(\bot_m(d))=\ell$ and moreover $\bot_m(d) = \bot_{\ell}(d)$.
\begin{figure}[h]
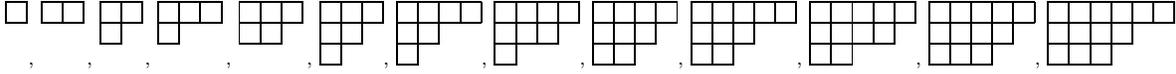

{\tiny
\parbox[b][0.875cm][t]{0.3cm}{\young(\ )},
\parbox[b][0.875cm][t]{0.6cm}{\young(\ \ )},
\parbox[b][0.875cm][t]{0.6cm}{\young(\ \ ,\ )},
\parbox[b][0.875cm][t]{0.9cm}{\young(\ \ \ ,\ )},
\parbox[b][0.875cm][t]{0.9cm}{\young(\ \ \ ,\ \ )},
\young(\ \ \ ,\ \ ,\ ),
\young(\ \ \ \ ,\ \ ,\ ), \young(\ \ \ \ ,\ \ \ ,\ ), \young(\ \ \ \ ,\ \ \ ,\ \ ),
\young(\ \ \ \ \ ,\ \ \ ,\ \ ), \young(\ \ \ \ \ ,\ \ \ \ ,\ \ ),
\young(\ \ \ \ \ ,\ \ \ \ ,\ \ \ ),
\young(\ \ \ \ \ \ ,\ \ \ \ ,\ \ \ )} 
\caption{The staircase partitions $\bot_3(d)$ for $d=1,\ldots,13$.}\label{fig:bot}
\end{figure}

Part~(1) of Lemma~\ref{le:unique-min} claims that
$\bot_m(d) \preceq\beta$ for any regular partition~$\beta\in\Par_m(d)$.

For showing this, put $\tilde{\ell}: =\len(\beta)$ and
note that $\beta - s(\tilde{\ell})$ is a partition by the previous observation.
If we had
$\ell +1 \le \tilde \ell$, then 
$\tilde{\beta} - s(\ell +1)$ would be a partition as well and hence
$d = |\beta| \geq \frac{(\ell+1)(\ell+2)}{2}$ contradicting the maximality of $\ell$.
So we have $\len(\bot_m(d)) =\ell \geq \tilde \ell$ and hence
$\bot_m(d) - s(\tilde \ell)$ is a partition.

We note that the subpartition consisting of the first $\tilde \ell$ rows of $\bot_m(d) - s(\tilde \ell)$
equals $\square_{\tilde \ell}(d')$ for some $d'\leq d-|s(\tilde \ell)|$.
Moreover,
$\square_{\tilde \ell}(d') \preceq \square_{\tilde \ell}(d-|s(\tilde \ell)|) \preceq \beta-s(\tilde \ell)$,
where the last inequality is due to the minimality of $\square_{\tilde{\ell}}(\cdot)$. 
It follows that
$\bot_m(d) - s(\tilde \ell) \preceq \beta-s(\tilde \ell)$,
which completes the proof of part~(1).

(2). 
For $k > \frac{m(m+1)}{2}+m$
we define the following regular partition in $\Par_{m+1}(d+km)$: 
$$
\la^k_{\reg} := (\la_1+k-1,\la_2+k-2,\ldots,\la_m+k-m,\frac{m(m+1)}{2}) .
$$ 
Part~(1) yields $\bot_{m+1}(d+km) \preceq \la^k_{\reg}$. 
Since $\la^k_{\reg} \preceq (\la_1+k,\ldots,\la_m+k,0)$ 
the claim follows.  
\qed

\section*{Appendix: Section~\ref{se:mamu}}

\subsection{Proof of Proposition~\ref{th:stab-mamu}}

We provide the proof in the cubic case only and thus assume $U_i=\C^n$.
The matrix multiplication tensor $\mamU$ is the
structural tensor of the associative algebra $A=\End(U)$.
Note that $A^\times=\GL(U)$.
Recall that $L_a,R_b\colon A\to A$ denote the left multiplication with $a$
and the right multiplication with $b$, respectively ($a,b\in A$).
If we interpret $A=U\ot U^*$, then we have $L_aR_b = a\ot b^*$.
Lemma~\ref{le:deGroote} states that any element~$g$ of
$\stab(\mamU)$ is of the form
$$
 g = \big( L^*_{\e^{-1}} (\psi^{-1})^*, R^*_{\eta^{-1}} (\psi^{-1})^*, L_\e R_\eta \psi \big)
$$
for some $\e,\eta \in A^\times, \psi\in\Aut A$.
The Skolem-Noether Theorem~\cite{hers:94} implies that any automorphism $\psi$ of $A$
is of the form $\psi=L_\rho R_{\rho^{-1}}$ for some $\rho\in A^\times$.
We thus obtain
$$
 \psi^{-1}L_{\e^{-1}} = L_{\rho^{-1}}R_\rho L_{\e^{-1}}
 = L_{\rho^{-1}\e^{-1}}R_\rho = \rho^{-1}\e^{-1} \ot \rho^*,
$$
which implies
$L^*_{\e^{-1}} (\psi^{-1})^* = (\psi^{-1} L_{\e^{-1}})^* = ((\e\rho)^{-1})^* \ot \rho$.
Similarly, we obtain
$R^*_{\eta^{-1}} (\psi^{-1})^* = (\rho^{-1})^* \ot \eta^{-1}\rho$.
Finally,
$L_\e R_\eta\psi = \e\rho \ot (\rho^{-1}\eta)^* \simeq (\rho^{-1}\eta)^* \ot \e\rho$.
(The flip $\simeq$ is due to our convention
$\Hom(U_3,U_1) \simeq U_3^*\ot U_1$,
unlike $\Hom(U_2,U_1) \simeq U_1\ot U_2^*$, $\Hom(U_3,U_2) \simeq U_2\ot U_3^*$.)
Setting $\a_1=\e\rho$, $\a_2=\rho$, $\a_3=\eta^{-1}\rho$ we see that~$g$
has the required form.
\qed

\begin{lemma}\label{pro:charstabmamu}
If the stabilizer of
$w\in  (U_1^*\ot U_2)\ot (U_2^*\ot U_3)\ot (U_3^*\ot U_1)$
contains the stabilizer~$\Hc$ of $\mamU$, 
then $w=c\,\mamU$ for some $c\in \C$. 
\end{lemma}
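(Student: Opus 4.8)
The plan is to mimic the uniqueness argument for unit tensors (Lemma~\ref{le:ut-unique}), exploiting the rich torus contained in $\Hc$ together with a single nonsemisimple element to pin down which coordinates of $w$ can be nonzero. Concretely, write $w$ in the basis of $(U_1^*\ot U_2)\ot(U_2^*\ot U_3)\ot(U_3^*\ot U_1)$ induced by chosen bases of the $U_i$; using the identification $U_i\ot U_j^* \simeq \Hom(U_j,U_i)$, a coordinate of $w$ is indexed by a triple of matrix entries, i.e. by indices $(i_1,j_2;\,i_2,j_3;\,i_3,j_1)$. I would first restrict attention to the diagonal torus: by Proposition~\ref{th:stab-mamu}, $\Hc\supseteq\im\Phi$ contains all $\Phi(a_1,a_2,a_3)$ with $a_k=\diag(t_{k,1},\dots,t_{k,n_k})$, and invariance of $w$ under these forces, for every coordinate with nonzero value, a multiplicative relation among the $t_{k,\cdot}$ that already vanishes on the trace tensor $\mamU$. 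Comparing with the relation that holds for the nonzero coordinates of $\mamU$ itself (those of the form $(i,j;\,j,k;\,k,i)$, the ``cyclic'' ones), one sees the torus relations for $w$ force the corresponding indices to match up in the same cyclic pattern; so after this step $w$ is supported on coordinates $(i,j;\,j,k;\,k,i)$, i.e. $w$ is a linear combination of the indecomposable tensors $E^{12}_{ij}\ot E^{23}_{jk}\ot E^{31}_{ki}$.

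Next I would use the nontoral elements of $\Hc$ to collapse these remaining coefficients. Under the identification, the diagonal of $\im\Phi$ obtained by taking $a_1=a_2=a_3=a\in\GL(U)$ acts on each factor by the conjugation-type action coming from $\Phi$, and the residual tensor, being supported exactly on the ``cyclic'' coordinates, is $G$-equivariantly nothing but $\sum c_{ijk}\, E^{12}_{ij}\ot E^{23}_{jk}\ot E^{31}_{ki}$ which one checks is the structural tensor of a bilinear map $\Hom(U_2,U_1)\times\Hom(U_3,U_2)\to\Hom(U_3,U_1)$ of the form $(\varphi,\psi)\mapsto \varphi D \psi$ for a fixed ``distortion''. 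Invariance under $\Phi(a,a,a)$ for all $a\in\GL(U)$ forces this distortion to commute with all of $\GL(U)$ on both sides, hence to be a scalar; equivalently, one applies the Skolem--Noether step exactly as in the proof of Proposition~\ref{th:stab-mamu}: imposing that $\stab(w)\supseteq\Hc$ and running Lemma~\ref{le:deGroote} / Lemma~\ref{le:bilstab} backwards shows the associated bilinear map must be multiplication in $\End(U)$ up to a scalar. Either way one concludes $w=c\,\mamU$.

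I expect the main obstacle to be the bookkeeping in the first step: making the index-matching argument clean across three tensor factors with the cyclic convention $\Hom(U_3,U_1)\simeq U_3^*\ot U_1$ (the ``flip'' noted after \eqref{eq:mophi}), and in particular checking that no ``mixed'' coordinate can survive. The torus in $\im\Phi$ is only $(n_1+n_2+n_3-1)$-dimensional (the kernel of $\Phi$ is $\C^\times$), so the multiplicative relations it imposes are not quite enough by themselves to kill every off-pattern coordinate; as in Lemma~\ref{le:ut-unique} one must combine them with a perturbation/rescaling trick (replace one $t_{k,i}$ by $2t_{k,i}$ and compensate) to derive the needed contradiction. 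Once the support is reduced to the cyclic coordinates, the Skolem--Noether finish is essentially a rerun of the argument already given for Proposition~\ref{th:stab-mamu} and should be routine. In the general (non-cubic) format the same scheme works verbatim since $\mamU$ is still the structural tensor of composition of linear maps; I would state and prove it in the cubic case and remark that the argument is uniform.
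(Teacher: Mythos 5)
Your torus-plus-nontoral approach, modeled on Lemma~\ref{le:ut-unique}, can be made to work, but it is genuinely different from---and considerably longer than---the paper's argument. The paper observes that the $\Hc$-action on $W$, pulled back along $\Phi$, is simply the product action of $\Sc=\GL(U_1)\ti\GL(U_2)\ti\GL(U_3)$ on $W\cong (U_1^*\ot U_1)\ot(U_2^*\ot U_2)\ot(U_3^*\ot U_3)$ after reshuffling tensor factors. Hence the $\Hc$-invariant subspace factors as $\bigotimes_i(U_i^*\ot U_i)^{\GL(U_i)}$, and each factor is one-dimensional by Schur's lemma, i.e.\ by~\eqref{eq:LWRdual} with $\mu_i=(1,0,\ldots,0)$. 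Since $\mamU$ is a nonzero invariant, $w=c\,\mamU$. This is a one-step dimension count, uniform in the format $(n_1,n_2,n_3)$, with no coordinate bookkeeping.

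Two concrete corrections to your sketch. First, your worry that the rank-$(n_1+n_2+n_3-1)$ torus is ``not quite enough'' and that a perturbation trick is needed is unfounded. In Lemma~\ref{le:ut-unique} the trick is needed because the diagonal entries are coupled by $a_ib_ic_i=1$; here $a_1,a_2,a_3$ range independently over their diagonal tori, and the one-dimensional kernel $\{(c\Id,c\Id,c\Id)\}$ of $\Phi$ evaluates trivially on \emph{every} monomial character of the form $t_{1,i_1}^{-1}t_{1,j_3}t_{2,j_1}t_{2,i_2}^{-1}t_{3,j_2}t_{3,i_3}^{-1}$, so the exponents of each $t_{k,r}$ must vanish individually and the cyclic pattern $i_1=j_3$, $j_1=i_2$, $j_2=i_3$ follows at once. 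Second, and more substantively, the claim that a general cyclic tensor $\sum c_{ijk}E^{12}_{ij}\ot E^{23}_{jk}\ot E^{31}_{ki}$ is the structural tensor of a map $(\varphi,\psi)\mapsto\varphi D\psi$ is false: cyclic tensors form an $n_1n_2n_3$-dimensional space while such maps are parameterized by $D\in\End(U_2)$. One must first use the invariance under $\Phi(\Id,a_2,\Id)$ (giving $\beta(\varphi a_2,\psi)=\beta(\varphi,a_2\psi)$), $\Phi(a_1,\Id,\Id)$, and $\Phi(\Id,\Id,a_3)$ to derive the factored form $\beta(\varphi,\psi)=\varphi D\psi$ with $D=\beta(\Id,\Id)$, and only then does conjugation-invariance of $D$ under $\GL(U_2)$ force $D$ scalar. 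Likewise, Lemma~\ref{le:deGroote} as stated applies to structure tensors of unital associative algebras, so ``running it backwards'' on a non-algebra tensor is not immediate. None of this is fatal, but it is exactly the work the paper's proof bypasses.
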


\begin{proof}
We have to show that the space of $\Hc$-invariants of
$(U_1^*\ot U_2)\ot (U_2^*\ot U_3)\ot (U_3^*\ot U_1)$ is one-dimensional.
Due to the description of $\Hc$ in Proposition~\ref{th:stab-mamu}
it suffices to prove that the space of $\GL(U_i)$-invariants
of $U_i^*\ot U_i$ is one-dimensional.
The latter follows as a special case of~\eqref{eq:LWRdual}
taking $\mu_i=(1,0,\ldots,0)$.
\end{proof}

\subsection{Proof of Proposition~\ref{th:mamu-stable}}

We follow \cite[Proposition 5.2.1]{meyer:06}.
Assume that $U_i=\C^{n_i}$.
Let $T(K_s)$ and $T_s$ denote the maximal tori of 
$\Sc_s := \SL(U_1) \ti\SL(U_2)\ti\SL(U_3)$ and $\Gc_s$,
respectively, consisting of triples of diagonal matrices with determinant~$1$.
It is clear that $R_s :=\Phi(T(K_s))$ is a subgroup of $T_s$.
Since $R_s$ is a connected subgroup of a torus, it is itself a torus and
thus reductive~\cite{kraf:84}.

We claim that $T_s$ equals the centralizer of $R_s$ in $\Gc_s$.
Indeed suppose that $g=(g_1,g_2,g_3)\in\Gc_s$ commutes with
all elements of $R_s$. Then $g_1$ commutes with all
diagonal matrices $\diag(a_i b_j^{-1})$, where
$a_1\cdots a_{n_1}=1$ and $b_1\cdots b_{n_2}=1$.
It is possible to choose $a_i,b_j$ such that $a_i b_j^{-1}$ are
pairwise distinct. Therefore $g_1$ must be a diagonal matrix.
Similarly, $g_2,g_3$ must be diagonal so that $g\in\Gc_s$.

We apply now Theorem~\ref{th:stb-crit} to the reductive subgroup
$R_s$ of the stabilizer $\Hc$ of $\mamU$.
Any one-parameter subgroup $\s\colon\C^\times\to T_s$ is of the form
$\s(t)=(\s_1(t),\s_2(t),\s_3(t))$ with
$$
 \s_1(t) = \diag(t^{\mu_{ij}}),\
 \s_2(t) = \diag(t^{\nu_{jk}}),\
 \s_3(t) = \diag(t^{\pi_{ki}}) ,
$$
where $\mu_{ij},\nu_{jk},\pi_{ki}\in\Z$ for $i\le n_1,j\le n_2, k\le n_3$.
Since $\det\s_1(t)= \det\s_2(t) = \det\s_3(t) = 1$ we must have
\begin{equation}\label{eq:sum0}
 \sum_{i,j}\mu_{ij} = 0,\
 \sum_{j,k}\nu_{jk} = 0,\
 \sum_{k,i}\pi_{ki} = 0.
\end{equation}
Let $(e_{ij}), (e_{jk}), (e_{ki})$ denote the
standard bases of
$\C^{n_1\ti n_2},\C^{n_2\ti n_3},\C^{n_3\ti n_1}$,
respectively. The matrix multiplication tensor
can then be expressed as
$$
 \lan n_1,n_2,n_3\ran = \sum_{i,j,k} e_{ij} \ot e_{jk} \ot e_{ki} .
$$
We have
$$
 \s(t) \lan n_1,n_2,n_3\ran =
  \sum_{i,j,k} t^{\mu_{ij} + \nu_{jk} + \pi_{ki}} e_{ij} \ot e_{jk} \ot e_{ki} .
$$
Suppose that the limit  of $\s(t) \lan n_1,n_2,n_3\ran$ for $t\to 0$ exist.
Then
$$
 \forall i,j,k\quad \mu_{ij} + \nu_{jk} + \pi_{ki} \ge 0 .
$$
Summing over all $i,j,k$ and using~\eqref{eq:sum0} we get
$$
 \sum_{i,j,k} (\mu_{ij} + \nu_{jk} + \pi_{ki}) =
 \sum_k \sum_{i j} \mu_{i j} + \sum_i \sum_{j,k}\nu_{jk} + \sum_j \sum_{k,i}\pi_{ki} = 0.
$$
Therefore, we have
$\mu_{ij} + \nu_{jk} + \pi_{ki} = 0$
for all $i,j,k$.
We conclude  that
$\lim_{t\to 0}\s(t) \lan n_1,n_2,n_3\ran = \lan n_1,n_2,n_3\ran$.
Theorem~\ref{th:stb-crit} implies that the $\Gc_s$-orbit of
$\lan n_1,n_2,n_3\ran$ is closed.
\qed


\section*{Appendix: Section~\ref{se:comp}}

\subsection{Explicit realizations of Schur-Weyl modules}\label{se:explicit-real}

For the following well known facts see~\cite{fult:97,fuha:91}. 
Let $V=\C^m$ with the standard basis $e_1,\ldots,e_m$. 
For a partition $\la\vdash_m d$ we denote by $\Tau_m(\la)$ 
the set of tableaux~$T$ of shape~$\la$
with entries in $\{1,2,\ldots,m\}$. 
Every $T\in\Tau_m(\la)$ has a content~$\a\in\N^m$, 
where $\a_j$ counts the number of occurences of $j$ in $T$. 

Let $\St_\la$ denote the standard tableau arising when we 
number the boxes of the Young diagram of $\la$ 
columnwise downwards, starting with the leftmost column,
cf.~Figure~\ref{fig:Sla}.
We assign to $T\in\Tau_m(\la)$ the basis vector
$e(T):=e_{j_1}\ot \cdots \ot e_{j_d}\in V^{\ot d}$, where 
$j_k\in\{1,\ldots m\}$ is the entry of $T$ at the box 
which is numbered~$k$ in $\St_\la$.
In other words, $j_k$ is the $k$th entry of~$T$
when we read the tableau~$T$ 
columnwise downwards, starting with the leftmost column.
Note that $e(T)$ is a weight vector with respect to the  
subgroup $T_m\subseteq \GL_m$ of diagonal matrices, 
and the weight of $e(T)$ equals the content~$\a$ of $T$. 
One should think of the tableau $T$ as a convenient way 
to record the basis vector $e(T)$.
\begin{figure}[h]
\young(1468,257,3)
\caption{\small $\St_\la$ for $\la=(4,3,1)$,
$P_\la=\Perm\{1,4,6,8\}\ti \Perm\{2,5,7\}\ti\Perm\{3\}$,
$Q_\la=\Perm\{1,2,3\}\ti\Perm\{4,5\}\ti\Perm\{6,7\}\ti\Perm\{8\}$.
}\label{fig:Sla}
\end{figure}

Let $P_\la$ the subgroup of permutations in $\St_d$ preserving
the rows of $\St_\la$ and denote by 
$Q_\la$ the subgroup of permutations in $\St_d$ preserving
the columns of $S_\la$, cf.~Figure~\ref{fig:Sla}.
To any $T\in\Tau_m(\la)$ we assing now the following vector in~$V^{\ot d}$: 
$$
 v(T) := \frac1{|P_\la| |Q_\la|} \sum_{\s\in Q_\la \atop \pi\in P_\la}
              \sgn(\s)\, \s \pi\, e(T) . 
$$
Note that $v(T)$ is a weight vector and its weight equals the content~$\a$ of $T$. 

Let $T_\la$ denote the semistandard tableau which in the 
$i$th row only has the entry $i$. Clearly, $T_\la$ has the content~$\la$. 
Let $\mu = \la'$ denote the partition dual to $\la$ and let $\ell$ be the length of~$\mu$. 
Then we have 
\begin{equation}\label{eq:defvla}
 v_\la := v(T_\la) = (e_1\wedge\ldots\wedge e_{\mu_1}) \ot \cdots\ot 
                                  (e_{1} \wedge\ldots \wedge e_{\mu_{\ell}}) .
\end{equation}
It is easy to see that $v_\la$ is a $U_m$-invariant weight vector of weight $\la$, 
where $U_m\subseteq \GL_m$ denotes the subgroup of upper triangular matrices 
with ones on the main diagonal. 
Hence the $\GL_m$-submodule $V_\la$ generated by $v_\la$ is 
irreducible of highest weight $\la$. 
We have
$$
 V_\la =\span\big\{  (x_1\wedge\ldots\wedge x_{\mu_1}) \ot \cdots\ot 
                                  (x_{1} \wedge\ldots \wedge x_{\mu_{\ell}}) \mid x_i \in V \big\}
$$ 
and $V_\la$ is spanned by 
$\{v(T) \mid T\in \Tau_m(\la) \}$.
It is well known that the $v(T)$ form a basis of~$V_\la$ when $T$ 
runs over all semistandard tableaux in $\Tau_m(\la)$.
A basis of the weight space $V_\la^\a$ is provided by the~$v(T)$ 
where $T\in \Tau_m(\la)$ runs over all semistandard tableaux 
with content~$\a$. 
We embed $S_m$ into $\GL_m$ by mapping $\pi\in S_m$ to the 
permutation matrix $P_\pi$. 
Note that the group~$S_m$ acts on $\Tau_m(\la)$ by permutation of the entries 
of the tableaux. 
Then we have 
$P_\pi v(T) = v(\pi T)$ for $T\in\Tau_m(\la)$. 
We have thus found explicit realizations 
of Schur-Weyl modules.

Suppose now $d=m$ and consider 
the weight space~$\Sp_\la :=V_\la^{\e_m}$ 
of weight $\e_m=(1,\ldots,1)$. 
Then $\Sp_\la$ is a $S_m$-submodule of~$V_\la$. 
Moreover, the vectors $v(T)$, where 
$T$ runs over all standard tableaux of shape~$\la$, 
provide a basis of $\Sp_\la$. 
One can show that $\Sp_\la$ is irreducible and isomorphic to $[\la]$. 
We have thus also found an explicit realization of the 
irreducible $S_m$-module $[\la]$.

\subsection{Tableaux straightening}

An explicit description of the action 
the subgroup $S_m$ of $\GL_m$ on the basis $(v(T))$
is provided by the following tableau straightening algorithm~\cite[p.97-99, p.110]{fult:97}.  
It takes as input any tableau~$T\in\Tau_m(\la)$ and expresses the vector~$v(T)$ 
as an integer linear combination of the basis vectors $v(S)$, 
where $S$ is semistandard.  
This way, we obtain an explicit description of 
the operation of $\stab(\a)$ on the weight space $V_\la^\a$, which is 
required for applying Theorem~\ref{th:ut-dimVH}.

\begin{enumerate}
 \item If $T$ is semistandard, return $v(T)$. 
 \item If the columns of $T$ do no have pairwise distinct entries, return~$0$.
       Otherwise, apply column permutations~$\pi$ to put all columns in strictly increasing order 
       by applying the rule $\pi v(T) = \sgn(\pi)\, v(\pi T)$. 
 \item If the resulting tableau is not semistandard, suppose the $k$th entry of the $j$th column
       is strictly larger than the $k$th entry of the $(j+1)$th column.
       Then we have $v(T) = \sum_S v(S)$, where $S$ ranges over all tableaux that arise from~$T$ 
       by exchanging the top $k$ elements from the $(j+1)$th column with any selection of 
       $k$~elements in the $j$th column, preserving their vertical order. 
       Continue recursively with the resulting~$S$.
\end{enumerate}
See \cite[p. 110]{fult:97} for a proof that this algorithm terminates
(whatever choice of $k$ and $j$ is made in step~(3)).

\subsection{Explicit construction of highest weight vectors in $\Oh(W)$}
\label{se:hwv}

Our goal here is to give an explicit description of the space of highest weight vectors 
of $\Oh(W^*)_d$ that is amenable to calculations 
(we replaced $W$ by $W^*$ to simplify notation). 
We shall proceed in several steps. 

1. Recall from~\eqref{eq:defvla} that $v_\la$ is a highest weight vector of $V_\la$. 
We have (recall~\S\ref{se:HWV}) 
\begin{equation}\label{eq:hwv}
 \HWV_\la (V^{\ot d}) = \span\big\{ \pi\, v_\la \mid \pi \in S_d \big\} .
\end{equation}
This follows from Schur-Weyl duality~\eqref{eq:SW-duality} 
and the fact that $[\la]$ is spanned by
the $S_d$-orbit of any of its nonzero vectors.

2. Now assume $W_i=\C^{m_i}$,  
consider $W:= W_1 \otimes W_2 \otimes W_3$ of format $\um$
with the group $G=\GL(W_1)\ti\GL(W_2)\ti\GL(W_3)$ acting on $W$ and let 
$\underline \la = (\la_1,\la_2,\la_3)\in\Lambda^+(\um)$. 
We define 
$v_\ula :=  v_{\la_1} \ot v_{\la_2} \ot v_{\la_3} \in W^{\ot d}$. 
Note that the group $(S_d)^3$ acts on $W^{\ot d}$.
Equation~\eqref{eq:hwv} implies 
$$
\HWV_\ula (W^{\ot d}) 
 = \HWV_{\la_1}(W_1^{\ot d})\ot\HWV_{\la_2}(W_2^{\ot d})\ot\HWV_{\la_3}(W_3^{\ot d}) 
 = \span\big\{\underline{\pi}\, v_\ula \mid \underline{\pi} \in (S_d)^3 \big\} .
$$ 

3. We embed $S_d$ into $(S_d)^3$ diagonally via $\pi \mapsto (\pi,\pi,\pi)$.
The induced action on $W^{\ot d}$ just permutes the different copies of $W$.
Consider now the projection 
$P_\Sym\colon W^{\ot d} \to \Sym^d W$ 
onto the subspace of symmetric tensors
given by ${d!}^{-1} \sum_{\pi \in S_d} \pi$.
Recall also $\Oh(W^*)_d\simeq \Sym^d W$. 
We arrive at the desired characterization
\begin{equation*}\label{le:hwvO}
 \HWV_\la(\Oh(W^*)_d) = \HWV_\la(\Sym^d W) =
 P_\Sym\big(\HWV_\la(W^{\ot d})\big) = 
  \span \big\{P_\Sym (\underline{\pi}\, v_{\ula}) \mid \underline \pi \in S_d^3 \big\} .
\end{equation*}

Let $w\in W^*$. In order to prove that $V_\ula(G)$ occurs in
$\Oh(\overline{G w})$ 
it is sufficient to exhibit some 
$\underline{\pi} \in (S_d)^3$ and some $g \in G$ such that
$P_\Sym (\underline{\pi}\, v_{\underline{\la}}) (gw) \neq 0$ 
as polynomial evaluation.
A straightforward algorithm for this evaluation requires at least 
$R(w)^d$ steps and thus only allows the study of small examples 
in practice. 

\subsection{Details of the proof of Lemma~\ref{le:G-obst}}

Recall from \S\ref{se:rep-unit} that the weight space $V_\la^\a$ is invariant 
under the action of $\stab(\a)$. We are interested in the splitting of $V_\la^\a$ 
into irreducible $\stab(\a)$-modules. 

\begin{remark}\label{re:gay}
In the special case 
$\a=d\e_m$, where $\stab(\a)=S_m$, it is known~\cite {gay:76} that 
the arising multiplicities are special plethysm coefficients, namely  
$$
\mult([\pi],V_\la^{d\e_m}) = \mult(V_\la(\GL_m), S_\pi(\Sym^d \C^m))\ 
\mbox{ for $\pi\vdash m$}.
$$ 
\end{remark}

\begin{lemma}\label{le:split1}
Let $0\le s < m$, $d\ge 1$. Then 
$V_{(dm-s) 1^s}^{d\e_m} \simeq [(m-s)\,1^s]$ as $S_m$-modules. 
\end{lemma}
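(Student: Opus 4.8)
The plan is to realize $V_{(dm-s)\,1^s}(\GL_m)$ as the kernel of an explicit $\GL_m$-equivariant map and then cut down to the weight space of weight $d\e_m=(d,\dots,d)$. Since $s<m$, the hook $\la:=(dm-s)\,1^s$ has at most $m$ rows, so $V_\la:=V_\la(\GL_m)\ne 0$. The case $s=0$ is immediate: $V_{(dm)}^{d\e_m}$ is spanned by the single monomial $x_1^d\cdots x_m^d$, on which $S_m$ acts trivially, so $V_{(dm)}^{d\e_m}\simeq[(m)]$. For $1\le s\le m-1$ I would start from the multiplicity-free Pieri decomposition $\Sym^a(\C^m)\ot\Lambda^b(\C^m)\simeq V_{(a)\,1^b}\oplus V_{(a+1)\,1^{b-1}}$ (valid for $a,b\ge 1$), together with the canonical contraction
$$\phi\colon\ \Sym^{dm-s}(\C^m)\ot\Lambda^{s}(\C^m)\ \to\ \Sym^{dm-s+1}(\C^m)\ot\Lambda^{s-1}(\C^m),\quad f\ot(v_1\wedge\cdots\wedge v_s)\ \mapsto\ \sum_{i}(-1)^{i-1}(fv_i)\ot(v_1\wedge\cdots\widehat{v_i}\cdots\wedge v_s),$$
which is $\GL_m$-equivariant and nonzero (evaluate at $x_1^{dm-s}\ot(e_2\wedge\cdots\wedge e_{s+1})$, legitimate since $s+1\le m$). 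Comparing the constituents of source and target forces $\mathrm{im}\,\phi=V_{(dm-s+1)\,1^{s-1}}$, and then multiplicity-freeness of the source gives $\ker\phi=V_\la$ as a $\GL_m$-submodule.

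Next I would pass to the $d\e_m$-weight space, which is an $S_m$-submodule because $S_m\subseteq\GL_m$ permutes coordinates. The key point is an identification, for any $a,b$ with $a+b=dm$, of $\big(\Sym^a(\C^m)\ot\Lambda^b(\C^m)\big)^{d\e_m}$ with $\Lambda^b(\C^m)$ as an $S_m$-module: a weight vector of weight $d\e_m$ is forced to have the form $m_K\ot e_K$, where $K=\{k_1<\cdots<k_b\}\subseteq\{1,\dots,m\}$, $e_K:=e_{k_1}\wedge\cdots\wedge e_{k_b}$, and $m_K$ is the unique monomial in the $x_i$ in which $x_i$ occurs $d-1$ times for $i\in K$ and $d$ times otherwise; the assignment $e_K\mapsto m_K\ot e_K$ is $S_m$-equivariant (the symmetric factor produces no sign, the wedge the usual one). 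Since $\Lambda^b(\C^m)\simeq\mathrm{Ind}_{S_b\ti S_{m-b}}^{S_m}\big([1^b]\ot[(m-b)]\big)$, Pieri for $S_m$ yields $\big(\Sym^a\ot\Lambda^b\big)^{d\e_m}\simeq[(m-b+1)\,1^{b-1}]\oplus[(m-b)\,1^b]$ for $1\le b\le m-1$, and $\simeq[(m)]$ for $b=0$.

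Putting this together, $\phi$ restricts to an $S_m$-morphism $\psi\colon [(m-s+1)\,1^{s-1}]\oplus[(m-s)\,1^s]\ \to\ [(m-s+2)\,1^{s-2}]\oplus[(m-s+1)\,1^{s-1}]$ (with the right-hand side read as $[(m)]$ when $s=1$), and $V_\la^{d\e_m}=\ker\psi$ by the preceding paragraphs. Schur's lemma forces $\psi$ to annihilate the summand $[(m-s)\,1^s]$ and to act as a scalar on $[(m-s+1)\,1^{s-1}]$; that scalar is nonzero because $\mathrm{im}\,\psi=V_{(dm-s+1)\,1^{s-1}}^{d\e_m}$, which is nonzero as $d\e_m\preceq(dm-s+1)\,1^{s-1}$ (an elementary dominance check using $s\le m-1$ and $d\ge 1$). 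Hence $\ker\psi=[(m-s)\,1^s]$, i.e.\ $V_{(dm-s)\,1^s}^{d\e_m}\simeq[(m-s)\,1^s]$, which is the claim. I expect the only delicate step to be the second one — fixing the $S_m$-module structure of the weight space with correct signs, and dispatching the degenerate readings at $s=0$ and $s=1$ — while the first and third steps are routine Pieri and Schur bookkeeping. (One could instead go through Remark~\ref{re:gay} and extract a hook coefficient of the plethysm $s_\pi[h_d]$, but the argument above stays entirely within ordinary Pieri/Schur calculus.)
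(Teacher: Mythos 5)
Your proof is correct, but it takes a genuinely different route from the paper's. The paper works directly with the tableau machinery of \S\ref{se:explicit-real}: a basis of $V_{(dm-s)\,1^s}^{d\e_m}$ is given by the vectors $v(T)$ for $T$ semistandard of shape $(dm-s)\,1^s$ and content $d\e_m$, and a basis of $[(m-s)\,1^s]$ (realized inside $(\C^m)^{\otimes m}$) by $v(T')$ for $T'$ standard of shape $(m-s)\,1^s$; deleting $d-1$ boxes carrying each of the entries $1,\dots,m$ from the first row gives a bijection between these tableau sets, the straightening algorithm shows this bijection intertwines the $S_m$-actions, and the isomorphism is read off. You instead stay inside classical highest-weight calculus: realize $V_{(dm-s)\,1^s}$ as the kernel of an explicit $\GL_m$-equivariant contraction $\Sym^{dm-s}\otimes\Lambda^s\to\Sym^{dm-s+1}\otimes\Lambda^{s-1}$ via Pieri, identify the $d\e_m$-weight space of $\Sym^a\otimes\Lambda^b$ with the permutation $S_m$-module $\Lambda^b(\C^m)$, decompose the latter by the symmetric-group Pieri rule, and finish with Schur's lemma together with a dominance check. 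Both proofs are sound: the paper's is shorter and concrete, but relies on the explicit tableau realization and the straightening algorithm (which it needs anyway for Lemma~\ref{le:split2}); yours is longer but self-contained within standard Pieri/Schur bookkeeping and bypasses straightening entirely. A minor shortcut for your last step: once $[(m-s)\,1^s]\subseteq\ker\psi$ is known, a dimension count suffices, since both $V_{(dm-s)\,1^s}^{d\e_m}$ (a Kostka number) and the hook $[(m-s)\,1^s]$ have dimension $\binom{m-1}{s}$.
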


\begin{proof}
Let $\mathcal{ST}$ denote the set of semistandard tableaux of shape $(dm-s) 1^s$ and content $d\e_m$. 
Moreover, let $\mathcal{S}$ denote the set of standard tableaux of shape $(m-s) 1^s$.
Let $T\in\mathcal{ST}$ and suppose that $1,a_1,\ldots,a_s$ are the entries of the first column of $T$. 
After deleting $d-1$ of the boxes with the entries $1,\ldots,m$ from the first row of $T$,
we obtain a standard tableau $\psi(T)\in\mathcal{S}$.
It is clear that $\psi\colon\mathcal{ST}\to\mathcal{S}$ is a bijection. 
The algorithmic description of the Schur-Weyl modules above easily implies that 
$\psi(\pi T)= \pi \psi(T)$ for any $\pi\in S_m$.
We use now that $(v(T))_{T\in\mathcal{ST}}$ and 
$(v(T'))_{T'\in\mathcal{S}}$ form a basis of 
the weight space $V_{(dm-s) 1^s}^{d\e_m}$ and 
the $S_m$-module $[(m-s)\,1^s]$ realized as a submodule of 
$(\C^m)^{\ot m}$ as in \S\ref{se:explicit-real}.
\end{proof}

Taking $d=2$ and $s=3$ we get  
$\big(V_{2\e_m}^{2\e_m} \ot V_{2\e_m}^{2\e_m} \ot V_{(2m-3) 1^3}^{2\e_m}\big)
 \simeq \big([m] \ot [m] \ot [(m-3)\, 1^3] \big)^{S_m} = 0$.
 
\begin{lemma}\label{le:split2}
As $S_{m-1}\ti S_2$-modules we have 

1. $V_{(2^m)}^{2^{m-1}1^2} \simeq [m-1]\ot [2]$.  

2. $V_{(2m-3) 1^3}^{2^{m-1} 1^2} \simeq 
  \big([(m-4)\, 1^3]\ot [2]\big) \oplus \big([(m-2)\, 1]\ot [1^2]\big) \oplus 
   \big([(m-3)\, 1^2]\ot [2]\big) \oplus \big([(m-3)\, 1^2]\ot [1^2]\big)$.
\end{lemma}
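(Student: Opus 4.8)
The plan is to compute each weight space $V_\la^\a$, where $\a = 2^{m-1}1^2$, by first restricting to a block-diagonal subgroup and then reducing part~(2) to Lemma~\ref{le:split1}. Since $\a=(2,\ldots,2,1,1)$ has $m+1$ entries, one works inside $\GL_{m+1}$, whose torus coincides with that of the block-diagonal subgroup $\GL_{m-1}\ti\GL_2$ (acting on the first $m-1$ resp.\ on the last two coordinates), and $\stab(\a)=S_{m-1}\ti S_2$ sits inside this subgroup as permutation matrices. First I would restrict $V_\la$ to $\GL_{m-1}\ti\GL_2$: by the Littlewood--Richardson rule this is $\bigoplus_{\mu,\nu}c^\la_{\mu\nu}\,V_\mu(\GL_{m-1})\ot V_\nu(\GL_2)$, and intersecting with the $\a$-weight space forces the weight $2\e_{m-1}$ on the first factor and $\e_2$ on the second. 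Because $S_{m-1}\ti S_2$ acts trivially on the Littlewood--Richardson multiplicity spaces and, via $P_\pi V^\a=V^{\pi\a}$ (\S\ref{se:rep-unit}), acts on the two factors by their respective weight-space actions, this gives
\[
 V_\la^\a \;=\; \bigoplus_{\mu,\nu} c^\la_{\mu\nu}\; V_\mu(\GL_{m-1})^{2\e_{m-1}}\ot V_\nu(\GL_2)^{\e_2}
\]
as $S_{m-1}\ti S_2$-modules, the sum ranging over $|\mu|=2(m-1)$, $\ell(\mu)\le m-1$, and $\nu\in\{(2),(1,1)\}$.

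Next I would pin down the two types of factor. From the explicit realizations of \S\ref{se:explicit-real}, $V_{(2)}(\GL_2)^{\e_2}$ is spanned by $e_1e_2$ and is thus the trivial $S_2$-module $[2]$, while $V_{(1,1)}(\GL_2)^{\e_2}$ is spanned by $e_1\wedge e_2$ and is thus the sign module $[1^2]$. For the $\GL_{m-1}$-factor I would use that when $\mu=(2(m-1)-s)\,1^s$ with $0\le s<m-1$, Lemma~\ref{le:split1}, applied with its ``$m$'' taken to be $m-1$ and its ``$d$'' taken to be $2$, yields $V_\mu(\GL_{m-1})^{2\e_{m-1}}\simeq[(m-1-s)\,1^s]$; and that for the rectangle $\mu=(2^{m-1})$ one has $V_{(2^{m-1})}(\GL_{m-1})=\det^{\ot 2}$, a one-dimensional module on which a permutation matrix acts by $\sgn^2=1$, so $V_{(2^{m-1})}(\GL_{m-1})^{2\e_{m-1}}\simeq[m-1]$.

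It then remains to carry out the finite Littlewood--Richardson bookkeeping. For $\la=(2^m)$ the only admissible $\mu$ is $(2^{m-1})$, whose skew shape $(2^m)/(2^{m-1})$ is a horizontal domino, forcing $\nu=(2)$ with coefficient~$1$; substituting the identifications above gives $V_{(2^m)}^\a\simeq[m-1]\ot[2]$, proving~(1). For $\la=(2m-3)\,1^3$ one has $|\la/\mu|=2$, so the admissible $\mu$ are exactly $(2m-3,1)$, $(2m-4,1,1)$, and $(2m-5,1^3)$, that is $(2(m-1)-s)\,1^s$ for $s=1,2,3$; their two-box skew shapes are a vertical domino, a pair of disconnected boxes, and a horizontal domino, and inspecting ballot fillings shows $\nu$ runs over $\{(1,1)\}$, $\{(2),(1,1)\}$, and $\{(2)\}$ respectively, each with coefficient~$1$. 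Substituting the factor identifications produces precisely
\[
 \big([(m-2)\,1]\ot[1^2]\big)\oplus\big([(m-3)\,1^2]\ot[2]\big)\oplus\big([(m-3)\,1^2]\ot[1^2]\big)\oplus\big([(m-4)\,1^3]\ot[2]\big),
\]
which is~(2).

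The hard part is not any single computation but keeping the equivariance bookkeeping honest: one must check that the $S_{m-1}\ti S_2$-action on $V_\la^\a$ really factors through the block-diagonal restriction as claimed, and that every $\mu$ occurring is genuinely of hook type $(2(m-1)-s)\,1^s$ so that Lemma~\ref{le:split1} applies. One boundary case should be flagged: when $m=4$ (i.e.\ $n=2$ in Lemma~\ref{le:G-obst}) the shape $(2m-5)\,1^3$ has length $4>m-1$, so $V_{(2m-5)1^3}(\GL_{m-1})=0$ and that summand is simply absent; the stated formula is read with this convention, and in any event the only feature needed later --- that $V_{(2m-3)1^3}^\a$ contains no trivial $S_{m-1}$-constituent --- is unaffected.
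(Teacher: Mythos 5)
Your proof is correct, but it takes a genuinely different route from the paper's. The paper works directly with the basis of the weight space $V_\la^\a$ indexed by semistandard tableaux of shape~$\la$ and content~$\a=2^{m-1}1^2$, sorts these tableaux into explicit classes (cf.\ Figures~\ref{fig:C12}, \ref{fig:C3}), and reads off the $S_{m-1}\ti S_2$-action on the basis vectors via the tableau straightening algorithm; Lemma~\ref{le:split1} enters only through the identification of the span of each class. You instead restrict $V_\la$ from $\GL_{m+1}$ to the block-diagonal subgroup $\GL_{m-1}\ti\GL_2$ using Littlewood--Richardson, intersect with the weight $(2\e_{m-1},\e_2)$, and reduce the hook-shape factors to Lemma~\ref{le:split1}, treating the rectangular factor $V_{(2^{m-1})}(\GL_{m-1})=(\det)^{\ot 2}$ separately. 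This is cleaner as a reduction: the recursion on Lemma~\ref{le:split1} is made explicit, the equivariance step you flag (that $S_{m-1}\ti S_2$ sits inside the block-diagonal subgroup so the branching isomorphism is automatically $S_{m-1}\ti S_2$-equivariant) is exactly the right thing to check and you check it correctly, and the remaining combinatorics is a three-line Littlewood--Richardson computation. What the paper's more pedestrian enumeration buys is an explicit basis of each constituent labelled by standard tableaux, which is convenient for the subsequent verification that the $\stab(\a)$-invariants in the relevant triple tensor product vanish; your approach would have to unwind the branching isomorphism to recover such a basis. Your remark on the degenerate summand for $m=4$ (the shape $(2m-5)\,1^3$ has too many rows for $\GL_{m-1}$ and so contributes~$0$) is correct and consistent with the paper's convention of reading $[(m-4)\,1^3]$ as the zero module in that case.
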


\begin{proof}
1. There is a single semistandard tableau~$T$ of shape $2^m$ and content $2^{m-1}\,1^2$: 
the $i$th row contains the entries $i,i$ for $i<m$ and the $m$th row contains $m,m+1$. 
The tableau $T$ is fixed by the action of $S_{m-1}$. 
The transposition~$\pi$ in $S_2$ exchanges $m$ and $m+1$. 
However, the straightening algorithm shows that $v(\pi T)=v(T)$. 

2. The basis of $V_{(2m-3) 1^3}^{2^{m-1} 1^2}$ is indexed by semistandard tableaux 
which fall into four different classes as indicated in Figure~\ref{fig:C12} 
and Figure~\ref{fig:C3}.
\newcommand{\tabNmE}{{\text{\tiny $m\!\!\!-\!\!\!1$}}}
\newcommand{\tabNpE}{{\text{\tiny $m\!\!\!+\!\!\!1$}}}

\begin{figure}[h]
$$
 \young(112233\cdots\tabNmE\tabNmE m\tabNpE,a,b,c)\quad
 \young(112233\cdots\tabNmE\tabNmE,a,m,\tabNpE)\
$$
\caption{Tableaux of class 1 and class 2.}\label{fig:C12}
\end{figure}

\begin{figure}[h]
$$
  \young(112233\cdots\tabNmE\tabNmE \tabNpE,a,b,m)\quad
  \young(112233\cdots\tabNmE\tabNmE m,a,b,\tabNpE)
$$
\caption{Tableaux of class~3 and class~4.}\label{fig:C3}
\end{figure}

Omitting the boxes with entries $m,m+1$ in the tableaux of class~1 
and deleting repeated entries in the first row, we obtain 
a bijection of the set of tableaux of class~1
with the set of standard tableaux of shape $(m-4)\,1^3$. 
It follows that the span of the basis vectors of class~1 is isomorphic to 
$[(m-4)\, 1^3]\ot [2]$. 

Similarly, the tableaux of class~2 are in bijection with the standard tableaux 
of the shape $(m-2)\,1^3$.
The span of the basis vectors of class~2 is isomorphic to
$[(m-2)\,1^3]\ot [1^2]$ (note the sign change when permuting $m$ with $m+1$). 

Let $\mathcal{T}$ denote the set of tableaux of class~3 and consider the transposition
$\pi:=(m\, m+1)$. 
Then $\{\pi T \mid T\in \mathcal{T}\}$ is the set of tableaux of class~4. 
Clearly, $\mathcal{T}$ is in bijection with the standard tableaux of shape $(m-3)\, 1^2$. 
The vectors $v(T) + \pi v(T)$ for $T\in\mathcal{T}$
span $[(m-3)\,1^2]\ot [2]$, whereas the vectors $v(T) - \pi v(T)$ 
span $[(m-3)\,1^2]\ot [1^2]$. 
\end{proof}

Lemma~\ref{le:split2} implies now 
$$
\big(V_{(2^m)}^{2^{m-1}1^2} \ot V_{(2^m)}^{2^{m-1}1^2} 
   \ot V_{(2m-3) 1^3}^{2^{m-1} 1^2}\big)^{S_{m-1}\ti S_2} = 0,
$$
which was needed in the proof of the first part of Lemma~\ref{le:G-obst}. 
We omit the details of the proof of the second part of this lemma,
see however~\S\ref{se:hwv}.

The following was claimed in Remark~\ref{re:mu}. 

\begin{lemma}
Let $n\ge 2$. Then  
$g(2^{n^2},(2n)^n,(2n)^n) = 1$ and  
$g((2n^2-3)\,1^3,(2n)^n, (2n)^n) > 0$.
\end{lemma}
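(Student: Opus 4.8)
The plan is to compute both Kronecker coefficients by expanding the first partition, in the representation ring of $S_d$ with $d:=2n^2$, into characters induced from Young subgroups, and then to exploit that the two remaining partitions coincide with the single rectangle $R:=(2n)^n$. I will use the characteristic-map dictionary between symmetric functions and virtual characters of $S_d$: $h_a$ corresponds to the trivial module of $S_a$, $e_a$ to its sign module, and $s_\kappa$ to the irreducible $[\kappa]$. The one external ingredient is the classical description of the Littlewood--Richardson coefficient $c^\lambda_{\mu\nu}$ (the multiplicity of $[\lambda]$ in $\mathrm{Ind}_{S_a\times S_b}^{S_{a+b}}([\mu]\boxtimes[\nu])$, with $\boxtimes$ the outer tensor product) for a rectangular shape, cf.~\cite{fult:97}: one has $c^{(b^a)}_{\gamma\delta}=1$ if $\gamma\subseteq(b^a)$ and $\delta$ is the $180^\circ$ rotation of the skew shape $(b^a)/\gamma$ (which is a straight Young diagram), and $c^{(b^a)}_{\gamma\delta}=0$ otherwise. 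I write $\gamma^{\mathrm{c}}$ for this distinguished complement, so that $|\gamma|+|\gamma^{\mathrm{c}}|=ab$ and $\gamma\mapsto\gamma^{\mathrm{c}}$ is an involution on the subdiagrams of $(b^a)$.

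First I would record the reduction. Since $g(\,\cdot\,,\mu,\nu)$ is additive in its first argument, it suffices to evaluate the pairing $\langle\mathrm{Ind}_{S_a\times S_b}^{S_d}(\varepsilon\boxtimes\varepsilon')\otimes[\mu]\otimes[\nu],\mathrm{triv}\rangle$ for $a+b=d$ and one-dimensional modules $\varepsilon$ of $S_a$, $\varepsilon'$ of $S_b$, because (dual) Jacobi--Trudi and Pieri express the two $s_\kappa$ we need as integer combinations of such products of two $h$'s and $e$'s. Frobenius reciprocity turns this pairing into $\langle\mathrm{Res}_{S_a\times S_b}([\mu]\otimes[\nu]),\varepsilon\boxtimes\varepsilon'\rangle$, and the branching rule $\mathrm{Res}_{S_a\times S_b}[\mu]=\bigoplus_{\gamma\vdash a,\delta\vdash b}c^\mu_{\gamma\delta}[\gamma]\boxtimes[\delta]$ --- together with the facts that $\langle[\gamma]\otimes[\eta],\mathrm{triv}\rangle=1$ iff $\eta=\gamma$, while $\langle[\gamma]\otimes[\eta],\mathrm{sgn}\rangle=1$ iff $\eta$ is the conjugate of $\gamma$ --- yields
\[
 \langle\mathrm{Ind}_{S_a\times S_b}^{S_d}(\varepsilon\boxtimes\varepsilon')\otimes[\mu]\otimes[\nu],\ \mathrm{triv}\rangle\ =\ \sum_{\gamma\vdash a,\ \delta\vdash b}c^\mu_{\gamma\delta}\,c^\nu_{\tau(\gamma),\,\tau'(\delta)},
\]
where $\tau$ (resp.\ $\tau'$) is the identity when $\varepsilon$ (resp.\ $\varepsilon'$) is trivial and is conjugation when it is sign.

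Next I would set $\mu=\nu=R=(2n)^n$ and analyse which terms survive. Fix a monomial and a pair $(\gamma,\delta)$ with both factors nonzero; then $\gamma\subseteq R$, $\delta=\gamma^{\mathrm{c}}$, $\tau(\gamma)\subseteq R$, and $\tau'(\delta)=(\tau\gamma)^{\mathrm{c}}$. For a monomial $h_ae_b$ (so $\tau=\mathrm{id}$ and $\tau'$ is conjugation) this says exactly that $\gamma^{\mathrm{c}}$ is self-conjugate, and the monomial then contributes $\#\{\gamma\subseteq R:|\gamma|=a,\ \gamma^{\mathrm{c}}\text{ self-conjugate}\}=\#\{\delta\subseteq R:|\delta|=b,\ \delta\text{ self-conjugate}\}$, using the involution. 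For a monomial $e_ae_b$ (so $\tau=\tau'$ conjugation) one needs $\gamma\subseteq R$ and its conjugate $\subseteq R$, which forces $\gamma$ into the $n\times n$ square $(n^n)$; hence $|\gamma|\le n^2$, and when $|\gamma|=n^2$ necessarily $\gamma=(n^n)$, for which the pair $\bigl((n^n),(n^n)\bigr)$ indeed satisfies every constraint and contributes $1$.

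Finally I would substitute the two partitions. The dual Jacobi--Trudi identity gives $s_{2^{n^2}}=e_{n^2}^2-e_{n^2+1}e_{n^2-1}$; the first monomial contributes $1$ (through $\gamma=(n^n)$) and the second contributes $0$ (it would need $|\gamma|=n^2+1>n^2$), so $g(2^{n^2},(2n)^n,(2n)^n)=1$. Iterating Pieri's rule gives $s_{(m,1^r)}=\sum_{i=0}^r(-1)^ih_{m+i}e_{r-i}$, whence $s_{(2n^2-3)1^3}=h_{2n^2-3}e_3-h_{2n^2-2}e_2+h_{2n^2-1}e_1-h_{2n^2}e_0$; by the $h_ae_b$ analysis the $i$-th monomial contributes $\Phi_i:=\#\{\delta\subseteq R:|\delta|=3-i,\ \delta\text{ self-conjugate}\}$, and for $n\ge2$ one gets $\Phi_0=1$ (from $\delta=(2,1)$), $\Phi_1=0$, $\Phi_2=1$ (from $\delta=(1)$), $\Phi_3=1$ (from $\delta=\varnothing$), so $g((2n^2-3)1^3,(2n)^n,(2n)^n)=\Phi_0-\Phi_1+\Phi_2-\Phi_3=1>0$. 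There is no deep obstacle here; the one point requiring care --- and the reason the first coefficient comes out exactly $1$ rather than merely positive --- is the containment bookkeeping in the third paragraph: a diagram together with its conjugate fits simultaneously into the non-square rectangle $(2n)^n$ only inside the square $(n^n)$, and this fact both produces the surviving term and annihilates every competing monomial.
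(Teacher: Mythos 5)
Your proof is correct, and it is genuinely different from the paper's. The paper treats both coefficients as black boxes: it cites~\cite{clme:93} for $g(2^{n^2},(2n)^n,(2n)^n)=1$, and for the second it decomposes the triple as $((2n-3)1^3,2^n,2^n)+(n-1)(2n,2^n,2^n)$, quotes~\cite{rewe:94,rosa:01} for $g((2n-3)1^3,2^n,2^n)=1$ together with $g(2n,2^n,2^n)=1$, and then invokes the semigroup property of the Kronecker cone to get positivity. Your argument is instead a self-contained computation: you expand the hook and the two-column Schur function via (dual) Jacobi--Trudi/Pieri into alternating sums of $h_a e_b$ terms, convert each term via Frobenius reciprocity and the branching rule into a sum of products of Littlewood--Richardson coefficients against the rectangle $R=(2n)^n$, and then exploit the rectangle rule ($c^R_{\gamma\delta}\in\{0,1\}$ with $\delta=\gamma^{\mathrm c}$), which collapses everything to counting self-conjugate subdiagrams of $R$ of small size. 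The trade-off is clear: the paper's route is shorter but opaque, delegating to the literature, while yours is longer but elementary, needs no external Kronecker-coefficient results, and in fact proves the sharper statement that the second coefficient equals exactly $1$ (not merely that it is positive), via $\Phi_0-\Phi_1+\Phi_2-\Phi_3=1-0+1-1$. The delicate step you flag---that $\gamma\subseteq R$ and $\gamma'\subseteq R$ simultaneously force $\gamma\subseteq(n^n)$, which pins down the unique surviving term of $e_{n^2}^2$ and kills $e_{n^2+1}e_{n^2-1}$---is indeed the crux and is handled correctly.
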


\begin{proof}
The first claim follows from~\cite[Satz~3.1]{clme:93}. 
For the second claim, we note that 
$\ula:= ((2n^2-3)\,1^3,(2n)^n, (2n)^n)$ 
can be decomposed as 
$\ula= \umu + (n-1)\cdot (2n,2^n,2^n)$,
where $\umu:= ((2n-3)\, 1^3, 2^n, 2^n)$. 
It is clear that $g(2n,2^n,2^n)=1$. 
It follows from~\cite{rewe:94, rosa:01} that 
$g(\umu) = g((2n-3,1^3),(n^2),(n^2)) = 1$.
Since the triples with positive Kronecker coefficients form a semigroup,
the second assertion follows. 
\end{proof}

\section*{Appendix: Section~\ref{se:exten}}

\begin{proposition}\label{pro:norm-ext}
Let $Z$ be an irreducible normal complex algebraic variety
and $f$ be a regular function defined on a nonempty Zariski open subset of $Z$.
If $f$ has an extension to $Z$ which is continuous in the $\C$-topology, 
then $f$ has a regular extension to $Z$.
\end{proposition}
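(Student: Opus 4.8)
The plan is to show, using normality of $Z$ in an essential way, that the polar locus of $f$ is empty. Since $Z$ is irreducible, the function $f$ --- regular on the nonempty Zariski open set $U$ --- determines an element of the function field $\C(Z)$; let $D \supseteq U$ be the maximal open subset of $Z$ on which this rational function is regular (it is open because regularity is a local condition), and put $S := Z \setminus D$, a closed subset. It suffices to prove $S = \emptyset$: then $f$ extends to a regular function $g$ on all of $Z$, and since $g$ and the given continuous extension $\tilde f$ agree on $U$, which is Zariski dense and hence $\C$-dense, and are both continuous in the $\C$-topology, we get $g = \tilde f$, so $\tilde f$ is the desired regular extension.

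Suppose, for contradiction, that $S \neq \emptyset$. First I would invoke the standard fact that a Noetherian normal domain equals the intersection of its localizations at height-one primes; globally this says that a rational function on a normal variety that is regular at every codimension-one point is regular everywhere. Hence there is a codimension-one subvariety $Y \subseteq S$ with generic point $\eta$ such that $f \notin \mathcal O_{Z,\eta}$; as $Z$ is normal, $\mathcal O_{Z,\eta}$ is a discrete valuation ring, and its valuation $v$ satisfies $v(f) = -k$ for some integer $k \ge 1$. Normality also forces the singular locus of $Z$ to have codimension at least two, so $Y$ meets the smooth locus of $Z$. I would then pick a closed point $z \in Y$ at which $Z$ is smooth and which lies on no other component of $S$; since $\mathcal O_{Z,z}$ is then a regular local ring, hence a unique factorization domain, the height-one prime of $Y$ is principal, so $Y$ is cut out near $z$ by a single regular function $t$ with $v(t) = 1$. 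Then $g_0 := f\,t^{\,k}$ has $v(g_0) = 0$, i.e.\ $g_0$ is regular and nonzero at $\eta$; choosing $z$ moreover outside the proper closed vanishing locus of $g_0$ in $Y$, we may assume $g_0$ is regular and nonvanishing at $z$ as well.

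Now I would pass to the $\C$-topology near $z$. On a $\C$-neighborhood of $z$ the regular functions $g_0$ and $t$ are holomorphic, $t(z) = 0$, $g_0(z) \neq 0$, and $f = g_0\,t^{-k}$ on the complement of $Y$ near $z$. Since $Z$ is irreducible, $D \setminus Y$ is dense, so there is a sequence $z_j \to z$ with $z_j \in D \setminus Y$; then $t(z_j) \to 0$ while $g_0(z_j) \to g_0(z) \neq 0$, whence $|f(z_j)| = |g_0(z_j)|\,|t(z_j)|^{-k} \to \infty$. But $\tilde f(z_j) = f(z_j)$ and, by $\C$-continuity of $\tilde f$ at $z$, $\tilde f(z_j) \to \tilde f(z) \in \C$ --- a contradiction. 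Hence $S = \emptyset$, which finishes the proof.

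The step I expect to be the main obstacle is the passage from the algebraic statement ``$f$ has a pole of order $k$ along $Y$'' to the analytic statement ``$f$ is unbounded near a suitably chosen point of $Y$'': this forces one to pick the point $z$ in the complement of several bad loci (the singular locus of $Z$, the other components of $S$, the vanishing locus of the numerator) so that a clean local factorization $f = g_0\,t^{-k}$ with holomorphic, nonvanishing numerator is available. An alternative route that bypasses this bookkeeping is to observe that $\C$-continuity of $\tilde f$ makes $f$ locally bounded near every point of $S$, so $f$ is a \emph{weakly holomorphic} function on $Z$; since $Z$ is normal it is its own normalization, and on a normal variety every weakly holomorphic function is regular, so $S = \emptyset$ at once.
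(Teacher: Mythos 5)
Your main argument is correct and is essentially the paper's proof: both reduce via normality to ``$f$ has a well-defined divisor, hence is regular in codimension one unless it has a pole of order $k\ge1$ along some irreducible hypersurface,'' then exhibit a local factorization $f = g_0\,t^{-k}$ near a suitably generic point of that hypersurface and derive a contradiction from $|f|\to\infty$ in the $\C$-topology. You are more careful than the paper about the choice of $z$ (smooth in $Z$, off the other components, local equation from the UFD property of the regular local ring), which the paper elides; the only slight wrinkle is that your $g_0 = f\,t^k$ is defined in terms of a local equation $t$ at $z$, so ``choosing $z$ outside the vanishing locus of $g_0$'' is mildly circular as stated --- the clean order of operations is to first fix a smooth affine neighborhood $V$ meeting $Y$ and not the other components of $S$, take $t$ a local equation for $Y$ on $V$, note $g_0:=ft^k\in\Oh(V)$ with $g_0|_{Y\cap V}\not\equiv 0$, and only then pick $z\in Y\cap V$ with $g_0(z)\ne0$. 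Your closing alternative via weakly holomorphic functions on a normal complex space is a genuinely different and shorter route which the paper does not use.
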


\begin{proof}
The proof uses some standard facts from algebraic geometry~\cite[III,\S8]{mumf:88}. 
Since $Z$ is normal, the rational function~$f$ has a well defined divisor. 
Suppose that $f$ had a pole of multiplicity $k\ge 1$ at the irreducible hypersurface $H$ of $Z$. 
Let $p$ denote the vanishing ideal of~$H$.  
Then we can write
$f= p^{-k} g/h$ for some $g,h\in \Oh(Z)$ such that $g,h\not\in p$. 
Choose $z\in H$ such that $g(z)h(z)\ne 0$ and let $z_k$ be any sequence in $Z$ converging to $z$. 
Then $\lim_{k\to\infty} f(z_k) = \infty$, contradicting the assumption that $f$ has 
a $\C$-continuous extension to $Z$. 

Therefore, $f$ has no pole divisor and hence $f$ is a regular function.  
\end{proof}

\begin{proposition}\label{pro:affine}
Let $Z$ be an irreducible affine complex algebraic variety 
and $U\subseteq Z$ be a nonempty Zariski open subset, $U\ne Z$. 
If $U$ is affine, then $Z\setminus U$ is of pure codimension one in~$Z$.
\end{proposition}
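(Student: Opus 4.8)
The plan is to deduce this from two standard facts --- the algebraic Hartogs extension theorem for normal varieties and the finiteness of normalization --- the role of affineness of $U$ being only to produce a ``partition of unity''. Throughout write $A:=\mathcal{O}(Z)$, a finitely generated domain over $\C$.

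First I would reduce to the case that $Z$ is normal. Let $\pi\colon\tilde Z\to Z$ be the normalization; since the integral closure of $A$ in its fraction field is a finite $A$-module, $\pi$ is a finite surjective morphism, $\tilde Z$ is an irreducible normal affine variety, and $\tilde U:=\pi^{-1}(U)$ is affine (preimage of an affine open under an affine morphism). As $U\ne Z$, also $\tilde U\ne\tilde Z$. I then claim it suffices to prove the statement for $\tilde Z$: given an irreducible component $Y$ of $Z\setminus U$, the finite surjection $\pi^{-1}(Y)\to Y$ has some component $\tilde Y$ mapping onto $Y$, so $\dim\tilde Y=\dim Y$; and $\tilde Y$ is actually a component of $\tilde Z\setminus\tilde U=\pi^{-1}(Z\setminus U)$, because any component $\tilde C\supseteq\tilde Y$ of $\tilde Z\setminus\tilde U$ must satisfy $\pi(\tilde C)=Y$ (it is closed, irreducible, contained in $Z\setminus U$, and contains $Y$), hence $\dim\tilde C=\dim Y=\dim\tilde Y$ and $\tilde C=\tilde Y$. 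Knowing the result for $\tilde Z$ then gives $\dim\tilde Y=\dim\tilde Z-1=\dim Z-1$, so $\operatorname{codim}_Z Y=1$.

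Now assume $Z$ normal. Write $Z\setminus U=V(I)$ with $I=(f_1,\dots,f_r)$; then $U=\bigcup_{i=1}^r Z_{f_i}$, where each $Z_{f_i}=\{f_i\ne0\}$ is a principal affine open contained in $U$ and coincides with the distinguished open $U_{f_i}$ of the affine variety $U$. Since the $U_{f_i}$ cover $U=\operatorname{Spec}\mathcal{O}(U)$, the $f_i$ generate the unit ideal of $\mathcal{O}(U)$, so there are $g_1,\dots,g_r\in\mathcal{O}(U)$ with $\sum_i g_if_i=1$ on $U$. Suppose now, for contradiction, that a component $Y$ of $Z\setminus U$ has $\operatorname{codim}_Z Y\ge2$. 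Let $D_1$ be the union of the codimension-one components of $Z\setminus U$ and set $U_1:=Z\setminus D_1\supseteq U$; then $U_1\setminus U$ is closed in $U_1$ of codimension $\ge2$, while $Y\not\subseteq D_1$, so $Y\cap U_1\ne\emptyset$. Since $U_1$ is normal, the algebraic Hartogs theorem says restriction $\mathcal{O}(U_1)\to\mathcal{O}(U)$ is an isomorphism, so $g_1,\dots,g_r\in\mathcal{O}(U_1)$ and $\sum_i g_if_i=1$ on $U_1$, forcing $V(I)\cap U_1=\emptyset$; this contradicts $\emptyset\ne Y\cap U_1\subseteq V(I)\cap U_1$.

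I expect the main obstacle to be the bookkeeping in the normalization step --- verifying that components of $Z\setminus U$ pull back to components of $\tilde Z\setminus\tilde U$ of the same dimension --- since the rest is formal once one has finiteness of normalization and Hartogs. If one prefers to avoid normalization, an alternative is a local-cohomology argument: affineness of $Z$ and of $U$ together with Serre's vanishing theorem and the long exact sequence of local cohomology give $H^i_{Z\setminus U}(Z,\mathcal{O}_Z)=0$ for $i\ge2$; localizing at the prime $\mathfrak{p}$ of a component $Y$ of codimension $c$ yields $H^c_{\mathfrak{p}A_{\mathfrak{p}}}(A_{\mathfrak{p}})=0$, which contradicts Grothendieck's non-vanishing theorem for the local ring $A_{\mathfrak{p}}$ of dimension $c$ as soon as $c\ge2$.
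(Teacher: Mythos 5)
Your proof is correct, and it follows the same overall route as the paper's (reduce to normal $Z$ via normalization, then invoke Hartogs-type extension), but it is in fact \emph{more} complete: the paper's proof sketch handles only the special case where \emph{all} irreducible components of $Z\setminus U$ have codimension $\ge 2$ --- there it observes that $\mathcal{O}(Z)\to\mathcal{O}(U)$ is an isomorphism, hence $U=Z$ by affineness, contradiction --- and explicitly says ``we omit the proof of the general case.'' Your partition-of-unity argument fills exactly that gap: writing $\sum_i g_if_i=1$ on $U$ (from affineness of $U$), passing to $U_1=Z\setminus D_1$ where $D_1$ is the codimension-one part, applying Hartogs to extend the $g_i$ to $U_1$, and noting that $V(I)\cap U_1\ne\emptyset$ yields the contradiction $0=1$, handles both cases at once. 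Your normalization reduction is also more carefully justified than the paper's (which casually claims $\varphi^{-1}(C_i)$ are the irreducible components of $\tilde Z\setminus\tilde U$, which need not be literally true); you correctly pick out a component $\tilde Y$ of $\pi^{-1}(Y)$ dominating $Y$ and verify it is a component of $\tilde Z\setminus\tilde U$ of the right dimension. The alternative local-cohomology route you mention (Serre vanishing on $U$ and $Z$ plus the local cohomology exact sequence, then Grothendieck nonvanishing at the generic point of a component) is a genuinely different, sheaf-theoretic proof that avoids normalization entirely; either is fine, and both are more complete than what the paper actually provides.
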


\begin{proof}(Sketch) 
We use some standard facts from algebraic geometry. 
Let $\varphi\colon\tilde{Z} \to Z$ be the normalization of $Z$, cf.~\cite{hart:77}.
Then $\tilde{Z}$ and $\tilde{U}:=\varphi^{-1}(U)$ are affine. 
Let $C_i$ be the irreducible components of $Z\setminus U$. 
Since $\varphi$ is finite,  $C_i' :=\varphi^{-1}(C_i)$ are the irreducible 
components of $\tilde{Z}\setminus\tilde{U}$  and 
$\dim C_i'=\dim C_i$. 
We may therefore assume that $Z$ is normal.

Suppose first that all the irreducible components of $Z\setminus U$ have 
codimension at least two. 
Consider the injection $\iota\colon U\hookrightarrow Z$. 
The restriction morphism $\iota^*\colon\Oh(Z)\to\Oh(U)$ is bijective 
since, by a reasoning as in the proof of Proposition~\ref{pro:norm-ext}, 
any regular function on~$U$ can be extended to $Z$.
Since $Z$ and $U$ are assumed to be affine, $\iota$ is an isomorphism 
and hence $U=Z$. 
We omit the proof of the general case.
\end{proof}

The following general result is due to~\cite{mats:60}.

\begin{proposition}\label{pro:aff-red}
Let $G$ be a reductive group and $H$ be a closed subgroup.
Then $G/H$ is affine iff $H$ is reductive.
\end{proposition}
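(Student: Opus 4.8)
The plan is to prove the two implications by quite different means: Peter--Weyl together with geometric invariant theory for the ``easy'' direction, and the rational cohomology of algebraic groups for the converse. Full details are in~\cite{mats:60}; below I indicate the route I would take.

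First I would show that $H$ reductive implies $G/H$ affine. Let $H$ act on the affine variety $G$ by right translations, $h\cdot g:=gh^{-1}$. This action is free, and its orbits are exactly the cosets $gH$, which are closed in $G$ (because $H$ is closed) and all of dimension $\dim H$. Since $H$ is reductive, hence linearly reductive (we are in characteristic~$0$), the invariant ring $\Oh(G)^H$ is finitely generated, so the categorical quotient $\pi\colon G\to\operatorname{Spec}\Oh(G)^H$ exists with affine target. Because every $H$-orbit is closed, each fibre of $\pi$ is a single orbit, so $\pi$ is a geometric quotient; being a geometric quotient for the free right $H$-action, $\operatorname{Spec}\Oh(G)^H$ is canonically isomorphic to the homogeneous space $G/H$ as constructed in~\cite[Chap.~12]{hump:75}. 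Hence $G/H$ is affine.

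For the converse I would argue cohomologically. Assume $G/H$ is affine; the goal is $H^i_{\mathrm{rat}}(H,M)=0$ for all $i>0$ and all finite dimensional rational $H$-modules~$M$, which forces $H$ to be linearly reductive, i.e.\ reductive. Let $\mathcal{L}(M)$ denote the $G$-linearized vector bundle on $G/H$ associated with~$M$, whose global sections are the $H$-equivariant morphisms $G\to M$. A standard fact in the cohomology theory of algebraic groups (see, e.g., Jantzen's book on representations of algebraic groups) identifies $H^i(G/H,\mathcal{L}(M))\cong R^i\operatorname{ind}_H^G M$ as $G$-modules. Since $G/H$ is affine, Serre's vanishing theorem gives $H^i(G/H,\mathcal{L}(M))=0$ for $i>0$, so $R^i\operatorname{ind}_H^G M=0$ for $i>0$. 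Now the Grothendieck spectral sequence for the composition of functors $(-)^G\circ\operatorname{ind}_H^G=(-)^H$ reads $\operatorname{Ext}^p_G(\C,R^q\operatorname{ind}_H^G M)\Rightarrow H^{p+q}_{\mathrm{rat}}(H,M)$, valid because $\operatorname{ind}_H^G$ sends injectives to injectives; the vanishing of the higher $R^q\operatorname{ind}$ collapses it to $H^i_{\mathrm{rat}}(H,M)\cong H^i_{\mathrm{rat}}(G,\operatorname{ind}_H^G M)$. But $G$ is reductive, hence linearly reductive, so the right-hand side vanishes for $i>0$. (It would in fact suffice to carry this out for $i=1$, since the splitting of all extensions already yields semisimplicity of every $H$-module.)

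I expect the first direction to be essentially routine once the quotient variety $G/H$ is available. The real work, and the main obstacle, lies in the converse: setting up the sheaf-theoretic description of induction on $G/H$, the comparison isomorphism $H^i(G/H,\mathcal{L}(M))\cong R^i\operatorname{ind}_H^G M$, and the Grothendieck spectral sequence, together with the characteristic-zero equivalence of reductivity and linear reductivity that makes $H^\bullet_{\mathrm{rat}}(G,-)$ vanish. As a fallback closer to Matsushima's original argument over~$\C$, one can pass to maximal compact subgroups and exploit that an affine $G/H$ is a Stein manifold; I would resort to that only if the cohomological machinery turns out awkward to assemble cleanly here.
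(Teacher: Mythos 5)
The paper does not actually prove Proposition~\ref{pro:aff-red}: it is stated as a black box, attributed to Matsushima~\cite{mats:60}, so there is no in-paper argument to compare your sketch against. Judged on its own, your proof idea is sound. For the direction ``$H$ reductive $\Rightarrow G/H$ affine'' the GIT argument works: all right-$H$-cosets in $G$ are closed and of the same dimension, so the affine quotient $\operatorname{Spec}\Oh(G)^H$ separates orbits and is a geometric quotient; identifying it with the homogeneous space $G/H$ uses the universal property plus the fact that both are normal and the induced morphism is bijective, a small point you gloss over but which is routine in characteristic zero. For the converse, the cohomological chain you set up is the standard modern route (Cline--Parshall--Scott style, as in Jantzen's book): $R^i\operatorname{ind}_H^G M\cong H^i(G/H,\mathcal{L}(M))$ vanishes by Serre, the Grothendieck spectral sequence for $(-)^G\circ\operatorname{ind}_H^G=(-)^H$ collapses, and linear reductivity of $G$ in characteristic~$0$ kills the abutment, forcing $H^1_{\mathrm{rat}}(H,M)=0$ for every finite-dimensional $M$ and hence semisimplicity of all rational $H$-modules. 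This is genuinely different from Matsushima's original argument, which, as you note, works over $\C$ via maximal compact subgroups and Stein theory; your route has the advantage of being purely algebraic (and, with mild modifications, meaningful in positive characteristic, where one obtains ``$H$ is exact in $G$''), whereas the analytic route is shorter if one is willing to import Stein theory. In short: no gap, correct idea, not comparable to the paper since the paper only cites the result.
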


\subsection{Proof of Theorem~\ref{th:extend}}

The following lemma settles part~(1) of Theorem~\ref{th:extend}.

\begin{lemma}\label{le:cufo}
Suppose that $w\in W$ and $\e_\um \in S^o(w)$,
where $\um=(\dim W_1,\dim W_2,\dim W_3)$.
Then we have $m_1=m_2=m_3$.
\end{lemma}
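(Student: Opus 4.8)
The plan is to exploit the hypothesis $\e_\um \in S^o(w)$ by translating it, via Proposition~\ref{pro:pewe}, into a statement about the stabilizer $H = \stab(w)$: namely $V_{\e_\um}(G)^H \neq 0$. Now $V_{\e_\um}(G)$ is the one-dimensional representation of $G$ given by the character $\det\colon (g_1,g_2,g_3)\mapsto \det(g_1)\det(g_2)\det(g_3)$ (the partition $(1^{m_i})$ labels $\Lambda^{m_i}W_i$, the top exterior power, which is the determinant character). So $V_{\e_\um}(G)^H \neq 0$ forces $\det(g_1)\det(g_2)\det(g_3) = 1$ for every $(g_1,g_2,g_3)\in H$. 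Call this condition $(\star)$.

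The next step is to find elements of $H$ that violate $(\star)$ unless $m_1=m_2=m_3$. The natural candidates come from the scaling action: for $(s_1,s_2,s_3)\in(\C^\times)^3$ with $s_1 s_2 s_3 = 1$, the triple $(s_1\Id_{W_1}, s_2\Id_{W_2}, s_3\Id_{W_3})$ acts on $W=W_1\ot W_2\ot W_3$ as multiplication by $s_1 s_2 s_3 = 1$, hence fixes $w$; thus this triple lies in $H$. For such an element, $(\star)$ reads $s_1^{m_1} s_2^{m_2} s_3^{m_3} = 1$ whenever $s_1 s_2 s_3 = 1$. Parametrizing the subgroup $\{s_1 s_2 s_3 = 1\}$ by $(s_1,s_2) = (s,t)$, $s_3 = (st)^{-1}$, this becomes $s^{m_1 - m_3} t^{m_2 - m_3} = 1$ for all $s,t\in\C^\times$, which forces $m_1 = m_3$ and $m_2 = m_3$, i.e.\ $m_1=m_2=m_3$.

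There is essentially no hard part here; the only thing to be careful about is the identification of $V_{\e_\um}(G)$ with the determinant character and the direction of the duality (whether $S^o(w)$ records occurrences of $V_\ula(G)$ or of $V_\ula(G)^*$). Since $\e_\um$ is fixed by taking duals up to the sign already built into the definition of $S^o$, and since the argument only uses that \emph{some} one-dimensional $H$-fixed character of the form $\pm\det$ occurs, this subtlety does not affect the conclusion: the scaling torus above is central and acts trivially on $w$, so it must act trivially on any associated highest weight line, giving $(\star)$ (or its inverse, which is equivalent) directly. This completes the proof, and hence part~(1) of Theorem~\ref{th:extend}.
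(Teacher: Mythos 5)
Your proof is correct and is essentially the same as the paper's: the paper, too, first notes (just before Theorem~\ref{th:extend}, via Proposition~\ref{pro:pewe}) that $\e_\um\in S^o(w)$ is equivalent to $\det g=1$ for all $g\in\stab(w)$, and then evaluates this condition on the scaling elements $(a\,\Id_{m_1},b\,\Id_{m_2},c\,\Id_{m_3})$ with $abc=1$ to get $a^{m_1-m_3}b^{m_2-m_3}=1$ and hence $m_1=m_2=m_3$.
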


\begin{proof}
We have $\det g=1 $ for all $g\in\stab(w)$
by our assumption $\e_\um \in S^o(w)$,
On  the other hand, $(a\,\Id_{m_1},b\,\Id_{m_2},c\,\Id_{m_3})\in \stab(w)$
for any $a,b,c\in\C^\times$ with $abc=1$. This implies
$$
  1 = \det(a\,\Id_{m_1},b\,\Id_{m_2},c\,\Id_{m_3}) = a^{m_1} b^{m_2} c^{m_3}
 = a^{m_1-m_3}\, b^{m_2 -m_3} .
$$
Therefore, $m_1=m_2=m_3$.
\end{proof}

The next lemma shows part~(2) of Theorem~\ref{th:extend}.

\begin{lemma}\label{le:limdet}
Let $w\in W\setminus\{0\}$ be stable and $u\in\overline{Gw}\setminus Gw$.
Suppose that $(g_n)$ is a sequence in $G$ such that
$\lim_{n\to\infty} g_n w = u$.
Then we have $\lim_{n\to\infty} \det g_n = 0$.
\end{lemma}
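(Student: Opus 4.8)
The plan is to leverage the stability of $w$ through the elementary identity $Gw=\C^\times\cdot G_sw$, where $\C^\times$ acts on $W$ by scaling. We are in the setting of Theorem~\ref{th:extend}, so $\e_\um\in S^o(w)$ and hence, by Lemma~\ref{le:cufo}, the format of $W$ is cubic, say $(m,m,m)$; this will be used essentially.

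First I would record the structural fact. Given $g=(g_1,g_2,g_3)\in G$, pick for each $i$ an $m$th root $\lambda_i\in\C^\times$ of $\det g_i$ and write $g_i=\lambda_i s_i$ with $s_i\in\SL(W_i)$. Setting $c:=\lambda_1\lambda_2\lambda_3$ and $s:=(s_1,s_2,s_3)\in G_s$, one checks (it suffices on decomposable tensors) that $gv=c\cdot(sv)$ for all $v\in W$, and moreover $\det g=\lambda_1^m\lambda_2^m\lambda_3^m=c^m$. Applying this to the given sequence, write $g_nw=c_n\cdot(s_nw)$ with $s_n\in G_s$, $c_n\in\C^\times$; then $\det g_n=c_n^m$, so it suffices to show $c_n\to 0$.

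To prove $c_n\to 0$ I would argue by contradiction, using that $G_sw$ is closed (Definition~\ref{def:stable}). Suppose $c_n\not\to 0$. After passing to a subsequence we may assume $|c_n|\ge\varepsilon$ for some $\varepsilon>0$, and then, passing to a further subsequence, either (a) $c_n\to c$ for some $c\in\C^\times$, or (b) $|c_n|\to\infty$. In case (a), $s_nw=c_n^{-1}(g_nw)\to c^{-1}u$, so $c^{-1}u\in\overline{G_sw}=G_sw$; but then $u=c\cdot(c^{-1}u)\in\C^\times\cdot G_sw=Gw$, contradicting $u\in\overline{Gw}\setminus Gw$. In case (b), the sequence $g_nw$ is bounded (being convergent), hence $s_nw=c_n^{-1}(g_nw)\to 0$, so $0\in\overline{G_sw}=G_sw$; this is impossible, since $w\neq 0$ forces $0\notin G_sw$ (a group element sending $w$ to $0$ would be noninvertible). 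Therefore $c_n\to 0$, whence $\det g_n=c_n^m\to 0$.

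The only nonroutine ingredient is the appeal to stability: the identity $Gw=\C^\times\cdot G_sw$ is immediate, but the dichotomy above — and hence the conclusion $c_n\to 0$ — genuinely relies on $G_sw$ being closed, which is what prevents an approach to the boundary $\overline{Gw}\setminus Gw$ unless the scaling factor $c_n$ degenerates to $0$. I would also stress that the cubic format is essential and not cosmetic: it is exactly what yields $\det g=c^m$, so that $c_n\to 0$ transfers to $\det g_n\to 0$; for a general format $\det g=\lambda_1^{m_1}\lambda_2^{m_2}\lambda_3^{m_3}$ need not tend to $0$ even when $c_n=\lambda_{1,n}\lambda_{2,n}\lambda_{3,n}\to 0$, which is why Lemma~\ref{le:cufo} is invoked at the outset.
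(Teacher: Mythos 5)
Your proof is correct and is essentially the paper's argument: both factor $g_nw = c_n\,\tilde g_nw$ with $\tilde g_n\in G_s$ and then use that $G_sw$ is closed and does not contain $0$ to derive a contradiction from a convergent subsequence of the scalars. The paper first bounds $|\det g_n|$ via $\inf_{\tilde g\in G_s}\|\tilde g w\|>0$ and then handles only a nonzero limit point, whereas you treat the nonzero-limit and unbounded cases as two separate contradictions resting on the same two facts; a small point in your favor is that you record the correct relation $\det g_n=c_n^{\,m}$, whereas~\eqref{eq:gag} in the paper uses $\det g_n$ itself as the scalar, an imprecision that is harmless for the conclusion.
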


\begin{proof}
Since $G_s w$ is closed and $0\not\in G_s w$
we have
$$
 \e := \inf \{ \|\tilde{g} w\| \mid \tilde{g} \in G_s \}
     = \min \{ \|\tilde{g} w\| \mid \tilde{g} \in G_s \} >0 .
$$
For each~$n$ there are $\tilde{g}_n\in G_s$ such that
\begin{equation}\label{eq:gag}
 g_n w = \det g_n\ \tilde{g}_n w .
\end{equation}
Hence $\|g_n w\| = |\det g_n|\, \| \tilde{g}_n w\|$.
Since $\lim_{n\to\infty} \|g_n w\| = \|u\|$ and
$\|\tilde{g}_n w\| \ge \e >0$ we conclude that
$|\det g_n | \le \|g_n w\| /\e$ is bounded.

If $\lim_{n\to\infty} \det g_n = 0$ were false, then
there would  be some nonzero limit point $\d$
of the sequence $(\det g_n)$. After going over
to a subsequence, we have
$\lim_{n\to\infty} \det g_n = \d$.
From~\eqref{eq:gag} we get
$\lim_{n\to\infty} \tilde{g}_n w = \d^{-1} u$.
Hence $\d^{-1}u \in \overline{G_s u} = G_s u$,
which implies the contradiction $u\in Gw$.
\end{proof}

For the third part of Theorem~\ref{th:extend}
we note that for $w\in W$ and $g,h\in G$
\begin{equation}\label{eq:gdet}
 g\,\det_w(hw) = \det_w(g^{-1}hw) = \det(g^{-1}h)
 = \det(g)^{-1} \det(h) = \det(g)^{-1} \det_w(hw) .
\end{equation}
If $\det_w$ had a regular extension to~$\overline{Gw}$, then
\eqref{eq:gdet} shows that $\C\det_w$ is a submodule of
$\Oh(\overline{Gw})$ of highest weight $-\e_\um$. Hence
$\Oh(W)$ would contain an irreducible submodule of highest weight~$-\e_\um$
as well. On the other hand, the Kronecker coefficient $g(\e_\um)$ 
vanishes if $m>1$.
This contradicts~\eqref{eq:multOW}
and proves that $\det_w$ is not a regular function on $\overline{Gw}$.

Proposition~\ref{pro:norm-ext} combined with part~(2) and part~(3)
implies now that $\overline{Gw}$ is not a normal variety,
showing the fourth part of Theorem~\ref{th:extend}.

Part~(5) of Theorem~\ref{th:extend} follows by tracing the
proof of Proposition~\ref{th:stabSs}:
Let $f\in \Oh(Gw)_d$ be a highest weight vector.
The restriction $\tilde{f}$ of~$f$ to $G_sw$ does not vanish since
$Gw$ is the cone generated by $G_sw$.
So $\tilde{f}$ is a highest weight vector for the action of $G_s$.
Let $M$ denote the irreducible $G_s$-module generated by~$\tilde{f}$.
The $G_s$-equivariant restriction morphism
$\mathrm{res}\colon\Oh(\overline{Gw})\to \Oh(G_s w)$
is surjective since $G_sw$ is assumed to be closed.
Hence there exists an irreducible $G_s$-module~$N\subseteq \Oh(\overline{Gw})$
that maps to $M$ under $\mathrm{res}$.
We have $N\subseteq \Oh(\overline{Gw})_\d$ for some degree~$\d$.
Let $F$~be a highest weight vector of the $G_s$-module~$N$.
Then
$\mathrm{res}(F) = c \tilde{f}$ for some $c\in\C^\times$.
W.l.o.g.\ $c=1$.

For each $g\in G$ there exists $\tilde{g}\in G_s$ such that
$gw= \det g\, \tilde{g} w$. Moreover, since $F$ is homogeneous of degree~$\d$,
$$
 F(gw) = (\det g)^\d F(\tilde{g} w) = (\det g)^\d f(\tilde{g}w) .
$$
Moreover, since $f$ is homogeneous of degree~$d$,
$f(gw)= (\det g)^d f(\tilde{g}w)$.
We conclude that 
$$
 F(gw) = (\det g)^{\d -d} f(gw) = (\det_w (gw))^{\d -d} f(gw).
$$
Therefore, $(\det)^{\d -d} f = F$
is regular on $\overline{Gw}$ and the assertion follows.
\qed

\subsection{Proof of Corollary~\ref{cor:nn-mamu}}

(1) The first assertion is immediate from Theorem~\ref{th:extend}.

(2) Put $W=\C^m\ot\C^m\ot\C^m$, $w :=\langle m \rangle$.
Proposition~\ref{pro:stabut} implies that
$(\det g)^2 = 1$ for all $g\in\stab(w)$.
As in the proof of Theorem~\ref{th:extend} we show that
$\det_w^2\colon Gw \to\C$ has a continuous extension to~$\overline{Gw}$.

If $\det_w^2$ had a regular extension to~$\overline{Gw}$, then
$\Oh(W)$ would contain an irreducible submodule of highest weight
$\ula=(2^m,2^m,2^m)$ (compare~\eqref{eq:gdet}).
On the other hand, using the symmetry property
$g(\la,\mu,\nu)=g(\la',\mu',\nu)$
of Kronecker coefficients~\cite{StanleyVol2}
(with $\la'$ denoting the transposed partition),
we obtain
$g(2^m,2^m,2^m)=g(m^2,m^2,2^m)=0$ for $m\ge 5$.
(The vanishing since the right hand partition has more than four rows.)
This contradicts~\eqref{eq:multOW}
and proves that $\det_w^2$ does not have a regular extension to~$\overline{Gw}$.
The assertion follows now with Proposition~\ref{pro:norm-ext}.
\qed

\subsection{Proof of Proposition~\ref{pro:codim1}}

Put $H:=\stab(w)$ and $H_s:=H\cap G_s$.
If $w$ is stable, then $G_s/H_s \simeq G_sw$ is closed and hence affine.
By Proposition~\ref{pro:aff-red}, $H_s$ is reductive.
Consider the morphism
$H\to (\C^\times)^3, g\mapsto (\det g_1,\det g_2,\det g_3)$
with kernel $H_s$. Since reductiveness is preserved under
extensions, and closed subgroups of $(\C^\times)^3$
are reductive, it follows that $H$ is reductive~\cite{hump:87}.
Proposition~\ref{pro:aff-red} implies that $ G/H \simeq Gw$ is affine.
The last assertion follows by applying Proposition~\ref{pro:affine} 
to $Z=\overline{Gw}$ and $U=Gw$. 
\qed

\section*{Appendix: Section~\ref{se:mp}}

The easy proof of the following observation is left to the reader.

\begin{lemma}\label{le:fingen}
Let $\ula^1,\ldots,\ula^s \in S(w)$ be a system of generators for the
semigroup $S(w)$ with $\ula^i \in \Lambda^+_{d_i}(\um)$. Then $P(w)$
is the convex hull of
$\frac1{d_1}\ula^1,\ldots,\frac1{d_s}\ula^s$.
Moreover, any rational point of $P(w)$ is a rational convex combination
of these points.
\end{lemma}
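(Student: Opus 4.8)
The plan is to show that $P(w)$ coincides with the polytope $C:=\mathrm{conv}\{\tfrac1{d_1}\ula^1,\dots,\tfrac1{d_s}\ula^s\}$ by proving both inclusions, and then to read off the ``moreover'' statement from Carath\'eodory's theorem. The only ingredient is that $\ula^1,\dots,\ula^s$ generate $S(w)$ as a semigroup, together with the elementary bookkeeping that the degree $d$ of an element $\ula\in\Lambda^+_d(\um)$ is the common size of its three partitions, and hence is additive under the semigroup operation: if $\ula=\sum_i c_i\ula^i$ with $c_i\in\N$ and $\ula^i\in\Lambda^+_{d_i}(\um)$, then $\ula\in\Lambda^+_d(\um)$ with $d=\sum_i c_i d_i$. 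In particular every $d_i>0$, since the zero weight is not among the generators and degree-$d$ elements with $d>0$ are nonzero.

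First I would prove $P(w)\subseteq C$. A point of the set whose closure defines $P(w)$ has the form $\tfrac1d\ula$ with $d>0$ and $\ula\in S(w)\cap\Lambda^+_d(\um)$; writing $\ula=\sum_i c_i\ula^i$ and using $d=\sum_j c_j d_j$ gives
$$
 \tfrac1d\,\ula=\sum_{i=1}^s \frac{c_i d_i}{\sum_j c_j d_j}\cdot\frac1{d_i}\,\ula^i ,
$$
a convex combination of the normalized generators with nonnegative rational weights summing to $1$. Thus this set lies in $C$; since $C$ is compact, so does its closure $P(w)$.

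Next I would prove $C\subseteq P(w)$. Because $P(w)$ is closed and the rational points of $C$ are dense in $C$ (they are the images of the rational points of the standard simplex under the affine parametrization of $C$), it suffices to realize every convex combination $x=\sum_i t_i\,\tfrac1{d_i}\ula^i$ with nonnegative rational $t_i$, $\sum_i t_i=1$, as a point of the set whose closure is $P(w)$. For this I would pick $L\in\N$ with $c_i:=L\,t_i/d_i\in\N$ for all $i$; then $\ula:=\sum_i c_i\ula^i\in S(w)$ lies in $\Lambda^+_L(\um)$ since $\sum_i c_i d_i=L\sum_i t_i=L$, and $\tfrac1L\,\ula=\sum_i t_i\,\tfrac1{d_i}\ula^i=x$, so indeed $x\in P(w)$.

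Finally, for the ``moreover'' part, given a rational point $x\in P(w)=C$, Carath\'eodory's theorem expresses $x$ as a convex combination $\sum_{i\in I}t_i\,\tfrac1{d_i}\ula^i$ over an affinely independent subset; the $t_i$ are then the unique solution of a linear system whose data (the $\tfrac1{d_i}\ula^i$ for $i\in I$, and $x$) are rational, hence are themselves rational. I do not expect a genuine obstacle here — the argument is entirely elementary, which is why the statement is phrased as an easy observation; the only points requiring a moment's care are the additivity of degrees in $S(w)$ and the clearing of denominators in the second inclusion.
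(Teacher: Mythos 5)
Your proof is correct. The paper omits the argument entirely (``The easy proof of the following observation is left to the reader.''), so there is no official proof to compare against; your two inclusions plus Carath\'eodory is exactly the elementary argument one would expect, and you handle the only delicate points correctly, namely that the generators are nonzero (hence $d_i>0$), that degree is additive so $\tfrac1d\ula$ becomes a convex combination of the normalized generators, and that rational points are dense so the closure in the definition of $P(w)$ causes no loss.
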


\begin{lemma}\label{le:cenpo}
If $u_\um \in P(w)$ then there exists $\ell\ge 1$ such that $\ell\e_\um \in S(w)$.
\end{lemma}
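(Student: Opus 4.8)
The plan is to combine Lemma~\ref{le:fingen}, which presents $P(w)$ as the convex hull of the normalized generators of the finitely generated semigroup $S(w)$, with the elementary observation that $u_\um=\tfrac1m\e_\um$ (recall that in this setting $\um=(m,m,m)$, so $u_m=\tfrac1m\e_m$). In particular $u_\um$ is a \emph{rational} point of $P(w)$, and the idea is to turn its membership in the polytope into an explicit integral relation inside the semigroup $S(w)$.

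First I would invoke Lemma~\ref{le:fingen}: fix a system of generators $\ula^1,\dots,\ula^s\in S(w)$ of $S(w)$ with $\ula^i\in\Lambda^+_{d_i}(\um)$ and $d_i\ge 1$ (such generators exist, since $u_\um\in P(w)$ forces $P(w)\neq\emptyset$). By that lemma the rational point $u_\um$ is a rational convex combination $u_\um=\sum_{i=1}^s q_i\,\tfrac1{d_i}\,\ula^i$, where each $q_i$ is a nonnegative rational number and $\sum_{i=1}^s q_i=1$. Write $q_i=p_i/P$ with $p_i\in\N$ and a common denominator $P\in\N$, $P\ge 1$, and let $D\ge 1$ be the least common multiple of $d_1,\dots,d_s$. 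Put $\ell:=PD$. Multiplying the above identity by $m\ell=mPD$ and using $m\,u_\um=\e_\um$ yields
\[
 \ell\,\e_\um \;=\; m\ell\,u_\um \;=\; \sum_{i=1}^s \Bigl(\frac{m\,p_i\,D}{d_i}\Bigr)\ula^i ,
\]
and every coefficient $m\,p_i\,D/d_i$ is a nonnegative integer because $d_i\mid D$.

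It remains to note that the right-hand side lies in $S(w)$: since $S(w)$ is a semigroup, $n\,\ula^i=\ula^i+\dots+\ula^i\in S(w)$ for every $n\in\N$ (interpreting a vanishing coefficient as simply dropping that term), so the finite sum $\sum_i(m\,p_i\,D/d_i)\,\ula^i$ belongs to $S(w)$ as well. Hence $\ell\,\e_\um\in S(w)$ with $\ell=PD\ge 1$, which is exactly the assertion. There is no genuine obstacle in this argument; the one point requiring a little care is the clearing of denominators, and the choice of the multiplier $mPD$ is made precisely so that all coefficients on the right-hand side come out integral.
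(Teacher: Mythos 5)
Your proof is correct and follows essentially the same route as the paper: invoke Lemma~\ref{le:fingen} to express $u_\um$ as a rational convex combination of the normalized generators, clear denominators, and use that $S(w)$ is a semigroup together with $m\,u_\um=\e_\um$. The only difference is that you explicitly build the multiplier $mPD$ to guarantee integrality, whereas the paper picks $N>0$ with $Nu_\um=\sum_i N_i\ula^i$ and deduces $m\mid N$ implicitly from the fact that the right-hand side is an integer vector; both come to the same thing.
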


\begin{proof}
By Lemma~\ref{le:fingen}, $u_\um$ is a rational convex combination of
$\frac1{d_1}\ula^1,\ldots,\frac1{d_s}\ula^s$. Hence there exist integers~$N_i\ge 0$,
$N>0$ such that
$Nu_\um = \sum_i N_i \ula^i \in S(w)$.
Moreover, $Nu_\um = \ell \e_\um$ where $\ell := N/m\in\N$.
\end{proof}

\subsection{Proof of Theorem~\ref{th:mpzero}}

For the second statement
take any $\ula=(\la_{12},\la_{23},\la_{31})$
in $\Lambda^+_{dn}(n^2,n^2,n^2)$.
Consider the rectangular partition
$(d^n)=(d,\ldots,d)\vdash_n dn$.
The main result in~\cite{buci:09} states that for
$ij= 12,23,31$
there exists a positive stretching factor $k_{ij}\in\N$ such that
$g(k_{ij}\la_{ij},(k_{ij}d)^n,(k_{ij}d)^n)\ne 0$.
Let $k$ be the least common multiple of $k_{12},k_{23},k_{31}$.
Then we have for $ij= 12,23,31$
$$
 g(k\la_{ij},(kd)^n,(kd)^n)\ne 0 .
$$
Theorem~\ref{cor:H-inv-mamu}
with $\mu_i = (kd)^n\vdash_n kdn$ implies that
$k\ula\in S^o(\lan n,n,n\ran)$.
Hence $\frac1{dn}\ula\in P^o(\lan n,n,n\ran)$.
Since the set of $\frac1{dn}\ula$ is dense in $\Delta(n^2,n^2,n^2)$,
we obtain $P^o(\langle n,n,n\rangle)=\Delta(n^2,n^2,n^2)$ as claimed.

The statement for the unit tensors is an easy consequence of
Corollary~\ref{cor:regmet}(2).
\qed

\subsection{Proof of Lemma~\ref{le:umint}}

Let $\ux=\frac1{d}\ula$ with $\ula\in S^o(w)\cap\Lambda^+_d(\um)$.
Proposition~\ref{th:stabSs} implies that there exists
$k\in\Z$ such that $\ula + k\e_\um \in S(w)$, cf.~\eqref{eq:SS0}.
We may assume $k>0$ due to Lemma~\ref{le:cenpo} and
our assumption $u_\um\in P(w)$.
Hence the point 
$$
 \frac1{d+km}(\ula + k\e_\um) = \frac{d}{d+km}\,\ux + \frac{km}{d+km}\,u_\um .
$$
lies in $P(w)$. From the convexity of $P(w)$, we conclude that
$\{t\ux + (1-t)u_\um \mid 0 \le t \le \d \} \subseteq P(w)$,
where $\d:=d/(d+km)$. Replacing $\d$ by the minimum of these values
for all generators of $S^o(w)$, we obtain that
$$
 \forall\, \ux \in P^o(w)\ \forall\, 0 \le t \le \d\
  t\ux + (1-t) u_\um \in P(w) .
$$
\qed

\end{document}